\begin{document}

\title{The Parameterized Complexity of Motion Planning for Snake-Like Robots}

\author{
Siddharth Gupta\thanks{Ben-Gurion University of the Negev, Israel. \texttt{siddhart@post.bgu.ac.il}}
\and Guy Sa'ar\thanks{Ben-Gurion University of the Negev, Israel. \texttt{saag@bgu.ac.il}}
\and Meirav Zehavi\thanks{Ben-Gurion University of the Negev, Israel. \texttt{meiravze@bgu.ac.il}}
}
\maketitle

\begin{abstract}
We study the parameterized complexity of a variant of the classic video game Snake that models real-world problems of motion planning. Given a snake-like robot with an initial position and a final position in an environment (modeled by a graph), our objective is to determine whether the robot can reach the final position from the initial position without intersecting itself. Naturally, this problem models a wide-variety of scenarios, ranging from the transportation of linked wagons towed by a locomotor at an airport or a supermarket to the movement of a group of agents that travel in an ``ant-like'' fashion and the construction of trains in amusement parks. Unfortunately, already on grid graphs, this problem is \textsf{PSPACE}-complete [Biasi and Ophelders, 2016]. Nevertheless, we prove that even on general graphs, the problem is solvable in time $k^{\OO(k)}|I|^{\OO(1)}$ where $k$ is the size of the snake, and $|I|$ is the input size. In particular, this shows that the problem is {\em fixed-parameter tractable (\FPT)}. Towards this, we show how to employ color-coding to sparsify the {\em configuration graph} of the problem to have size $k^{\OO(k)}|I|^{\OO(1)}$ rather than $|I|^{\OO(k)}$. We believe that our approach will find other applications in motion planning. Additionally, we show that the problem is unlikely to admit a polynomial kernel even on grid graphs, but it admits a treewidth-reduction procedure. To the best of our knowledge, the study of the parameterized complexity of motion planning problems (where the intermediate configurations of the motion are of importance) has so far been largely overlooked. Thus, our work is pioneering in this regard.
\end{abstract}

\thispagestyle{empty}
\newpage
\pagestyle{plain}
\setcounter{page}{1}
 
\section{Introduction}\label{sec:intro}

A basic {\em single-agent movement problem} can be modeled by an {\em agent} (representing a robot or a person) that has an initial {\em state} (also called {\em configuration}), a description of valid {\em transitions} between states, and a task to accomplish. Common tasks are to reach some desired position or geographical location while avoiding unwelcome (mobile or static) obstacles, collecting or distributing a set of items, or rearranging the environment to be of a specific form. The agent itself might have various features or restrictions, which are reflected in the definition of states and transitions. Arguably, given that we handle physical objects, a basic requirement is that the agent must never {\em intersect} itself as well as other objects. When several agents are present (in a multi-agent movement problem), the coordination between them also plays a major role~(see \cite{DBLP:conf/compgeom/DemaineFKSM18} and the references within).
Problems based on motion planning are ubiquitous in various aspects of modern life. In recent years, the study of such problems has gained increasing interest from both practical and theoretical points of views~\cite{DBLP:journals/ai/SchwartzS88,GALCERAN20131258,motionSurvey,DBLP:journals/tiv/PadenCYYF16,motionSurvey1}. Unfortunately, the perspective of parameterized complexity---a central paradigm to design algorithms for computationally hard problem---has been largely overlooked in this context. In this paper, we present a comprehensive picture of the parameterized complexity of a single-agent movement problem called the {\sc Snake Game} problem, whose formulation is inspired by a classic video game of the same name.

In the past decade, the study of the theory behind the computational complexity of puzzles (such as video games) has become very popular~\cite{DBLP:books/daglib/0023750,DBLP:journals/icga/KendallPS08,DBLP:journals/mst/Viglietta14}. Such puzzles are often based on motion planning problems that can model tasks to be performed by agents in real-life scenarios. Moreover, their formulations are frequently simple enough to provide a clean abstraction of basic issues in this regard, therefore making them attractive for laying foundations for general analysis. For example, a very long line of works analyzed the complexity of various push-block puzzles (see \cite{DBLP:conf/ciac/DemaineGL17} and the references within), where a box-shaped agent with the ability to push/pull other boxes should utilize its ability in order to reach one position from another. We remark that more often than not, studies of the theory behind the computational complexity of puzzles only assert the \NP-hardness or \textsf{PSPACE}-completeness of the puzzle at hand. 

The classic game Snake is among the most well-known video games that involve the motion of a single agent. The game dates back to 1978, and has enjoyed implementation across a wide range of platforms since then. Unlike most other video games, the popularity of Snake has hardly decreased despite its age---indeed, new versions of Snake still appear to this day. We study the parameterized complexity of a variant of Snake that was introduced by Biasi and Ophelders~\cite{DBLP:journals/tcs/BiasiO18}, which models real-world problems of motion planning for agents of a ``snake-like'' shape. Given a snake-like robot with an initial position and a final position in an environment (modeled by a graph), our objective is to determine whether the robot can reach the final position from the initial position without intersecting itself. Roughly speaking, the position of the robot is modeled by a simple ordered path in the graph, and one position $P$ is reachable (in one step) from another position $P'$ if the path $P$ is obtained from $P'$ by adding one vertex to the beginning of $P'$ and removing one vertex from its end. The (immobile) obstacles in the environment are implicitly encoded in the input graph---specifically, ``obstacle-free'' physical locations are represented by vertices, and edges indicate which locations are adjacent. Note that the graph might not be planar as in real-life scenarios the environment is often not merely a plane.

Nowadays, robots of a ``snake-like'' shape are of substantial interest--in particular, they are built and used in practice for medical operations~\cite{degani2006highly,mitReviewSnake,Berthet-Rayne2018} as well as various inspection and rescue missions on both land and water~\cite{PFOTZER2017123,1302522,LU2016101}. A snake-like shape and serpentine locomotion offer immediate advantages for such purposes; the restricted area of mobility also makes the requirement of the robot to avoid intersecting itself and other obstacles a highly non-trivial issue that is mandatory to take into account.
Moreover, the {\sc Snake Game} problem is a natural abstraction to model a wide-variety of other scenarios, which range from the transportation of linked wagons towed by a locomotor at an airport or a supermarket to the movement of a group of agents that travel in an ``ant-like'' fashion and the construction of trains in amusement parks. 

Biasi and Ophelders~\cite{DBLP:journals/tcs/BiasiO18} proved that the {\sc Snake Game} problem is \textsf{PSPACE}-complete even on {\em grid graphs} (see Section~\ref{sec:prelims}). Additionally, they considered the version aligned with the video game, where ``food'' items are located on vertices. Here, the task is not to reach a pre-specified position, but to collect all food items by visiting their vertices---when a food item is collected, the size of the snake increases by a fixed integer $g\geq 1$. They showed that this version is \NP-hard even on rectangular grid graphs without ``holes'', and \textsf{PSPACE}-complete even when there are only two food items, or the initial size of the snake is $1$.

\subsection{Our Contribution and Methods}

We present a comprehensive picture of the parameterized complexity of the {\sc Snake Game} problem parameterized by the size of the snake, $k$.\footnote{Definitions of basic notions in Parameterized Complexity can be found in Section \ref{sec:prelims}.} Arguably, the choice of this parameter is the most natural and sensible one because in real-life scenarios as those mentioned above, the size of the snake-like robot is likely to be substantially smaller than the size of the environment. To some extent, our paper can be considered as pioneering work in the study of the parameterized complexity of motion planning problems where intermediate configurations are of importance (which has so far been largely neglected, see the next subsection), and may lay the foundations for further research of this topic.

\medskip
\noindent{\bf FPT Algorithm.} Our contribution is threefold. Our main result is the proof that the {\sc Snake Game} problem is {\em fixed-parameter tractable (\FPT)} with respect to $k$. Specifically, we develop an algorithm that solves {\sc Snake Game} in time $k^{\OO(k)}\cdot n^{\OO(1)}$ where $n$ is the number of vertices in the input graph. We remark that our algorithm can also output the length of the shortest ``route'' from the initial position to the final position (if any such route exists) within the same time complexity. The design of our algorithm involves a novel application of the method of color-coding to sparsify the ($(k-1)$-th power of the) {\em configuration graph} of the problem. Roughly speaking, the configuration graph is the directed graph whose vertices represent the positions of the snake in the environment, and where there is an arc from one vertex $u$ to another vertex $v$ if the position represented by $v$ is reachable in one step from the position represented by $u$. Observe that the number of vertices of the configuration graph equals the number of (simple) ordered paths on $k$ vertices in the input graph, which can potentially be {\em huge}---for example, if the input graph is a clique, then there are ${n \choose k}k!$ configurations.

We first present a handy characterization of the reachability of one configuration from another in $t\geq 1$ steps, based on which we elucidate the structure of certain triplets of configurations. Then, we perform several iterations where we color the vertices of the input graph based on the method of color-coding~\cite{DBLP:journals/jacm/AlonYZ95}, but where order between some of the colors is of importance. Within each coloring iteration, we test for every pair of vertices in the input graph whether there exists a particular path on $k$ vertices between them---in which case we pick one such path. The collection of paths found throughout these iterations form the vertex set of our new configuration graph (in addition to the initial and final positions), while our characterization of reachability determines the arcs. In particular, this new configuration graph, unlike the original configuration graph, has only $k^{\OO(k)}\cdot n^{\OO(1)}$ vertices. We prove that this new configuration graph is a valid ``sparsification'' of the $(k-1)$-th power of the original configuration graph---specifically, for any $t\geq 1$, the initial position can reach the final position in the original configuration graph within $(k-1)t$ steps if and only if the initial position can reach the final position in the new configuration graph within $t$ steps. Clearly, this is insufficient because the initial position may reach the final position in the original configuration graph within a number of steps that is not a multiple of $(k-1)$---however, we find that this technicality can be overcome by adding, to the new configuration graph, a small number of new vertices as well as arcs ingoing from these new vertices to the vertex representing the final position.

\medskip
\noindent{\bf Kernelization.} Our second result is the proof that the {\sc Snake Game} problem is unlikely to admit a polynomial kernel even on grid graphs. For this purpose, we present a non-trivial {\em cross-composition} (defined in Section \ref{sec:prelims}) from {\sc Hamiltonian Cycle} on grid graphs to the {\sc Snake Game} problem. Our construction is inspired by the proof of \NP-hardness of the version of {\sc Snake Game} on grid graphs without holes in the presence of food, given by Biasi and Ophelders~\cite{DBLP:journals/tcs/BiasiO18}. Roughly speaking, given $t$ instances of {\sc Hamiltonian Cycle} on grid graphs, our construction of an instance of {\sc Snake Game} is as follows. We position the $t$ input grid graphs so that they are aligned to appear one after the other, and connect them as ``pendants'' from a long path $L$ placed just above them. The initial position of the snake is at the beginning of $L$, and its final position is at a short path that ``protrudes'' from the beginning of $L$. Further, the size of the snake is set to be $n$, the number of vertices in each of the input grid graphs. Intuitively, we show that the snake can reach the final position from the initial position if and only if the snake can enter and exit one of the input grid graphs. In particular, to reach the final position, the snake has to find a place where it can ``turn around'', which can only be (potentially) done inside one of the input grid graphs. Specifically, entering and exiting one of the input grid graphs, say, $G_i$, would imply that at some point of time, the snake must be fully inside $G_i$ and hence exhibit a Hamiltonian cycle within it.

\medskip
\noindent{\bf Treewidth-Reduction.} Our last result is a treewidth-reduction procedure for the {\sc Snake Game} problem. More precisely, we develop a polynomial-time algorithm that given an instance of {\sc Snake Game}, outputs an equivalent instance of {\sc Snake Game} where the treewidth of the graph is bounded by a polynomial in $k$. Our procedure is based on the irrelevant vertex technique~\cite{DBLP:journals/jct/RobertsonS95b}. First, we exploit the relatively recent breakthrough result by Chekuri and Chuzhoy~\cite{DBLP:journals/jacm/ChekuriC16} that states that, for any positive integer $t\in\mathbb{N}$, any graph whose treewidth is at least $d\cdot t^c$ (for some fixed constants $c$ and $d$)\footnote{Currently, the best known bound on $c$ is $10$, given by Chuzhoy in \cite{DBLP:conf/soda/ChuzhoyT19}.} has a $t\times t$-grid as a minor, and hence also a so called $t$-wall as a subgraph (see Section \ref{sec:treewidth}). We utilize this result to argue that if the treewidth of our input graph is too large, then it has a $ck$-wall as a subgraph (for some fixed constant $c$) such that no vertex of this $ck$-wall belongs to the initial or final positions of the snake. The main part of our proof is a re-routing argument that shows that in such a wall, we can arbitrarily choose any pair of adjacent vertices, contract the edge between them and thereby obtain an equivalent instance of the {\sc Snake Game} problem. Thus, as long as we do not yet have a graph of small treewidth at hand, we can efficiently find an edge to contract, and eventually obtain a graph of small treewidth.

\subsection{Related Works in Parameterized Complexity}

To the best of our knowledge, close to nothing is known on the parameterized complexity of problems of planning motion of agents where the motion plays an actual role. By this comment, we mean that the problem does not only have an initial state and a final state, and intermediate states (with transitions between them) are not defined/immaterial. When only an initial state and a final state are present, the problem is actually a {\em static} problem and the term movement refers to distances rather than motion. For example, a static version in this spirit of the {\sc Snake Game} problem (that neglects intermediate states and hence does not enforce the requirement that the snake should never intersect itself) would just ask whether the vertex on which lies the tail of the snake in the final position is reachable from the vertex on which lies the head of the snake in the initial position, which can be directly solved in linear time using BFS. We also remark that almost all problems studied from the viewpoint of Parameterized Complexity are \NP-complete rather than \textsf{PSPACE}-hard (with several notable exceptions such as~\cite{DBLP:conf/lics/PanV06,DBLP:journals/jcss/HaanS17}).

While there is a huge line of works that explore the classical complexity and approximability of problems of planning motion of agents (see the beginning of the introduction), the only work in Parameterized Complexity that we are aware of is by Cesati and Wereham \cite{conf/IEEE/Cesati995}. Roughly speaking, the input of the problem in \cite{conf/IEEE/Cesati995} consists of a robot and obstacles that are each represented by a set of polyhedrons. Specifically, the input includes a set $O$ of polyhedrons (the obstacles), as well as a set $P$ of polyhedrons (the robot) that are freely linked together at a set of linkage vertices $V$ such that $P$ has $k$ degrees of freedom of movement. The objective is to decide whether a given final position of the robot is reachable from a given initial position of the robot where the robot is allowed to intersect neither itself nor the obstacles. Reif~\cite{DBLP:conf/focs/Reif79} proved that this problem (in three-dimensional Euclidean space) is \textsf{PSPACE}-hard, and Cesati and Wereham \cite{conf/IEEE/Cesati995} adapted Reif's proof and showed that the problem is \textsf{W[SAT]}-hard when parameterized by $k$ and hence unlikely to be \FPT.

Lastly, when intermediate states are immaterial, it is less clear which problem is considered to be centered around motion rather than euclidean/graph distances and static constraints---for example, in such scenarios finding a shortest path or a path with specific restrictions in a given graph may be highly relevant to motion planning (see e.g.~\cite{DBLP:conf/icalp/EibenK18}), but is not centered around the motion itself in the sense above. In this context, it is noteworthy to mention the comprehensive work in Parameterized Complexity by Demaine, Hajiaghayi and Marx \cite{DBLP:journals/talg/DemaineHM14}. Roughly speaking, the authors addressed multi-agent problems on graphs where the agents have types, there is an initial position (being a single vertex) for each agent, and a desired configuration that all agents should form (e.g., every agent of type ``client'' should have an agent of type ``facility'' nearby). Under some restrictions (e.g., some agents may move only a short distance), the objective is to make the minimum amount of movement so that afterwards the agents will form the desired configuration. Clearly, movement is a central component of this problem, but in a static sense---the intermediate configurations of the agents are immaterial (while moving to form the desired configuration, the agents can be in any configuration that they wish to, such as being placed all together on the same vertex, or being as far apart as possible). Demaine, Hajiaghayi and Marx \cite{DBLP:journals/talg/DemaineHM14} presented, among other results, a dichotomy that concerns the fixed-parameter tractability of problems of this form parameterized by the number of agents (more precisely, of agents of a certain type) and another parameter related to the desired configuration.
\section{Preliminaries}\label{sec:prelims}
Let $\mathbb{N}_0=\mathbb{N}\cup \{ 0 \}$.
We denote the set $\mathbb{N}_0\times \mathbb{N}_0\times \mathbb{N}_0\times\mathbb{N}_0$ by $\mathbb{N}_0^4$. For any $t \in \mathbb{N}$, we denote the set $\{1,2, \ldots, t\}$ by $[t]$.

\paragraph{Graphs.} Given a graph $G$, let $V(G)$ and $E(G)$ denote its vertex set and edge set, respectively. For a path $P$, let the {\em size} and the {\em length} of $P$ denote the number of vertices and edges in $P$, respectively. 

Let $G$ be an undirected graph. For a vertex $v\in V(G)$, we denote the set of all the vertices adjacent to $v$ in $G$ by $N_G(v)$, i.e. $N_G(v)=\{u\in V(G)~|~\{u,v\}\in E(G)\}$. The {\em edge contraction operation} for an edge $\{ u,v\}\in E(G)$, is the addition of a new vertex $w$ such that $N_G(w) = N_G(u) \cup N_G(v)$ and the deletion of the edge $\{ u,v\}$ and the vertices $u$ and $v$ from $G$. For an edge $e\in E(G)$, we denote the graph obtained from $G$ by contracting $e$ by $G/e$. We denote the new vertex that was created by the contraction by $v_e$. The {\em edge subdivision operation} for an edge $\{ u,v\}\in E(G)$, is the deletion of the edge $\{ u,v\}$ from $G$ and the addition of two new edges $\{ u,w\}$ and $\{ w,v\}$ for a new vertex $w$. For an undirected graph $H$, we say that $G$ {\em can be obtained from $H$ by subdividing edges}, if $G$ can be derived from $H$ by a sequence of edge subdivision operations. A {\em Hamiltonian cycle} in $G$ is a cycle in $G$ that visits every vertex of $G$ exactly once. For other standard notations not explicitly defined here, we refer to the book \cite{Diestelbook}.

For our negative result on kernelization, we make use of the {\sc Hamiltonian Cycle} problem on grid graphs, defined as follows.

\begin{definition}[{\bf Grid Graph}]\label{def:Grid graph}
A {\em grid graph} is a finite undirected graph $G$ with $V(G)\subseteq \{ (i,j) | i,j\in \mathbb{N}_0 \}$ and $\{(i,j),(i',j')\}\in E(G)$ if and only if $|i-i'|+|j-j'|=1$.
\end{definition}

Given $k, \ell, r \in \mathbb{N}$, a {\em $(k \times \ell)$-$r$-grid} is a grid graph $G$ with $V(G) = \{(x,y) | r \leq x\leq r+k-1, r \leq y\leq r+\ell-1 \}$. Given a $(k \times \ell)$-$r$-grid $G$, the four {\em corner vertices} of $G$ are $(r,r), (r, r+\ell-1), (r+k-1, r)$ and $(r+k-1, r+\ell-1)$.
For any $k,\ell \in \mathbb{N}$ we denote $(k \times \ell)$-$1$-grid by $k\times \ell$-grid. 

Given a graph $G$, the objective of the {\sc Hamiltonian Cycle} problem is to decide whether there exists a Hamiltonian cycle in $G$. The {\sc Hamiltonian Cycle} problem is known to be \NP-complete even when restricted to grid graphs~\cite{DBLP:journals/jal/PapadimitriouV84}.

For our result on treewidth, we make use of the {\em treewidth} of a graph $G$, defined as follows.

\begin{definition}[{\bf Treewidth}]
A \emph{tree decomposition} of a graph $G$ is a tree $T$ whose nodes, called \emph{bags}, are labeled by subsets of vertices of $G$. For each vertex $v$ the bags containing $v$ must form a nonempty contiguous subtree of $T$, and for each edge $uv$ at least one bag must contain both $u$ and $v$. The \emph{width} of the decomposition is one less than the maximum cardinality of any bag, and the \emph{treewidth} $\tw(G)$ of $G$ is the minimum width of any of its tree decompositions. 
\end{definition}

\paragraph{Snake Game.} Towards the definition of {\sc Snake Game}, we begin by defining the notion of a configuration in this context.

\begin{definition}[{\bf $(G,k)$-Configuration}]\label{def:config}
Let $G$ be an undirected graph, and let $k\in \mathbb{N}$. A {\em $(G,k)$-configuration} is a tuple $(v_1,v_2,\ldots,v_k)$ where $v_i\in V(G)$ for every $1\leq i\leq k$, which satisfies the following conditions.
\begin{itemize}
\item For every $1\leq i\leq k-1$, we have that $\{v_i,v_{i+1}\}\in E(G)$.
\item For every $1\leq i<j\leq k$, we have that $v_i\neq v_j$.
\end{itemize}
\end{definition}

Given a configuration $\mathsf{conf}$, let $V(\mathsf{conf})$ denotes the set of its vertices. Intuitively, a $(G,k)$-configuration $(v_1,v_2,\ldots,v_k)$ is the sequence of vertices of a simple path on $k$ vertices in $G$ when traversed from one endpoint to the other; the path is termed a {\em snake}, and the vertices $v_1$ and $v_k$ are termed the {\em head} and {\em tail} of the snake, respectively. For the sake of brevity, whenever $G$ and $k$ are clear from the context, we refer to a $(G,k)$-configuration simply as a {\em configuration}. We proceed to define how a snake is permitted to ``move'' from one position to another---that is, we define a {\em transition} (in one step) from one configuration to another. 

\begin{definition}[{\bf $1$-Transition}]\label{def:MoveToConfInOne}
Let $G$ be an undirected graph, and let $k\in \mathbb{N}$. Let $\mathsf{conf}=(v_1,v_2,\ldots,v_k)$ and $\mathsf{conf'}=(v_1',v_2',\ldots,v_k')$ be any two configurations. We say that the pair {\em $(\mathsf{conf},\mathsf{conf'})$} is a {\em $1$-transition} if and only if the following conditions are satisfied.
\begin{itemize}
\item For every $1\leq i\leq k-1$, $v_1'\neq v_i$. 
\item For every $2\leq i\leq k$, $v_i'=v_{i-1}$.
\end{itemize} 
\end{definition}

We naturally extend the definition of a transition in one step to the definition of a transition in $\ell$ steps as follows.

\begin{definition}[{\bf $\ell$-Transition}]\label{def:MoveToConfInEll} 
Let $G$ be an undirected graph, and let $k, \ell \in \mathbb{N}$. Let $\mathsf{conf}=(v_1,v_2,\ldots,v_k)$ and $\mathsf{conf'}=(v_1',v_2',\ldots,v_k')$ be any two configurations. We say that the pair {\em $(\mathsf{conf},\mathsf{conf'})$} is an {\em $\ell$-transition} if there exists a tuple $(\mathsf{conf}_1=\mathsf{conf},\mathsf{conf}_2,\ldots,\mathsf{conf}_{\ell+1}=\mathsf{conf'})$ of $\ell+1$ configurations such that, for every $1\leq i\leq \ell$, the pair $(\mathsf{conf}_i,\mathsf{conf}_{i+1})$ is a $1$-transition. In that case, we also say that the tuple $(\mathsf{conf}_1=\mathsf{conf},\mathsf{conf}_2,\ldots,\mathsf{conf}_{\ell+1}=\mathsf{conf'})$ is an {\em $\ell$-transition}.  
\end{definition}

Before we define the {\sc Snake Game} problem formally, we need one more definition concerning the reachability of one configuration from another, based on the definition of a transition.

\begin{definition}[{\bf Reachability}]\label{def:MoveToconf} 
Let $G$ be an undirected graph, and let $k\in \mathbb{N}$. Let $\mathsf{conf}$ and $\mathsf{conf'}$ be any two configurations. 
We say that $\mathsf{conf}$ {\em can reach} $\mathsf{conf'}$ (alternatively, $\mathsf{conf'}$ {\em is reachable from} $\mathsf{conf}$) if the pair {\em $(\mathsf{conf},\mathsf{conf'})$} is an $\ell$-transition for some $\ell\in\mathbb{N}$.
\end{definition}

We are now ready to give the formal definition of the {\sc Snake Game} problem.

\begin{definition}[{\bf Snake Game}]\label{def:SnakeGame}
An instance of {\sc Snake Game} is quadruple $\mathbf{SG}=\langle G,k,\mathsf{init},\mathsf{fin}\rangle$ where $G$ is an undirected graph, $k\in \mathbb{N}$, and $\mathsf{init}$ and $\mathsf{fin}$ are two configurations. We say that $\mathbf{SG}$ is a \yes-instance if $\mathsf{init}$ can reach $\mathsf{fin}$; otherwise, we say that $\mathbf{SG}$ is a \no-instance. By {\em solving} an instance of {\sc Snake Game}, we mean that we correctly determine whether it is a \yes-instance or a \no-instance.
\end{definition}

We now define a new auxiliary graph, the {\em $\ell$-configuration graph}, which will be helpful throughout the paper.

\begin{definition}[{\bf $\ell$-Configuration Graph}]\label{def:confGraph}
Let $G$ be an undirected graph, and let $k, \ell \in \mathbb{N}$. The $\ell$-configuration graph is a directed graph with a vertex for every $(G,k)$-configuration and a directed edge from a vertex $\mathsf{conf}$ to another vertex $\mathsf{conf'}$ if the pair {\em $(\mathsf{conf},\mathsf{conf'})$} is an $\ell$-transition.
\end{definition}

We can also define reachability and feasibility of a ``solution'' (i.e. witness of a \yes-instance) for an instance of {\sc Snake Game} via the $1$-configuration graph by the following simple observation. 

\begin{observation}\label{obs:confGraph}
Let $G$ be an undirected graph, and let $k\in \mathbb{N}$. Let $\mathsf{conf}$ and $\mathsf{conf'}$ be any two configurations. Then $\mathsf{conf}$ {\em can reach} $\mathsf{conf'}$ if and only if there exists a path from $\mathsf{conf}$ to $\mathsf{conf'}$ in the $1$-configuration graph. In particular, an instance $\mathbf{SG}=\langle G,k,\mathsf{init},\mathsf{fin}\rangle$ of {\sc Snake Game} is a \yes-instance if and only if there exists a path from $\mathsf{init}$ to $\mathsf{fin}$ in the $1$-configuration graph. 
\end{observation}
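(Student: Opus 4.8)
The plan is to prove the observation by directly unfolding the definitions of reachability, of an $\ell$-transition, and of the $1$-configuration graph, and then invoking the elementary fact that in a directed graph one vertex is reachable from another by a directed walk if and only if it is reachable by a directed path. No clever idea is needed here; this is a bookkeeping statement that sets up the configuration-graph viewpoint used later.

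First I would recall that, by Definition~\ref{def:MoveToconf}, $\mathsf{conf}$ can reach $\mathsf{conf'}$ if and only if $(\mathsf{conf},\mathsf{conf'})$ is an $\ell$-transition for some $\ell\in\mathbb{N}$. By Definition~\ref{def:MoveToConfInEll}, the latter holds if and only if there exists a tuple $(\mathsf{conf}_1=\mathsf{conf},\mathsf{conf}_2,\ldots,\mathsf{conf}_{\ell+1}=\mathsf{conf'})$ of configurations in which every consecutive pair $(\mathsf{conf}_i,\mathsf{conf}_{i+1})$ is a $1$-transition. By Definition~\ref{def:confGraph}, a pair $(\mathsf{conf}_i,\mathsf{conf}_{i+1})$ is a $1$-transition precisely when there is a directed edge from $\mathsf{conf}_i$ to $\mathsf{conf}_{i+1}$ in the $1$-configuration graph. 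Hence such a tuple is exactly a directed walk of length $\ell$ from $\mathsf{conf}$ to $\mathsf{conf'}$ in the $1$-configuration graph, and so $\mathsf{conf}$ can reach $\mathsf{conf'}$ if and only if there is a directed walk from $\mathsf{conf}$ to $\mathsf{conf'}$ in that graph. It then remains to replace ``walk'' by ``path'': this is the standard observation that among all directed walks from $\mathsf{conf}$ to $\mathsf{conf'}$ a shortest one cannot repeat a vertex (otherwise the sub-walk between two occurrences of a repeated vertex could be excised, yielding a shorter walk), hence is a path, while the converse is trivial since every path is a walk.

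Finally, the ``in particular'' statement is obtained by applying the above with $\mathsf{conf}=\mathsf{init}$ and $\mathsf{conf'}=\mathsf{fin}$ together with Definition~\ref{def:SnakeGame}, which declares $\mathbf{SG}=\langle G,k,\mathsf{init},\mathsf{fin}\rangle$ a \yesinstance exactly when $\mathsf{init}$ can reach $\mathsf{fin}$. There is no genuine obstacle in this proof; the only point requiring a word of care is the walk-versus-path distinction, and the associated degenerate case $\mathsf{conf}=\mathsf{conf'}$, which is consistent with the convention $\mathbb{N}=\mathbb{N}_0\setminus\{0\}$ used for $\ell$ in Definition~\ref{def:MoveToconf} (so ``path'' is read as a nonempty walk), matching the fact that the $1$-configuration graph has no loops when $k\geq 2$ since a $1$-transition from a configuration to itself would force $v_i=v_{i-1}$.
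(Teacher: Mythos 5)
Your proof is correct and matches the paper's (implicit) treatment: the paper states this as an observation with no written proof, precisely because it follows by unfolding Definitions~\ref{def:MoveToconf}, \ref{def:MoveToConfInEll} and \ref{def:confGraph} and replacing a directed walk by a shortest (hence simple) path, exactly as you do. Your remark on the degenerate case $\mathsf{conf}=\mathsf{conf'}$ (where a length-$0$ path exists but reachability demands $\ell\geq 1$) is a reasonable point of care that the paper glosses over, and your resolution via the nonempty-walk reading is sound.
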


\paragraph{Parameterized Complexity.} A problem $\Pi$ is a {\em parameterized} problem if each problem instance of $\Pi$ is associated with a {\em parameter} $k$. For simplicity, we denote a problem instance of a parameterized problem $\Pi$ as a pair $(I,k)$ where the second argument is the parameter $k$ associated with $I$. The main objective of the framework of Parameterized Complexity is to confine the combinatorial explosion in the running time of an algorithm for an \NP-hard parameterized problem $\Pi$ to depend only on $k$. Formally, we say that $\Pi$ is {\em fixed-parameter tractable (\FPT)} if any instance $(I, k)$ of $\Pi$ is solvable in time $f(k)\cdot |I|^{\OO(1)}$, where $f$ is an arbitrary computable function of $k$. We remark that Parameterized Complexity also provides methods to show that a parameterized problem is unlikely to be \FPT. The main technique is the one of parameterized reductions analogous to those employed in Classical Complexity. Here, the concept of \WO-hardness replaces the one of \NP-hardness.

A companion notion to that of fixed-parameter tractability is the one of a polynomial kernel. Formally, a parameterized problem $\Pi$ is said to admit a {\em polynomial compression} if there exists a (not necessarily parameterized) problem $\Pi'$ and a polynomial-time algorithm that given an instance $(I,k)$ of $\Pi$, outputs an equivalent instance $I'$ of $\Pi'$ (that is, $(I,k)$ is a \yes-instance of $\Pi$ if and only if $I'$ is a \yes-instance of $\Pi'$) such that $|I'|\leq p(k)$ where $p$ is some polynomial that depends only on $k$. In case $\Pi'=\Pi$, we further say that $\Pi$ admits a {\em polynomial kernel}.
For more information on Parameterized Complexity, we refer the reader to book such as \cite{DBLP:series/txcs/DowneyF13,DBLP:books/sp/CyganFKLMPPS15,fomin2018kernelization}.

\paragraph{Non-Existence of a Polynomial Compression.} Our proof of the ``unlikely existence'' of a polynomial compression (and hence also of a polynomial kernel) for {\sc Snake Game}, even when restricted to grid graphs, relies on the well-known notion of cross-composition. We present this notion in a form sufficient for our proof.

\begin{definition} [{\bf Cross-Composition}]\label{def:cross-comp} A (not parameterized) problem $\Pi$ {\em cross-composes} into a parameterized problem $\Pi'$ if there exists a polynomial-time algorithm, called a {\em cross-composition}, that given instances $I_1,I_2,\ldots,I_t$ of $\Pi$ for some $t \in \mathbb{N}$ that are of the same size $s$ for some $s\in\mathbb{N}$, outputs an instance $(I,k)$ of $\Pi'$ such that the following conditions are satisfied.
\begin{itemize}
\item $k\leq p(s)$ for some polynomial $p$ in $s$.
\item $(I,k)$ is a \yes-instance of $\Pi'$ if and only if at least one of the instances $I_1,I_2,\ldots,I_t$ is a \yes-instance of $\Pi$.
\end{itemize}
\end{definition}

\begin{proposition}[\cite{DBLP:journals/jcss/BodlaenderDFH09,DBLP:journals/siamdm/BodlaenderJK14}]\label{prop:noKern}
Let $\Pi$ be an \NP-hard (not parameterized) problem that cross-composes into a parameterized problem $\Pi'$. Then, $\Pi'$ does not admit a polynomial compression, unless \NP$\subseteq $\coNPpoly. 
\end{proposition}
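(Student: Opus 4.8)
\medskip
\noindent The plan is to derive the proposition from the Fortnow--Santhanam distillation theorem, in the form packaged in the parameterized‑complexity literature~\cite{DBLP:journals/jcss/BodlaenderDFH09,DBLP:journals/siamdm/BodlaenderJK14}. Recall that an \emph{OR-distillation} of a language $L$ into a language $R$ is a polynomial‑time algorithm that, given strings $x_1,\ldots,x_t$ each of size at most $s$, returns a single string $y$ with $|y|\leq \mathrm{poly}(s)$ (the bound being independent of $t$) such that $y\in R$ if and only if $x_i\in L$ for some $i\in[t]$. The fact I would invoke as a black box is: if an \NP-hard language admits an OR-distillation into \emph{any} language $R$ (with no restriction on $R$), then $\NP\subseteq\coNPpoly$ (equivalently $\coNP\subseteq\NP/\mathrm{poly}$, and the polynomial hierarchy collapses). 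Granting this, it suffices to show that if, in addition to the hypotheses of the proposition, $\Pi'$ admitted a polynomial compression, then the \NP-hard problem $\Pi$ would admit an OR-distillation.

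So suppose $\Pi'$ admits a polynomial compression into some problem $\mathcal{Q}$, witnessed by a polynomial‑time algorithm $\mathcal{A}$ mapping an instance $(I,k)$ of $\Pi'$ to an equivalent instance $\mathcal{A}(I,k)$ of $\mathcal{Q}$ with $|\mathcal{A}(I,k)|\leq q(k)$ for a fixed polynomial $q$; let $\mathcal{C}$ be the cross‑composition from $\Pi$ into $\Pi'$ guaranteed by the hypothesis, with parameter bound $p$. I would construct the OR-distillation of $\Pi$ as follows. Given instances $x_1,\ldots,x_t$ of $\Pi$, each of size at most $s$, first \emph{bucket} them by size; since all their sizes lie in $\{0,1,\ldots,s\}$, this yields at most $s+1$ nonempty classes $X_{\sigma_1},\ldots,X_{\sigma_m}$, each consisting of equal‑size instances. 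To each class $X_{\sigma_j}$ I apply $\mathcal{C}$ (legitimate, since its instances share a common size $\sigma_j$) to obtain an instance $(I_j,k_j)$ of $\Pi'$ with $k_j\leq p(\sigma_j)\leq p(s)$ that is a \yes-instance if and only if some member of $X_{\sigma_j}$ is a \yes-instance of $\Pi$; then I apply $\mathcal{A}$ to get $J_j:=\mathcal{A}(I_j,k_j)$, an instance of $\mathcal{Q}$ of size at most $q(p(s))$ that is a \yes-instance of $\mathcal{Q}$ if and only if some member of $X_{\sigma_j}$ is a \yes-instance of $\Pi$. The output is the tuple $y:=\langle J_1,\ldots,J_m\rangle$, and the target language is $R:=\{\langle J_1,\ldots,J_m\rangle : m\in\mathbb{N},\ J_i\text{ is a \yes-instance of }\mathcal{Q}\text{ for some }i\in[m]\}$, which is a fixed language depending only on $\mathcal{Q}$.

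It then remains to verify the two defining properties. \emph{Output size}: there are at most $s+1$ parts, each of size at most $q(p(s))$, so $|y|\leq (s+1)\cdot q(p(s))+\OO(s)$, which is polynomial in $s$ and independent of $t$, as required. \emph{Correctness}: chaining the equivalences above, $y\in R$ iff some $J_j$ is a \yes-instance of $\mathcal{Q}$ iff some $(I_j,k_j)$ is a \yes-instance of $\Pi'$ iff some class $X_{\sigma_j}$ contains a \yes-instance of $\Pi$ iff some $x_i$ is a \yes-instance of $\Pi$. The procedure runs in polynomial time: bucketing is trivial, and we invoke the polynomial‑time algorithms $\mathcal{C}$ and $\mathcal{A}$ at most $s+1$ times each. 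Thus $\Pi$, which is \NP-hard, admits an OR-distillation, and the black‑box theorem yields $\NP\subseteq\coNPpoly$. Taking the contrapositive gives the proposition: unless $\NP\subseteq\coNPpoly$, the problem $\Pi'$ admits no polynomial compression, and in particular no polynomial kernel.

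The genuinely hard part is the black‑box theorem itself (Fortnow--Santhanam), which I would cite rather than reprove; its proof is a counting‑plus‑advice argument. Roughly: an OR-distillation restricted to length‑$s$ inputs can emit only $2^{\mathrm{poly}(s)}$ distinct outputs, yet it must separate every $t$-tuple of length‑$s$ strings that contains a \yes-instance of $L$ from every $t$-tuple consisting only of \no-instances; a pigeonhole analysis then shows that a polynomially bounded, carefully chosen family of \no-instances can serve as advice with which a nondeterministic machine decides the complement of $L$, which — as $L$ is \NP-hard — forces the stated collapse. Everything else above is bookkeeping; the only subtlety in the reduction is that Definition~\ref{def:cross-comp} supplies $\mathcal{C}$ only on \emph{equal‑size} instances whereas a distillation must accept instances of differing sizes, which is exactly why the size‑bucketing step (and hence a polynomial, rather than constant, number of combined pieces) is needed.
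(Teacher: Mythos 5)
Your proposal is correct: the paper itself does not prove Proposition~\ref{prop:noKern} but imports it from the cited works, and your argument (bucket the equal-size instances, apply the cross-composition and then the compression to each bucket, concatenate into an instance of the OR-language over $\mathcal{Q}$, and invoke the Fortnow--Santhanam theorem that an OR-distillation of an \NP-hard language implies \NP$\subseteq$\coNPpoly) is exactly the standard proof underlying those references, with the bookkeeping (size bound independent of $t$, polynomial running time, chain of equivalences) handled correctly. The only heavy step is the distillation lower bound itself, which you rightly treat as a citable black box, just as the sources do.
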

\section{FPT Algorithm on General Graphs}\label{sec:fpt}

In this section, we describe an \FPT\ algorithm for the {\sc Snake Game} problem, which exploits the notion of a $(k-1)$-configuration graph and the method of color-coding in order to sparsify it. We begin by giving two conditions that identify when a pair of configurations is an $\ell$-transition. These conditions will be useful throughout the section, and are stated in the following lemma. In this context, it may be helpful to refer to Figure~\ref{fig:lTransition}, which shows two configurations $\mathsf{conf_s}$ and $\mathsf{conf_e}$ of length $6$ where $(\mathsf{conf_s}, \mathsf{conf_e})$ is a $3$-transition.

\begin{lemma}\label{lem:confGraph}
Let $G$ be an undirected graph, and let $k, \ell \in \mathbb{N}$ such that $\ell \leq k$. Let $\mathsf{conf}=(v_1,v_2,\ldots,v_k)$ and $\mathsf{conf}'=(v_1',v_2',\ldots,v_k')$ be any two configurations. Then, the pair {\em $(\mathsf{conf},\mathsf{conf}')$} is an {\em $\ell$-transition} if and only if the following conditions are satisfied.
\begin{itemize}
	\item For every $1 \leq i \leq \ell$, $v_i' \notin \{v_1,v_2,\ldots,v_{(k+i)-(\ell+1)}\}$.
	\item For every $\ell+1 \leq i \leq k$, $v_i' = v_{i-\ell}$.
\end{itemize}
\end{lemma}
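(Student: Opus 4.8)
The statement is an "if and only if" about $\ell$-transitions, so the natural route is induction on $\ell$. For $\ell = 1$, the two displayed conditions must reduce exactly to Definition~\ref{def:MoveToConfInOne}: the first condition becomes "$v_1' \notin \{v_1,\ldots,v_{k-1}\}$" (since $(k+1)-(1+1) = k-1$), and the second becomes "$v_i' = v_{i-1}$ for every $2 \leq i \leq k$", which is precisely a $1$-transition. So the base case is just unwinding the indexing. For the inductive step I would assume the equivalence holds for $\ell-1$ (with $\ell - 1 \leq k$, which follows from $\ell \leq k$), and prove it for $\ell$.

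**The forward direction ($\Rightarrow$).** Suppose $(\mathsf{conf},\mathsf{conf}')$ is an $\ell$-transition, witnessed by a tuple $(\mathsf{conf} = \mathsf{conf}_1, \mathsf{conf}_2, \ldots, \mathsf{conf}_{\ell+1} = \mathsf{conf}')$. Then $(\mathsf{conf}_1, \mathsf{conf}_2)$ is a $1$-transition and $(\mathsf{conf}_2, \mathsf{conf}_{\ell+1})$ is an $(\ell-1)$-transition. Write $\mathsf{conf}_2 = (w_1, w_2, \ldots, w_k)$. From the $1$-transition we get $w_i = v_{i-1}$ for $2 \leq i \leq k$ and $w_1 \notin \{v_1,\ldots,v_{k-1}\}$; note $w_1$ is a fresh vertex. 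From the induction hypothesis applied to $(\mathsf{conf}_2, \mathsf{conf}')$ we get: $v_i' = w_{i-(\ell-1)}$ for $\ell \leq i \leq k$, and $v_i' \notin \{w_1,\ldots,w_{(k+i)-\ell}\}$ for $1 \leq i \leq \ell - 1$. Now I would substitute $w_j = v_{j-1}$ (valid when $j \geq 2$) and $w_1$ = the fresh head vertex, and carefully re-index. For $\ell+1 \leq i \leq k$ we have $i - (\ell-1) \geq 2$, so $v_i' = w_{i-(\ell-1)} = v_{i-\ell}$, giving the second displayed condition. For $i = \ell$ we get $v_\ell' = w_1$, the fresh vertex, so in particular $v_\ell' \notin \{v_1,\ldots,v_{k-1}\}$; and since $(k+\ell)-(\ell+1) = k-1$, this matches the first condition for $i = \ell$. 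For $1 \leq i \leq \ell-1$, the set $\{w_1,\ldots,w_{(k+i)-\ell}\}$ equals $\{w_1\} \cup \{v_1,\ldots,v_{(k+i)-\ell-1}\} = \{w_1\} \cup \{v_1,\ldots,v_{(k+i)-(\ell+1)}\}$, so $v_i' \notin \{v_1,\ldots,v_{(k+i)-(\ell+1)}\}$, matching the first condition. This closes the forward direction.

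**The backward direction ($\Leftarrow$).** This is the more delicate part: given $\mathsf{conf}$ and $\mathsf{conf}'$ satisfying the two conditions, I must exhibit an intermediate configuration $\mathsf{conf}_2$ that is a legitimate configuration (a simple path in $G$), is reachable from $\mathsf{conf}$ in one step, and reaches $\mathsf{conf}'$ in $\ell - 1$ steps. The natural candidate is $\mathsf{conf}_2 := (v_\ell', v_1, v_2, \ldots, v_{k-1})$ — i.e., prepend $v_\ell'$ to $\mathsf{conf}$ and drop the tail $v_k$. I need to check: (i) it is a valid configuration — adjacency $\{v_\ell', v_1\} \in E(G)$ follows because $\mathsf{conf}'$ is a configuration so $\{v_\ell', v_{\ell+1}'\} \in E(G)$ and $v_{\ell+1}' = v_1$ by the second condition (this needs $\ell+1 \leq k$; if $\ell = k$ the argument degenerates and must be handled, but then the second condition is vacuous and one checks directly); distinctness of $v_\ell'$ from $v_1,\ldots,v_{k-1}$ is the first condition at $i = \ell$; (ii) $(\mathsf{conf}, \mathsf{conf}_2)$ is a $1$-transition — immediate from the construction and (i); (iii) $(\mathsf{conf}_2, \mathsf{conf}')$ is an $(\ell-1)$-transition — verify the two conditions of the lemma for $\mathsf{conf}_2$ with parameter $\ell - 1$ and apply the induction hypothesis. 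For (iii) one re-runs the same index bookkeeping in reverse: writing $\mathsf{conf}_2 = (w_1,\ldots,w_k)$ with $w_1 = v_\ell'$ and $w_j = v_{j-1}$ for $j \geq 2$, check $v_i' = w_{i - (\ell-1)}$ for $\ell \leq i \leq k$ and the non-membership for $1 \leq i \leq \ell - 1$, using the two given conditions for $\mathsf{conf}'$.

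**Main obstacle.** The whole proof is essentially an exercise in keeping the index arithmetic straight — the substitution $(k+i)-(\ell+1)$ versus $(k+i)-\ell$ when passing between $\ell$ and $\ell-1$, and correctly tracking how the fresh head vertex inserted at each step shifts the path. I expect the only real subtlety is the boundary case $\ell = k$, where some of the "for $\ell+1 \leq i \leq k$" ranges are empty and adjacency facts that I'd like to derive from the second condition are unavailable; I would treat that case separately or simply note that the conditions then assert $v_i' \notin \{v_1,\ldots,v_{i-1}\}$ for all $i$, i.e. $\mathsf{conf}'$ is an arbitrary configuration vertex-disjoint-prefix-wise from $\mathsf{conf}$, and check the $k$-transition directly by the analogous prepend construction. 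I would also remark that the hypothesis $\ell \leq k$ is exactly what guarantees the induction stays within range and the intermediate configurations remain valid simple paths.
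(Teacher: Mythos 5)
Your proof takes essentially the same route as the paper's: induction on $\ell$, with the base case read off from Definition~\ref{def:MoveToConfInOne}. The only structural difference is that you peel the \emph{first} step off the $\ell$-transition (a $1$-transition followed by an $(\ell-1)$-transition, with intermediate configuration $(v_\ell',v_1,\ldots,v_{k-1})$), whereas the paper peels off the \emph{last} step (an $(\ell-1)$-transition followed by a $1$-transition, with intermediate configuration $(v_2',\ldots,v_k',v_{k-\ell+1})$); these are mirror images of one another, and your index bookkeeping in both directions is correct for $\ell\le k-1$.

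The one real issue is the boundary case $\ell=k$ that you yourself flag: your suggestion to ``check the $k$-transition directly by the analogous prepend construction'' cannot work, because the intermediate tuple $(v_k',v_1,\ldots,v_{k-1})$ is a configuration only if $\{v_k',v_1\}\in E(G)$, and for $\ell=k$ the stated conditions (which reduce to $v_i'\notin\{v_1,\ldots,v_{i-1}\}$ for all $i$) do not imply this adjacency. In fact the backward implication is false at $\ell=k$: with $k=2$ and $G$ consisting of two disjoint edges $\{a,b\}$ and $\{c,d\}$, the configurations $(b,a)$ and $(d,c)$ satisfy both conditions but do not form a $2$-transition, since the only intermediate configuration would have to be $(c,b)$, which is not a path in $G$. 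This is not a defect of your plan relative to the paper: the paper's own inductive step has the same unacknowledged gap (its $\mathsf{conf}''$ need not be a configuration when $t=k$), the forward implication does hold for $\ell=k$, and the lemma is only ever invoked with $\ell\le k-1$ (Corollary~\ref{cor:(k-1)Transition}, Lemma~\ref{lem:orderedColoring}), where both your argument and the paper's are sound.
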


\begin{figure}
	\includegraphics[width=\textwidth]{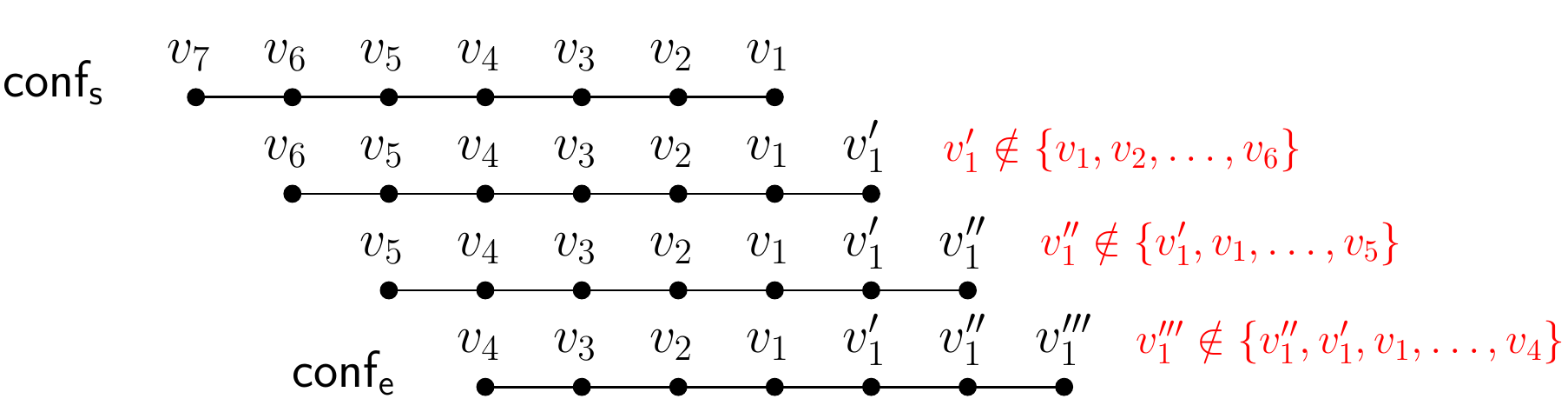}
	\caption{An example of a $3$-transition $(\mathsf{conf_s}, \mathsf{conf_e})$ where $k = 7$.}
	\label{fig:lTransition}
\end{figure}

\begin{proof}
We prove the statement by induction on $\ell$.

\smallskip
\noindent{\bf Base case $(\ell=1)$.} When $\ell=1$, the conditions are as follows:
\begin{itemize}
	\item $v_1' \notin \{v_1,v_2,\ldots,v_{k-1}\}$.
	\item For every $2 \leq i \leq k$, $v_i' = v_{i-1}$.
\end{itemize}
By the definition of a $1$-transition, two configurations form a $1$-transition if and only if the above conditions are true, so the lemma is true for $\ell=1$.

\smallskip
\noindent{\bf Inductive hypothesis.} Suppose that the lemma is true for $\ell=t-1 \geq 1$.

\smallskip
\noindent{\bf Inductive step.} We need to prove that the lemma is true for $\ell=t \geq 2$.\\ 
$(\Rightarrow)$ Let $\mathsf{conf}=(v_1,v_2,\ldots,v_k)$ and $\mathsf{conf}'=(v_1',v_2',\ldots,v_k')$ be two configurations such that the pair {\em $(\mathsf{conf},\mathsf{conf}')$} is a $t$-transition. Then, by Definition~\ref{def:MoveToConfInEll}, there exists a tuple of configurations $(\mathsf{conf}_1=\mathsf{conf},\mathsf{conf}_2,\ldots,\mathsf{conf}_t,\mathsf{conf}_{t+1}=\mathsf{conf}')$ such that for every $1\leq i\leq t$, the pair $(\mathsf{conf}_i,\mathsf{conf}_{i+1})$ is a $1$-transition. 
It follows that $(\mathsf{conf},\mathsf{conf}_t)$ is a $(t-1)$-transition. Let $\mathsf{conf}_t=(w_1,w_2,\ldots,w_k)$. Then, by inductive hypothesis, the following conditions are true:
\begin{itemize}
	\item For every $1 \leq i \leq t-1$, $w_i \notin \{v_1,v_2,\ldots,v_{(k+i)-t}\}$.
	\item For every $t \leq i \leq k$, $w_i = v_{i-t+1}$.
\end{itemize}
Also, $(\mathsf{conf}_t,\mathsf{conf}')$ is a $1$-transition so by Definition~\ref{def:MoveToConfInOne}, the following conditions are true:
\begin{itemize}
	\item $v_1' \notin \{w_1,w_2,\ldots,w_{k-1}\}$.
	\item For every $2 \leq i \leq k$, $v_i' = w_{i-1}$.
\end{itemize}
Combining the above conditions for $(\mathsf{conf},\mathsf{conf}_t)$ and $(\mathsf{conf}_t,\mathsf{conf}')$, we get:
\begin{itemize}
	\item For every $1 \leq i \leq t$, $v_i' \notin \{v_1,v_2,\ldots,v_{(k+i)-(t+1)}\}$.
	\item For every $t+1 \leq i \leq k$, $v_i' = v_{i-t}$.
\end{itemize}

\noindent$(\Leftarrow)$ Let $\mathsf{conf}=(v_1,v_2,\ldots,v_k)$ and $\mathsf{conf}'=(v_1',v_2',\ldots,v_k')$ be two configurations such that the following conditions are satisfied: 
\begin{itemize}
	\item For every $1 \leq i \leq t$, $v_i' \notin \{v_1,v_2,\ldots,v_{(k+i)-(t+1)}\}$.
	\item For every $t+1 \leq i \leq k$, $v_i' = v_{i-t}$.
\end{itemize}
Let $\mathsf{conf}''=(w_1,w_2,\ldots,w_k)$ be a configuration defined as follows:
\begin{itemize}
	\item For every $1 \leq i \leq k-1$, $w_i = v_{i+1}'$.
	\item $w_k = v_{k-t+1}$.
\end{itemize}
From Definition~\ref{def:config}, we know that for every $2 \leq i \leq k$, $v_1' \neq v_i'$. So, from the definition of $\mathsf{conf}''$, we get the following conditions for the pair $(\mathsf{conf}''$, $\mathsf{conf}')$:
\begin{itemize}
	\item $v_1' \notin \{w_1,w_2,\ldots,w_{k-1}\}$.
	\item For every $2\leq i\leq k$, $v_i' = w_{i-1}$.
\end{itemize}
By Definition~\ref{def:MoveToConfInOne}, $(\mathsf{conf}'', \mathsf{conf}')$ is a $1$-transition. Also, by the conditions for $(\mathsf{conf},\mathsf{conf}')$ and the definition of $\mathsf{conf}''$, we get the following conditions for the pair $(\mathsf{conf}$, $\mathsf{conf}'')$:
\begin{itemize}
	\item For every $1 \leq i \leq t-1$, $w_i \notin \{v_1,v_2,\ldots,v_{(k+i)-t}\}$.
	\item For every $t \leq i \leq k$, $w_i = v_{i-t+1}$.
\end{itemize}
By inductive hypothesis, $(\mathsf{conf}, \mathsf{conf}'')$ is a $(t-1)$-transition. Moreover, we already know that $(\mathsf{conf}'',\mathsf{conf}')$ is a $1$-transition. So, by Definition~\ref{def:MoveToConfInEll}, we get that $(\mathsf{conf}, \mathsf{conf}')$ is a $t$-transition.
\end{proof}

The following corollary about $(k-1)$-transitions follows directly from the above lemma.

\begin{corollary}\label{cor:(k-1)Transition}
Let $G$ be an undirected graph, and $k\in \mathbb{N}$. Let $\mathsf{conf}=(v_1,v_2,\ldots,v_k)$ and $\mathsf{conf}'=(v_1',v_2',\ldots,v_k')$ be any two configurations. Then, the pair {\em $(\mathsf{conf},\mathsf{conf}')$} is a {\em $(k-1)$-transition} if and only if the following conditions are satisfied.
\begin{itemize}
	\item For every $1 \leq i \leq k-1$, $v_i' \notin \{v_1,v_2,\ldots,v_i\}$.
	\item $v_k' = v_1$.
\end{itemize}
\end{corollary}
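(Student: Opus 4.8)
The plan is simply to instantiate Lemma~\ref{lem:confGraph} with $\ell = k-1$, observing first that the hypothesis $\ell \leq k$ of that lemma is satisfied since $k-1 \leq k$. Thus $(\mathsf{conf},\mathsf{conf}')$ is a $(k-1)$-transition if and only if the two displayed conditions of Lemma~\ref{lem:confGraph} hold with $\ell$ replaced by $k-1$, and it only remains to simplify those conditions into the form claimed in the corollary.

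For the first condition, the lemma states that for every $1 \leq i \leq \ell$, we have $v_i' \notin \{v_1,v_2,\ldots,v_{(k+i)-(\ell+1)}\}$. Substituting $\ell = k-1$, the index of the last forbidden vertex becomes $(k+i)-((k-1)+1) = (k+i) - k = i$. Hence the condition reads: for every $1 \leq i \leq k-1$, $v_i' \notin \{v_1,v_2,\ldots,v_i\}$, which is exactly the first bullet of the corollary. For the second condition, the lemma states that for every $\ell+1 \leq i \leq k$, we have $v_i' = v_{i-\ell}$. Substituting $\ell = k-1$, the range $\ell+1 \leq i \leq k$ collapses to the single value $i = k$, and the equality becomes $v_k' = v_{k-(k-1)} = v_1$, which is exactly the second bullet of the corollary.

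There is no real obstacle here; the only thing to be mildly careful about is the arithmetic in the index simplification and the fact that the second range degenerates to a single index (so one should note $k-1 \geq 1$, i.e. $k \geq 2$, is implicitly needed for the first bullet to be non-vacuous, and for $k=1$ the statement is trivially about a single fixed configuration). Otherwise the corollary follows immediately by substitution.

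\begin{proof}
Apply Lemma~\ref{lem:confGraph} with $\ell = k-1$; this is permissible since $k - 1 \leq k$. We simplify the two resulting conditions. For the first, for every $1 \leq i \leq k-1$ the set $\{v_1,\ldots,v_{(k+i)-(\ell+1)}\}$ becomes $\{v_1,\ldots,v_{(k+i)-k}\} = \{v_1,\ldots,v_i\}$, so the condition is that $v_i' \notin \{v_1,\ldots,v_i\}$ for every $1 \leq i \leq k-1$. For the second, the range $\ell+1 \leq i \leq k$ is just $i = k$, and $v_k' = v_{k-\ell} = v_1$. Thus $(\mathsf{conf},\mathsf{conf}')$ is a $(k-1)$-transition if and only if both of these conditions hold, as claimed.
\end{proof}
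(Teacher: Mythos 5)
Your proof is correct and matches the paper's approach exactly: the paper states that the corollary ``follows directly'' from Lemma~\ref{lem:confGraph}, and your substitution of $\ell = k-1$ with the index simplification $(k+i)-(\ell+1)=i$ and the collapse of the second range to $i=k$ is precisely that direct derivation. Your side remark about $k\geq 2$ being implicitly needed (since $\ell\in\mathbb{N}$ in the lemma) is a fair observation that the paper glosses over.
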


In turn, the following observation about testing whether a pair of configurations is a $(k-1)$-transition follows directly from the above corollary.

\begin{observation}\label{obs:testTransition}
Let $G$ be an undirected graph and $k\in \mathbb{N}$. Let $\mathsf{conf}$ and $\mathsf{conf}'$ be any two configurations. Then, we can test whether $(\mathsf{conf}, \mathsf{conf}')$ is a $(k-1)$-transition in time $\OO(k^2)$ by checking the conditions given in Corollary~\ref{cor:(k-1)Transition} for every vertex of $\mathsf{conf}$ and $\mathsf{conf}'$.
\end{observation}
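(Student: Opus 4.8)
The plan is to obtain this observation as an immediate consequence of Corollary~\ref{cor:(k-1)Transition}, which already reduces the question ``is $(\mathsf{conf},\mathsf{conf}')$ a $(k-1)$-transition?'' to checking two explicit conditions on the vertex tuples $\mathsf{conf}=(v_1,\ldots,v_k)$ and $\mathsf{conf}'=(v_1',\ldots,v_k')$; all that remains is to bound the time needed to verify those conditions. \textbf{Step 1: restate the test.} By Corollary~\ref{cor:(k-1)Transition}, the pair is a $(k-1)$-transition if and only if $v_k'=v_1$ and, for every $1\leq i\leq k-1$, $v_i'\notin\{v_1,v_2,\ldots,v_i\}$. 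The algorithm simply evaluates both of these, reporting ``yes'' precisely when all checks succeed; correctness is then immediate from the corollary.

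\textbf{Step 2: bound the running time.} The equality $v_k'=v_1$ is a single comparison, taking $\OO(1)$ time. For the membership conditions, for each fixed $i$ we scan the prefix $v_1,\ldots,v_i$ of $\mathsf{conf}$ and compare each of its entries with $v_i'$, which costs $\OO(i)$ time; summing over $i=1,\ldots,k-1$ yields $\sum_{i=1}^{k-1}\OO(i)=\OO(k^2)$. Hence the whole test runs in $\OO(k^2)$ time, and moreover it reads only the entries of $\mathsf{conf}$ and $\mathsf{conf}'$, matching the phrasing ``by checking the conditions $\ldots$ for every vertex of $\mathsf{conf}$ and $\mathsf{conf}'$''.

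I do not expect any real obstacle here: the statement is essentially a complexity-bookkeeping corollary of the structural characterization proved just above. The only point that warrants a sentence is the underlying model of computation---specifically, that a comparison between two vertex names costs $\OO(1)$ (or is at worst absorbed into polylogarithmic factors, which I would mention only if the paper is being careful about this elsewhere). One could also note in passing that the quadratic bound is not tight---for instance, maintaining the set $\{v_1,\ldots,v_i\}$ incrementally in a hash table gives $\OO(k)$ expected time---but since $\OO(k^2)$ is all that the later arguments require, I would keep the naive analysis.
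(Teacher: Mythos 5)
Your proposal is correct and matches the paper's intent exactly: the paper gives no explicit proof, treating the observation as an immediate consequence of Corollary~\ref{cor:(k-1)Transition}, and the naive prefix-scan analysis summing to $\sum_{i=1}^{k-1}\OO(i)=\OO(k^2)$ is precisely the bookkeeping the authors have in mind. Nothing further is needed.
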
 

Observe that the size of the $(k-1)$-configuration graph is potentially huge. We overcome this difficulty by utilizing the technique of color-coding~\cite{DBLP:journals/jacm/AlonYZ95} to sparsify the configuration graph. We need the following definitions to describe the technique.

\begin{definition}[{\bf Splitter}]\label{def:Splitter}
Let $n, m, t \in \mathbb{N}$ such that $m \leq t$. An {\em $(n,m,t)$-splitter} $\mathcal{F}$ is a family of functions from $[n]$ to $[t]$ such that for every set $W \subseteq [n]$ of size at most $m$, there exists a function $f \in \mathcal{F}$ that is injective on $W$.
\end{definition}

\begin{definition}[{\bf Permuter}]\label{def:Permuter}
Let $n, t \in \mathbb{N}$ and $S$ be a set of size $n$. An {\em $(S,t)$-permuter} $\mathcal{G}$ is a family of functions from $S$ to $[t]$ such that for every ordered set $W \subseteq S$ of size at most $t$ and for any $i, j \in \mathbb{N}$ such that $i \leq j \leq t$ and $j-i+1 = |W|$, there exists a function $g \in \mathcal{G}$ which maps $W$ to the ordered set $\{i,i+1,\ldots,j\}$.
\end{definition}

An efficient construction for an $(S,t)$-permuter easily follows from the well-known efficient construction of splitters~\cite{NaorSS95} as follows.

\begin{lemma}\label{lem:sizePermuter}
For any $n,t \in \mathbb{N}$ and a set $S$ of size $n$, an $(S,t)$-permuter of size $t^{\OO(t)} \log n$ can be constructed in time $t^{\OO(t)} n\log n$.
\end{lemma}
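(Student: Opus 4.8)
The plan is to construct the $(S,t)$-permuter by composing a standard $(n,t,t)$-splitter with the set of all permutations of $[t]$, and then argue both the size bound and the construction time. First I would invoke the classical result of Naor, Schulman and Srinivasan~\cite{NaorSS95}: for any $n, t \in \mathbb{N}$, an $(n,t,t)$-splitter of size $t^{\OO(t)} \log n$ can be constructed in time $t^{\OO(t)} n \log n$. Fix an arbitrary bijection between $S$ and $[n]$, so that we may regard such a splitter $\mathcal{F}$ as a family of functions from $S$ to $[t]$; for every $W \subseteq S$ with $|W| \leq t$ there is some $f \in \mathcal{F}$ injective on $W$.

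Next I would define $\mathcal{G}$ to be the family obtained by post-composing each $f \in \mathcal{F}$ with each permutation $\sigma$ of $[t]$, i.e. $\mathcal{G} = \{\, \sigma \circ f \mid f \in \mathcal{F},\ \sigma \in S_t \,\}$. The size of $\mathcal{G}$ is at most $t! \cdot |\mathcal{F}| = t^{\OO(t)} \log n$, since $t! = t^{\OO(t)}$, and enumerating all permutations of $[t]$ takes time $t^{\OO(t)}$, so the whole family is constructed in time $t^{\OO(t)} \cdot (t^{\OO(t)} n \log n) = t^{\OO(t)} n \log n$, as required.

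It then remains to verify the defining property of an $(S,t)$-permuter. Let $W \subseteq S$ be an ordered set, say $W = (w_1, \ldots, w_m)$ with $m = |W| \leq t$, and let $i \leq j \leq t$ with $j - i + 1 = m$. Since $|W| \leq t$, the splitter $\mathcal{F}$ gives some $f \in \mathcal{F}$ that is injective on the underlying set of $W$; thus $f(w_1), \ldots, f(w_m)$ are $m$ distinct elements of $[t]$. Now choose a permutation $\sigma \in S_t$ that maps these $m$ distinct values, in order, to $i, i+1, \ldots, j$ respectively, i.e. $\sigma(f(w_\ell)) = i + \ell - 1$ for every $\ell \in [m]$; such a $\sigma$ exists because we are prescribing the images of $m$ distinct points and can extend arbitrarily to a bijection of $[t]$. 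Then $g := \sigma \circ f \in \mathcal{G}$ maps the ordered set $W$ to the ordered set $\{i, i+1, \ldots, j\}$ as required, which completes the proof.

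I do not anticipate a serious obstacle here; the only point requiring a little care is the bookkeeping that an $(n,t,t)$-splitter (rather than an $(n,m,t)$-splitter for the specific $m$ at hand) suffices, since the permuter must handle all ordered subsets of size up to $t$ simultaneously — but an $(n,t,t)$-splitter is injective on every subset of size at most $t$, so this is immediate. One could alternatively state and use a single $(n,t,t)$-splitter throughout; the factor $t!$ blow-up from composing with all permutations is harmless because it is absorbed into the $t^{\OO(t)}$ term.
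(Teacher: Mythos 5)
Your proposal is correct and follows essentially the same route as the paper: compose an $(n,t,t)$-splitter of Naor, Schulman and Srinivasan with all $t!$ permutations of $[t]$, identify $S$ with $[n]$, and absorb the $t!$ factor into $t^{\OO(t)}$ for both the size and the construction time. Your explicit verification that a suitable permutation $\sigma$ can be chosen to send the $m$ distinct splitter-images, in order, to $\{i,\ldots,j\}$ fills in the step the paper dismisses as easy, so there is nothing to correct.
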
 

\begin{proof}
We can construct an $(S,t)$-permuter of size $t^{\OO(t)} \log n$ as follows. Let $\mathcal{F}_1$ be an $(n,t,t)$-splitter. Let $\mathcal{F}_2$ be the family of all permutation functions from $[t]$ to $[t]$. Now consider the family of functions $\mathcal{F}^*$ obtained by composition of $\mathcal{F}_1$ with $\mathcal{F}_2$, i.e. $\mathcal{F}^* = \{f_2 \circ f_1 : f_1 \in \mathcal{F}_1, f_2 \in \mathcal{F}_2\}$. Here, $f_2 \circ f_1$ is the composition of the two functions obtained by performing $f_1$ first and then $f_2$, i.e. $(f_2 \circ f_1)(x) = f_2(f_1(x))$. Let $S = \{s_1, s_2, \ldots, s_n\}$ and $\mathcal{F}^* = \{f_1^*, f_2^*, \ldots, f_{|\mathcal{F}^*|}^*\}$. Now consider the family of functions $\mathcal{G} = \{g_1, g_2, \ldots, g_{|\mathcal{F}^*|}\}$ such that for every $1 \leq j \leq |\mathcal{F}^*|$ and $1 \leq i \leq n$, $g_j(s_i) = f_j^*(i)$. It is easy to see that $\mathcal{G}$ is our required $(S,t)$-permuter.
We now argue about the size of $\mathcal{G}$ and the time taken by the construction. In~\cite{NaorSS95}, Naor {\em et~al.} gave a construction of $\mathcal{F}_1$ of size $e^t t^{\OO(\log t)} \log n$ in time $e^t t^{\OO(\log t)} n\log n$, and the size of $\mathcal{F}_2$ is exactly $t!$. Thus the size of $\mathcal{G}$ is $(e^t t^{\OO(\log t)} \log n) \cdot t! = t^{\OO(t)} \log n$. As it is possible to list the image of a function in linear time in the size of the domain, we can compute $\mathcal{G}$ in time $t^{\OO(t)} n\log n$.
\end{proof}

Given three configurations $\mathsf{conf}$, $\mathsf{conf}'$ and $\mathsf{conf}''$, we now define a special ordering, called {\em triplet order}, on the vertices of the triplet $(\mathsf{conf}$, $\mathsf{conf}'$, $\mathsf{conf}'')$. Note that an example is given below the definition.

\begin{definition}[{\bf Triplet Order}]\label{def:TripletOrder}
Let $G$ be an undirected graph, and let $k \in \mathbb{N}$. Let $\mathsf{conf}=(v_1,v_2,\ldots,v_k)$, $\mathsf{conf}'=(v_1',v_2',\ldots,v_k')$ and $\mathsf{conf}''=(v_1'',v_2'',\ldots,v_k'')$ be any three configurations. Let $S=\{v_k,v_{k-1},\ldots,v_1, v_k',v_{k-1}',\ldots,v_1', v_k'',v_{k-1}'',\ldots,v_1''\}$ be the ordered multi-set defined on the vertices of triplet $(\mathsf{conf},$ $\mathsf{conf}',\mathsf{conf}'')$. Then, the {\em triplet order} of $(\mathsf{conf}, \mathsf{conf}', \mathsf{conf}'')$ is the ordered set obtained from $S$ by first removing the vertices of $\mathsf{conf}$ and $\mathsf{conf}''$ that are common to $\mathsf{conf}'$, and then removing the vertices of $\mathsf{conf}$ that are common to $\mathsf{conf}''$. 
\end{definition}

For example, let $\mathsf{conf}=(v_1, v_2, v_3, v_4, v_5, v_6, v_7)$, $\mathsf{conf}'=(v_1', v_2' = v_6, v_3' = v_7, v_4', v_5' = v_1, v_6' = v_2, v_7' = v_3)$ and $\mathsf{conf}''=(v_1'', v_2'', v_3'' = v_4, v_4'', v_5'', v_6'', v_7'' = v_1')$ be three configurations of length $6$. Then, the triplet order $W$ of $(\mathsf{conf}, \mathsf{conf}',\mathsf{conf}'')$ is $\{v_5, v_7', v_6',v_5',v_4',v_3',v_2', v_1', v_6'', v_5'',v_4'',v_3'',$ $v_2'',v_1''\}$ (see Figure~\ref{fig:lemmaOrdered}). Note that $W$ is an ordered set that contains every distinct vertex of $S$ exactly once.

The following lemma about triplet order will help us to reduce the size of the $(k-1)$-configuration graph. 

\begin{lemma}\label{lem:orderedColoring}
Let $G$ be an undirected graph, and $k, \ell, r \in \mathbb{N}$ such that $\ell, r \leq k-1$. Let $\mathsf{conf}=(v_1,v_2,\ldots,v_k)$, $\mathsf{conf}'=(v_1',v_2',\ldots,v_k')$ and $\mathsf{conf}''=(v_1'',v_2'',\ldots,v_k'')$ be three configurations such that $(\mathsf{conf},\mathsf{conf}')$ is an $\ell$-transition and $(\mathsf{conf}',\mathsf{conf}'')$ is an $r$-transition. Let $W$ be the triplet order of $(\mathsf{conf},\mathsf{conf}',\mathsf{conf}'')$ and $\mathcal{G}$ be a $(V(G), 3k-2)$-permuter. Then, there exists a function $f \in \mathcal{G}$ that satisfies the following conditions (see Figure~\ref{fig:lemmaOrdered}): 
\begin{enumerate}[$(i)$]
	\item\label{prop:oC1} $W$ is mapped to an ordered set $\{i,i+1,\ldots,k,k+1,\ldots,2k-1,\ldots,j\}$ such that $1 \leq i \leq j \leq 3k-2$, $j-i+1 = |W|$ and for every $1 \leq a \leq k$, $f(v_a') = 2k-a$.
	\item\label{prop:oC2} All the vertices in $W$ have different images in $f$.
	\item\label{prop:oC3} If there exists a configuration $\mathsf{conf}^* = (w_1=v_1', w_2,\ldots,w_k=v_k')$ such that {\bf (1)} for every $1 \leq a \leq k$, $f(w_a) = 2k-a$, {\bf (2)} for every $\ell+1 \leq a \leq k$, $w_a = v_{a-\ell}$, and {\bf (3)} for every $r+1 \leq a \leq k$, $v_a'' = w_{a-r}$, then $(\mathsf{conf}, \mathsf{conf}^*)$ is an $\ell$-transition and $(\mathsf{conf}^*, \mathsf{conf}'')$ is an $r$-transition. 
\end{enumerate}
\end{lemma}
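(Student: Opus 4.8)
The plan is to first read off the shape of the triplet order $W$ from Lemma~\ref{lem:confGraph}, then invoke the permuter on a carefully chosen target interval to get $f$ for parts $(i)$ and $(ii)$, and finally exploit the resulting ``color bands'' together with Lemma~\ref{lem:confGraph} again to verify the transition conditions in part $(iii)$.

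Step~1 (structure of $W$). Since $(\mathsf{conf},\mathsf{conf}')$ is an $\ell$-transition with $\ell\leq k-1$, Lemma~\ref{lem:confGraph} gives $v_i'=v_{i-\ell}$ for all $\ell+1\leq i\leq k$, so $v_1,\ldots,v_{k-\ell}\in V(\mathsf{conf}')$; symmetrically $v_i''=v_{i-r}'$ for $r+1\leq i\leq k$, so $v_{r+1}'',\ldots,v_k''\in V(\mathsf{conf}')$. Hence, when $W$ is formed from $S$, the whole reversed block $v_k',v_{k-1}',\ldots,v_1'$ of $\mathsf{conf}'$ survives, the surviving part of $\mathsf{conf}$ is a (reversed) subsequence $A$ of $v_k,v_{k-1},\ldots,v_{k-\ell+1}$, and the surviving part of $\mathsf{conf}''$ is a (reversed) subsequence $C$ of $v_r'',v_{r-1}'',\ldots,v_1''$. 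Writing $p=|A|\leq\ell\leq k-1$ and $q=|C|\leq r\leq k-1$, we get $W=A\,(v_k',\ldots,v_1')\,C$ with $|W|=p+k+q\leq 3k-2$, so the $(V(G),3k-2)$-permuter $\mathcal{G}$ applies to $W$.

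Step~2 (parts $(i)$ and $(ii)$). I would apply Definition~\ref{def:Permuter} to the ordered set $W$ with $i:=k-p$ and $j:=2k-1+q$; one checks $1\leq i\leq j\leq 3k-2$ and $j-i+1=p+k+q=|W|$, so there is $g\in\mathcal{G}$ mapping $W$ order-preservingly onto $\{i,i+1,\ldots,j\}$. Set $f:=g$. Since $v_a'$ sits at position $p+(k-a+1)$ of $W$, it is sent to $i+(p+k-a+1)-1=2k-a$, and the $\mathsf{conf}'$-block as a whole is sent onto $\{k,k+1,\ldots,2k-1\}$ in order; this gives property $(i)$. The same computation shows $A$ is sent into $\{i,\ldots,k-1\}\subseteq[k-1]$ and $C$ into $\{2k,\ldots,j\}$ --- two facts I will reuse. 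Property $(ii)$ is immediate, as $f$ restricted to $W$ is a bijection onto $\{i,\ldots,j\}$.

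Step~3 (part $(iii)$, the main obstacle). Given $\mathsf{conf}^*=(w_1=v_1',\ldots,w_k=v_k')$ satisfying (1)--(3), observe that conditions (2) and (3) are exactly the ``equality'' halves of the Lemma~\ref{lem:confGraph} characterizations of ``$(\mathsf{conf},\mathsf{conf}^*)$ is an $\ell$-transition'' and ``$(\mathsf{conf}^*,\mathsf{conf}'')$ is an $r$-transition''. So it remains to verify the two ``non-membership'' halves: (a) $w_a\notin\{v_1,\ldots,v_{k+a-\ell-1}\}$ for $1\leq a\leq\ell$, and (c) $v_a''\notin\{w_1,\ldots,w_{k+a-r-1}\}$ for $1\leq a\leq r$. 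Both follow by comparing colors, using $f(w_c)=2k-c$ from (1). For (a), fix $a$ and $b\leq k+a-\ell-1\leq k-1$ and examine the $W$-representative of $v_b$: if $v_b\notin V(\mathsf{conf}')\cup V(\mathsf{conf}'')$ then $v_b$ lies in block $A$ and $f(v_b)\leq k-1$; if $v_b\in V(\mathsf{conf}'')\setminus V(\mathsf{conf}')$ then $v_b$ lies in block $C$ and $f(v_b)\geq 2k$; in either case $f(v_b)\neq 2k-a=f(w_a)$, so $v_b\neq w_a$. The delicate case is $v_b=v_c'$ for some $c$: then $f(v_b)=2k-c$, and since $b\leq k+a-\ell-1$, Lemma~\ref{lem:confGraph} applied to the $\ell$-transition $(\mathsf{conf},\mathsf{conf}')$ gives $v_a'\neq v_b$, forcing $c\neq a$ and hence $f(v_b)\neq f(w_a)$, so again $v_b\neq w_a$. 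Condition (c) is symmetric: for $b\leq k+a-r-1$ one has $f(w_b)=2k-b$, and if $v_a''\notin V(\mathsf{conf}')$ then $v_a''$ lies in block $C$ with $f(v_a'')\geq 2k>2k-b$; whereas if $v_a''=v_c'$, then Lemma~\ref{lem:confGraph} applied to the $r$-transition $(\mathsf{conf}',\mathsf{conf}'')$ forces $c\geq k+a-r>b$, whence $f(v_a'')=2k-c\neq 2k-b=f(w_b)$. I expect this last case analysis to be the crux of the lemma: the permuter pins $\mathsf{conf}'$ onto the middle colors $\{k,\ldots,2k-1\}$, which automatically separates the private vertices of $\mathsf{conf}$ (colors $<k$) from those of $\mathsf{conf}''$ (colors $\geq 2k$), and Lemma~\ref{lem:confGraph} is exactly what is needed to rule out the remaining collisions with $\mathsf{conf}'$.
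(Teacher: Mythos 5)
Your proof is correct and follows essentially the same route as the paper: pin the $\mathsf{conf}'$ block onto the middle colors $\{k,\ldots,2k-1\}$ via the permuter, and then combine the injectivity of $f$ on $W$ with the non-membership conditions of Lemma~\ref{lem:confGraph} to rule out the remaining collisions. The differences are only presentational — you make the choice of $i,j$ and the band structure of $W$ explicit, prove part $(iii)$ by direct case analysis rather than by contradiction, and write out the $r$-transition half that the paper dismisses as symmetric.
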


\begin{figure}
    \includegraphics[page=8,width=\textwidth]{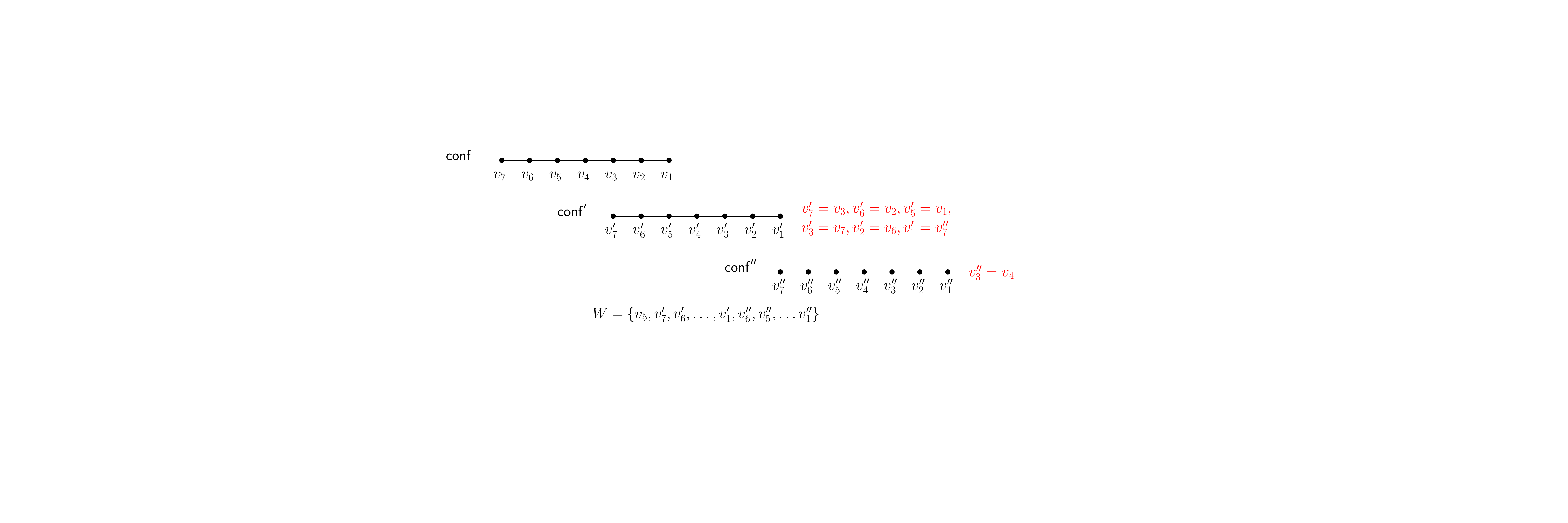}
    \caption{An illustration for Lemma~\ref{lem:orderedColoring} where $\ell = 4$ and $r = 6$.} 
  \label{fig:lemmaOrdered}
\end{figure}

\begin{proof}
We first prove Conditions~(\ref{prop:oC1}) and~(\ref{prop:oC2}). As $(\mathsf{conf},\mathsf{conf}')$ is an $\ell$-transition, by Lemma~\ref{lem:confGraph}, $\mathsf{conf}$ and $\mathsf{conf}'$ can have at most $\ell$ vertices in common. Similarly, $\mathsf{conf}'$ and $\mathsf{conf}''$ can have at most $r$ vertices in common. As $\ell, r \leq k-1$, from the definition of triplet order, $|W| \leq \ell + k + r \leq 3k-2$. Also $W \subseteq V(G)$, so by Definition~\ref{def:Permuter}, there exists a function in $\mathcal{G}$ that maps $W$ to an ordered set $\{i,i+1,\ldots,j\}$ for any $i, j \in \mathbb{N}$ such that $i \leq j \leq 3k-2$ and $j-i+1 = |W|$. As $|W| \geq k$, we can choose $i$ and $j$ such that there exists a function $g \in \mathcal{G}$ that maps $W$ to $\{i,i+1,\ldots,k,k+1,\ldots,2k-1,\ldots,j\}$ where for every $1 \leq a \leq k$, $g(v_a') = 2k-a$. It is easy to see that all the vertices in $W$ have different images in $g$.
Thus by taking $f = g$, Conditions~(\ref{prop:oC1}) and~(\ref{prop:oC2}) are satisfied. 

We now prove Condition~(\ref{prop:oC3}). Suppose that there exists a configuration $\mathsf{conf}^* = (w_1=v_1', w_2, \ldots,$ $w_k'=v_k')$ other than $\mathsf{conf}'$ such that {\bf (1)} for every $1 \leq a \leq k$, $f(w_a) = 2k-a$, {\bf (2)} for every $\ell+1 \leq a \leq k$, $w_a = v_{a-\ell}$, and {\bf (3)} for every $r+1 \leq a \leq k$, $v_a'' = w_{a-r}$. We prove that $(\mathsf{conf}, \mathsf{conf}^*)$ is an $\ell$-transition. The proof of $(\mathsf{conf}^*, \mathsf{conf}'')$ being an $r$-transition is similar.

By the way of contradiction, assume that $(\mathsf{conf}, \mathsf{conf}^*)$ is not an $\ell$-transition. Notice that for every $\ell+1 \leq a \leq k$, $w_a = v_{a-\ell}$. So, from Lemma~\ref{lem:confGraph}, there must exist a vertex $w_b$ such that $w_b = v_c$ for some $1 \leq b \leq \ell$ and $1 \leq c \leq (k+b)-(\ell+1)$. In particular, $f(v_c) = f(w_b) = 2k-b$. Moreover, $f(v_b') = 2k-b$. As $W$ contains every distinct vertex of $V(\mathsf{conf}) \cup V(\mathsf{conf}') \cup V(\mathsf{conf}'')$ exactly once, by Condition~(\ref{prop:oC2}), $v_b' = v_c$. As $(\mathsf{conf}, \mathsf{conf}')$ is an $\ell$-transition, from Lemma~\ref{lem:confGraph}, $v_b' \notin \{v_1,v_2,\ldots,v_{(k+b)-(\ell+1)}\}$. In particular, $v_b' \neq v_c$ which is a contradiction.
\end{proof}

We now give the following lemma, about finding a particular labeled path in a vertex labeled graph, which will be helpful throughout the section.

\begin{lemma}\label{lem:labPath}
Let $k, r, t \in \mathbb{N}$. Then, there exists an algorithm that, given an undirected graph $G$, a labeling function $f$ from $V(G)$ to $[t]$ and any two vertices $u$ and $v$ in $V(G)$, runs in time $\OO(|V(G)|+|E(G)|)$ and determines whether there exists a (simple) path $P =(w_1=u, w_2, \ldots, w_k=v)$ of length $k-1$ between $u$ and $v$ in $G$ such that for every $1 \leq i \leq k$, $f(w_i)= r+k-i$, and if the path exists, then it finds such a path.
\end{lemma}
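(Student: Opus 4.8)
The key observation is that the label constraint $f(w_i) = r+k-i$ forces the labels along $P$ to be the consecutive values $r+1, r+2, \ldots, r+k$ appearing in \emph{strictly decreasing} order as we go from $u$ to $v$. In particular, $f(u) = r+k$ and $f(v) = r+1$, and every intermediate vertex $w_i$ must carry the label $r+k-i$. The plan is to reduce the problem to a single reachability query in a layered directed graph. First I would check that $f(u) = r+k$ and $f(v) = r+1$; if not, report that no such path exists. Assuming these hold, construct a directed graph $D$ on vertex set $V(G)$ by orienting an edge $\{x,y\} \in E(G)$ as the arc $(x,y)$ whenever $f(x) = f(y)+1$ and $f(x) \in \{r+1, \ldots, r+k\}$ (and symmetrically $(y,x)$ when $f(y) = f(x)+1$); all other edges are discarded. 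This takes $\OO(|V(G)|+|E(G)|)$ time. Now any directed path in $D$ automatically has strictly decreasing labels that drop by exactly one at each step, so a directed $u$-$v$ path in $D$ has length exactly $k-1$ and visits vertices with labels $r+k, r+k-1, \ldots, r+1$ in order — which is precisely a path $P$ of the required form. Conversely, any path $P$ with the required labeling is, edge by edge, a valid arc sequence in $D$. Hence it suffices to run a BFS or DFS from $u$ in $D$, check whether $v$ is reached, and if so output the recovered path; this is again linear time.

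The one subtlety I anticipate is the \emph{simplicity} requirement: the lemma demands a simple path, whereas a plain reachability argument in $D$ might a priori return a walk. Here the layered structure saves us — because labels strictly decrease along any directed path in $D$, no vertex can repeat, so every directed walk in $D$ is automatically a simple path. Thus a BFS/DFS tree path from $u$ to $v$ in $D$ is simple for free, and I do not need any extra machinery. A second minor point is that I should make sure $k$ itself is consistent with the label range — but this is enforced automatically: if $f(u) = r+k$ and a directed path of any length exists from $u$ to $v$ with $f(v) = r+1$, its length is forced to be $k-1$, so no separate length bookkeeping is needed beyond verifying the endpoint labels.

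Putting it together: verify $f(u) = r+k$ and $f(v) = r+1$ in $\OO(1)$ time; build $D$ from $G$ by scanning all edges once, keeping only label-decreasing-by-one arcs, in $\OO(|V(G)|+|E(G)|)$ time; run BFS from $u$ in $D$ in $\OO(|V(G)|+|E(G)|)$ time; if $v$ is reached, trace back through the BFS parent pointers to recover a simple path $P = (w_1 = u, \ldots, w_k = v)$, which by construction satisfies $f(w_i) = r+k-i$ for all $i$; otherwise report that no such path exists. Correctness in both directions follows from the bijection between directed $u$-$v$ paths in $D$ and simple $k$-vertex paths in $G$ with the prescribed labeling, and the total running time is $\OO(|V(G)|+|E(G)|)$ as claimed. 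I expect the ``main obstacle'' to be essentially bookkeeping — making the equivalence between $D$-reachability and the labeled-path condition airtight, and observing the automatic simplicity — rather than any genuine algorithmic difficulty.
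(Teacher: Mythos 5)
Your construction is essentially identical to the paper's: orient each edge of $G$ whose endpoint labels differ by exactly one from the larger label to the smaller label, observe that any directed walk then decreases the label by exactly one per step (hence is automatically simple and of forced length), and run BFS from $u$ to find and output a $u$--$v$ path. The only slip is an off-by-one in your endpoint check: since the lemma requires $f(w_i)=r+k-i$, the endpoints must satisfy $f(u)=r+k-1$ and $f(v)=r$ (not $r+k$ and $r+1$); with that corrected, your equivalence between directed reachability and the labeled-path condition and your $\OO(|V(G)|+|E(G)|)$ running-time analysis coincide with the paper's proof.
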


\begin{proof}
We first construct a new directed graph $G'$ from $G$ as follows. Initialize $V(G') = V(G)$ and $E(G') = \emptyset$. For every edge $\{x,y\} \in E(G)$ such that $|f(x) - f(y)| = 1$: if $f(x) > f(y)$, then add the arc $(x, y)$ to $E(G')$; otherwise, add the arc $(y, x)$ to $E(G')$. It is easy to see that, if there exists a (simple) path $P =(w_1=u, w_2, \ldots, w_k=v)$ of length $k-1$ between $u$ and $v$ in $G$ such that for every $1 \leq i \leq k$, $f(w_i)= r+k-i$, then $P$ also exists in $G'$ as a directed path from $u$ to $v$. Note that in $G'$, every arc is directed from a vertex labeled $i+1$ to a vertex labeled $i$ for some $1 \leq i \leq t-1$ so if $f(u) = r+k-1$ and $f(v) = r$, then any path from $u$ to $v$ in $G'$ will satisfy the required property. So we can perform breadth-first search (BFS) in $G'$ starting from $u$ and return the shortest path from $u$ to $v$ in $G'$, if it exists. As we can construct $G'$ in time $\OO(|V(G)|+|E(G)|)$ and the BFS runs in $\OO(|V(G')| + |E(G')|)$, the lemma follows.
\end{proof}

Let $\mathbf{SG}=\langle G,k,\mathsf{init},\mathsf{fin}\rangle$ be an instance of {\sc Snake Game}. We now give a procedure (Algorithm~\ref{alg:sparseGraph}) to construct a new graph, called a {\em $(k-1)$-sparse configuration graph} of $\mathbf{SG}$, based on a $(V(G),3k-2)$-permuter and Lemma~\ref{lem:orderedColoring}. We first give an outline of the procedure. 

We initialize $\mathcal{C}'$ to be a new configuration graph whose vertex set contains $\mathsf{init}$ and $\mathsf{fin}$ and whose edge set is empty. Then, we construct a $(V(G),3k-2)$-permuter $\mathcal{G}$. For each function $g \in \mathcal{G}$, we take every pair of distinct vertices $(u,v)$ of $G$ and check whether there exists a path $P$ between $u$ and $v$ of length $k-1$ such that if we traverse the path $P$ from $u$ to $v$, the ordered set of images of the vertices of $P$ under $g$ is $\{2k-1, 2k-2, \ldots, k\}$. If such a path $P$ exists, which is checked using Lemma~\ref{lem:labPath}, then we add $P$ as a configuration with $u$ as its head to the vertex set of $\mathcal{C}'$. After finishing this for loop, we take every pair of distinct vertices $(\mathsf{conf},\mathsf{conf}')$ of $\mathcal{C}'$ and add an edge from $\mathsf{conf}$ to $\mathsf{conf}'$ if $(\mathsf{conf}$, $\mathsf{conf}')$ is a $(k-1)$-transition, which is checked using Observation~\ref{obs:testTransition}. In the end, we return $\mathcal{C}'$ and $\mathcal{G}$.

\begin{algorithm}
    \SetKwInOut{Input}{Input}
    \SetKwInOut{Output}{Output}
	\medskip
    {\textbf{function} sparseConfigurationGraph}$(\mathbf{SG},k)$\;
    	let $\mathcal{C}'$ be an empty graph\;
    	initialize $V(\mathcal{C}') = \{\mathsf{init}, \mathsf{fin}\}$\;
    	construct a $(V(G),3k-2)$-permuter $\mathcal{G}$ using Lemma~\ref{lem:sizePermuter}\;
	    \For{each $g \in \mathcal{G}$}{
	    	\For{each pair of vertices $(u,v) \in V(G) \times V(G)$ such that $u \neq v$}{
	    		\If{there is a (simple) path $P =(w_1=u, w_2, \ldots, w_k=v)$ of length $k-1$ between $u$ and $v$ in $G$ such that for every $1 \leq a \leq k$, $g(w_a)= 2k-a$}{\label{li:pathP} 
	    			add $P$ as a configuration with $u$ as its head to $V(\mathcal{C}')$\;
	    		}
	    	}
		}
		\For{every triplet of distinct vertices $(u, v, w) \in V(G) \times V(G) \times V(G)$}{
			\For{every pair of vertices $(\mathsf{conf} = (x_1=v, x_2, \ldots, x_k=u), \mathsf{conf}' = (x_1'=w, x_2', \ldots,$ $ x_k'=v)) \in V(\mathcal{C}') \times V(\mathcal{C}')$}{\label{li:pair} 
				\If{$(\mathsf{conf}, \mathsf{conf}')$ is a $(k-1)$-transition in $G$}{
					add the ordered pair $(\mathsf{conf}, \mathsf{conf}')$ as a directed edge in $E(\mathcal{C}')$;
				}
			}
		}
		\Return $(\mathcal{C}'$, $\mathcal{G})$.
    \caption{Construction of a $(k-1)$-sparse configuration graph.}
    \label{alg:sparseGraph}
\end{algorithm}

We define the notion of a {\em $(k-1)$-sparse configuration graph} as follows.

\begin{definition}
Let $\mathbf{SG}=\langle G,k,\mathsf{init},\mathsf{fin}\rangle$ be an instance of {\sc Snake Game}. Then, any graph returned by Algorithm~\ref{alg:sparseGraph} is called a {\em $(k-1)$-sparse configuration graph} of $\mathbf{SG}$.
\end{definition}

We first give the following lemma about the size of a $(k-1)$-sparse configuration graph and the running time of Algorithm~\ref{alg:sparseGraph}. 

\begin{lemma}\label{lem:algoTime}
Let $\mathbf{SG}=\langle G,k,\mathsf{init},\mathsf{fin}\rangle$ be an instance of {\sc Snake Game}. Then, Algorithm~\ref{alg:sparseGraph} runs in time $k^{\OO(k)}n^3\log^2 n$ and returns a $(k-1)$-sparse configuration graph of $\mathbf{SG}$ with $k^{\OO(k)} n^2\log n$ vertices and $k^{\OO(k)}n^3\log^2 n$ arcs.
\end{lemma}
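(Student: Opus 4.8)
The plan is to analyze Algorithm~\ref{alg:sparseGraph} line by line, first bounding the number of vertices of $\mathcal{C}'$ and then, using that bound, the number of arcs and the overall running time. First I would recall from Lemma~\ref{lem:sizePermuter} that the $(V(G),3k-2)$-permuter $\mathcal{G}$ has size $(3k-2)^{\OO(3k-2)}\log n = k^{\OO(k)}\log n$ and is constructed in time $k^{\OO(k)}n\log n$. The first nested loop iterates over every $g\in\mathcal{G}$ and, inside, over every ordered pair $(u,v)$ of distinct vertices of $G$; there are at most $n^2$ such pairs. For each pair it invokes the algorithm of Lemma~\ref{lem:labPath} (with the labeling $g$, target labels $2k-a$ for $a=1,\dots,k$, i.e.\ $r=k-1$), which runs in time $\OO(|V(G)|+|E(G)|) = \OO(n^2)$ and, when a path exists, returns one, which is then added as a configuration to $V(\mathcal{C}')$. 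Hence the first loop adds at most $|\mathcal{G}|\cdot n^2 = k^{\OO(k)}n^2\log n$ configurations, so together with $\mathsf{init}$ and $\mathsf{fin}$ we get $|V(\mathcal{C}')| = k^{\OO(k)}n^2\log n$, and the time spent in the first loop is $|\mathcal{G}|\cdot n^2\cdot \OO(n^2) = k^{\OO(k)}n^4\log n$; I would note that by storing $G$ with adjacency lists and observing that Lemma~\ref{lem:labPath}'s construction only touches edges whose endpoints differ by one label, the per-call cost can in fact be charged more carefully, but in any case a crude bound of $k^{\OO(k)}n^3\log^2 n$ for this phase needs to be justified --- see the obstacle remark below.

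Next I would handle the second nested loop, which builds the arc set. The outer loop runs over all triplets $(u,v,w)\in V(G)^3$, of which there are $n^3$, and the inner loop over all pairs $(\mathsf{conf},\mathsf{conf}')\in V(\mathcal{C}')^2$ of the prescribed form --- namely $\mathsf{conf}$ has tail $u$ and head... wait, I mean $\mathsf{conf}=(x_1=v,\dots,x_k=u)$ has head $v$ and tail $u$, and $\mathsf{conf}'=(x_1'=w,\dots,x_k'=v)$ has head $w$ and tail $v$. The key point is that a configuration in $V(\mathcal{C}')$ is determined (as a vertex of $\mathcal{C}')$ by its head, so the number of configurations with a given head is at most the number of functions $g$ times one path per pair $(u,v)$, i.e.\ $k^{\OO(k)}n\log n$ per head vertex; hence the number of pairs $(\mathsf{conf},\mathsf{conf}')$ with $\mathsf{conf}$ having head $v$ and $\mathsf{conf}'$ having head $w$ is $(k^{\OO(k)}n\log n)^2 = k^{\OO(k)}n^2\log^2 n$, but fixing the triplet $(u,v,w)$ pins the tails as well. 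Summing over all $n^3$ triplets and noting each inner iteration runs the $\OO(k^2)$ transition test of Observation~\ref{obs:testTransition}, one gets that the loop examines at most $k^{\OO(k)}n^3\log^2 n$ candidate arcs in total (this bound subsumes the number of arcs actually added, which is therefore also $k^{\OO(k)}n^3\log^2 n$), and the time for the loop is this quantity times $\OO(k^2)$, which is absorbed into $k^{\OO(k)}n^3\log^2 n$. Combining the two phases and the $k^{\OO(k)}n\log n$ cost of constructing $\mathcal{G}$ gives the claimed overall running time and size bounds.

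The main obstacle I anticipate is reconciling the $n^4$ or $n^3$ factors: the naive reading of the first loop gives $|\mathcal{G}|\cdot n^2$ pairs each costing $\Theta(n^2)$, i.e.\ $k^{\OO(k)}n^4\log^2 n$, which is \emph{worse} than the stated $k^{\OO(k)}n^3\log^2 n$. To get the claimed bound I would need to argue that the calls to Lemma~\ref{lem:labPath} can be batched: for a fixed $g\in\mathcal{G}$, a single BFS/DFS traversal of the derived directed graph $G'$ (which has $\OO(n^2)$ edges) from all vertices labeled $2k-1$ simultaneously, or a layer-by-layer dynamic program over the $k$ label classes, simultaneously determines, for every head $u$, a valid path $P$ ending at any $v$ with $g(v)=k$ --- so that \emph{all} pairs $(u,v)$ for a fixed $g$ are handled in a single $\OO(n^2)$ sweep rather than $n^2$ separate ones. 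This brings the first phase down to $|\mathcal{G}|\cdot\OO(n^2) = k^{\OO(k)}n^2\log n$, comfortably inside the target. I would therefore restate the inner step of the first loop of Algorithm~\ref{alg:sparseGraph} as a single invocation of (a slightly strengthened, all-pairs version of) Lemma~\ref{lem:labPath} per $g$, remark that this is exactly what a BFS from the appropriately-labeled source layer computes, and then the vertex count $k^{\OO(k)}n^2\log n$, the arc count $k^{\OO(k)}n^3\log^2 n$, and the running time $k^{\OO(k)}n^3\log^2 n$ all follow as above. The remaining steps --- the size of $\mathcal{G}$, the $\OO(k^2)$ transition test, and the triplet-loop counting --- are routine bookkeeping.
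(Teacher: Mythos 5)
Your accounting of the permuter size, the vertex count, the arc count via triplets, and the $\OO(k^2)$ transition test matches the paper, and you correctly spotted the real tension: charging each of the $|\mathcal{G}|\cdot n^2$ invocations of Lemma~\ref{lem:labPath} a full $\OO(n+m)$ with $m=\Theta(n^2)$ gives $k^{\OO(k)}n^4\log n$, which overshoots the claimed bound. However, your way out of this is where the gap lies. The paper does not batch anything: it invokes Corollary~\ref{cor:edgeSetSize} to assume $m=k^{\OO(1)}n$ (i.e.\ Algorithm~\ref{alg:sparseGraph} is run after the treewidth-reduction of Section~\ref{sec:treewidth}, whose output has $|E(G')|=k^{\OO(1)}|V(G')|$), so each call to Lemma~\ref{lem:labPath} costs only $k^{\OO(1)}n$ and the first loop totals $(k^{\OO(k)}\log n)\cdot n^2\cdot(n+m)=k^{\OO(k)}n^3\log n$, comfortably inside $k^{\OO(k)}n^3\log^2 n$. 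Without that hypothesis (or some other device) the stated bound does not follow for dense graphs, so your proof needs either this sparsity assumption or a genuinely correct speed-up of the first loop.

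The speed-up you propose is not correct as stated. A single multi-source BFS in the labeled DAG $G'$, started from all vertices with label $2k-1$, determines reachability from the source set collectively and yields one path per reachable target (or, in your phrasing, one path per head ``ending at any $v$''); it does not decide, for every ordered pair $(u,v)$, whether there is a label-monotone path from $u$ specifically to $v$ specifically, and it cannot be made to do so in one $\OO(n^2)$ sweep (that is essentially layered all-pairs reachability). Moreover, storing only one path per head instead of one path per pair is not a harmless restatement of line~\ref{li:pathP}: the correctness argument in Lemma~\ref{lem:pgkToSparsek} (and its reuse in Lemma~\ref{lem:1TogSparsek}) needs that for the specific head $v_1'$ \emph{and} tail $v_k'$ some path with exactly those endpoints was added, so your modified first loop would break the downstream proofs. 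A repair in your spirit does exist and would make your route a legitimate alternative to the paper's: for each $g\in\mathcal{G}$ and each source $u$ with $g(u)=2k-1$, one BFS of $G'$ handles all targets $v$ at once, giving $|\mathcal{G}|\cdot n\cdot\OO(n+m)=k^{\OO(k)}n^3\log n$ for the first phase even when $m=\Theta(n^2)$, while still producing a path per pair $(u,v)$ whenever one exists. As written, though, the proposal neither uses the paper's sparsity assumption nor supplies a valid substitute, so the claimed running time is not established.
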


\begin{proof}
Let $|V(G)| = n$, $|E(G)| = m$, $\mathcal{C}'$ be a $(k-1)$-sparse configuration graph and $\mathcal{G}$ be a $(V(G),3k-2)$-permuter constructed by Algorithm~\ref{alg:sparseGraph}. From Corollary~\ref{cor:edgeSetSize}, we implicitly assume that $m = k^{\OO(1)}n$. For each function $g \in \mathcal{G}$ and every pair of distinct vertices of $G$, the algorithm adds at most one vertex to $\mathcal{C}'$. Also for every triplet of distinct vertices of $G$, the algorithm adds an arc between every two vertices $\mathsf{conf}$ and $\mathsf{conf}'$ of $\mathcal{C}'$, such that the head of $\mathsf{conf}$ and tail of $\mathsf{conf}'$ is the same vertex and $(\mathsf{conf},\mathsf{conf}')$ is a $(k-1)$-transition. By Corollary~\ref{cor:(k-1)Transition}, we know that if the pair $(\mathsf{conf}, \mathsf{conf}')$ is a $(k-1)$-transition then the head of $\mathsf{conf}$ and tail of $\mathsf{conf}'$ should be the same vertex. As the algorithm performs this check for every triplet of distinct vertices of $G$, every pair $(\mathsf{conf},\mathsf{conf'}) \in V(\mathcal{C'}) \times V(\mathcal{C'})$ that is a $(k-1)$-transition is added to $\mathcal{C'}$. For every pair of distinct vertices of $G$, the algorithm adds at most $|\mathcal{G}|$ configurations so the number of pairs of vertices of $\mathcal{C'}$ satisfying the condition given in line~\ref{li:pair} is at most $|\mathcal{G}|^2$. By Lemma~\ref{lem:sizePermuter}, one can construct a $(V(G),t)$-permuter of size $t^{\OO(t)} \log n$ in time $t^{\OO(t)} n\log n$. So, $|V(\mathcal{C}')| = k^{\OO(k)} n^2\log n$ and $|E(\mathcal{C}')| = k^{\OO(k)} n^3\log^2 n$. By Observation~\ref{obs:testTransition}, for a given pair of configurations $(\mathsf{conf}, \mathsf{conf}')$, we can test whether $(\mathsf{conf}, \mathsf{conf}')$ is a $(k-1)$-transition in time $\OO(k^2)$. Also, by Lemma~\ref{lem:labPath}, we can find a path $P$ between $u$ and $v$ satisfying the condition given in line~\ref{li:pathP} of Algorithm~\ref{alg:sparseGraph} in time $\OO(n+m)$. So, the running time of Algorithm~\ref{alg:sparseGraph} is $(k^{\OO(k)} n\log n + (k^{\OO(k)} \log n) \cdot n^2 \cdot (n+m) + n^3 \cdot (k^{\OO(k)} \log^2 n) \cdot k^2) = k^{\OO(k)} n^3\log^2 n$. 
\end{proof}

Observe that, unlike the $(k-1)$-configuration graph, a $(k-1)$-sparse configuration graph may not be unique for a snake game $\mathbf{SG}$ even if the permuter is fixed. Indeed, given a function of the permuter and a pair of vertices of $G$, there may be more than one path that satisfies the condition given in line~\ref{li:pathP} of Algorithm~\ref{alg:sparseGraph} but we only add one of them (chosen arbitrarily) to our $(k-1)$-sparse configuration graph. 

The following lemma concerns the relationship between the $(k-1)$-configuration graph and a $(k-1)$-sparse configuration graph.

\begin{lemma}\label{lem:pgkToSparsek}
Let $\mathbf{SG}=\langle G,k,\mathsf{init},\mathsf{fin}\rangle$ be an instance of {\sc Snake Game}. Let $\mathcal{C}$ be the $(k-1)$-configuration graph and $\mathcal{C}'$ be a $(k-1)$-sparse configuration graph of $\mathbf{SG}$. Let $\mathsf{conf}, \mathsf{conf}', \mathsf{conf}'' \in V(\mathcal{C})$ such that both $(\mathsf{conf},\mathsf{conf}')$ and $(\mathsf{conf}',\mathsf{conf}'')$ are $(k-1)$-transitions. Then:
\begin{enumerate}[$(i)$]
	\item\label{prop:gk1} $\mathcal{C}'$ is an induced subgraph of $\mathcal{C}$.
	\item\label{prop:gk2} There exists a configuration $\mathsf{conf}^* \in V(\mathcal{C}')$ such that both $(\mathsf{conf},\mathsf{conf}^*)$ and $(\mathsf{conf}^*,\mathsf{conf}'')$ are also $(k-1)$-transitions.
\end{enumerate} 
\end{lemma}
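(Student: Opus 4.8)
The plan is to treat the two parts separately: part~(\ref{prop:gk1}) is routine bookkeeping about what Algorithm~\ref{alg:sparseGraph} produces, while part~(\ref{prop:gk2}) is the substantive claim, and it is where Lemma~\ref{lem:orderedColoring} does the work. For part~(\ref{prop:gk1}) I would first check $V(\mathcal{C}')\subseteq V(\mathcal{C})$: every vertex of $\mathcal{C}'$ is either $\mathsf{init}$ or $\mathsf{fin}$ (which are $(G,k)$-configurations by definition of the instance $\mathbf{SG}$) or a simple path $P=(w_1,\ldots,w_k)$ of length $k-1$ inserted on line~\ref{li:pathP}, and any such $P$ meets both conditions of Definition~\ref{def:config}, hence is a $(G,k)$-configuration and so a vertex of $\mathcal{C}$. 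For the arcs, recall that by Definition~\ref{def:confGraph} (with $\ell=k-1$) the arc set of $\mathcal{C}$ consists of exactly the pairs of configurations forming a $(k-1)$-transition; Algorithm~\ref{alg:sparseGraph} inserts an arc $(\mathsf{conf},\mathsf{conf}')$ into $\mathcal{C}'$ precisely when this pair is enumerated by its second double loop and forms a $(k-1)$-transition, and by Corollary~\ref{cor:(k-1)Transition}, if $(\mathsf{conf},\mathsf{conf}')\in V(\mathcal{C}')\times V(\mathcal{C}')$ is a $(k-1)$-transition then the head of $\mathsf{conf}$ equals the tail of $\mathsf{conf}'$, so this pair is indeed enumerated. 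Hence $E(\mathcal{C}')$ is exactly the set of $(k-1)$-transitions with both endpoints in $V(\mathcal{C}')$, i.e.\ $E(\mathcal{C})\cap(V(\mathcal{C}')\times V(\mathcal{C}'))$, so $\mathcal{C}'$ is the subgraph of $\mathcal{C}$ induced by $V(\mathcal{C}')$.

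For part~(\ref{prop:gk2}) (we may assume $k\ge 2$), let $\mathcal{G}$ be the $(V(G),3k-2)$-permuter used by Algorithm~\ref{alg:sparseGraph} when it built $\mathcal{C}'$, and apply Lemma~\ref{lem:orderedColoring} to $\mathsf{conf},\mathsf{conf}',\mathsf{conf}''$ with $\ell=r=k-1$ (legal since $1\le k-1\le k-1$). This yields a function $f\in\mathcal{G}$ satisfying Conditions~(\ref{prop:oC1})--(\ref{prop:oC3}); in particular $f(v_a')=2k-a$ for all $1\le a\le k$. Now inspect the iteration of the first double loop of Algorithm~\ref{alg:sparseGraph} in which $g=f$ and the pair $(u,v)=(v_1',v_k')$ is processed (these are distinct since $\mathsf{conf}'$ is a simple path on $k\ge 2$ vertices): the path $\mathsf{conf}'=(v_1',\ldots,v_k')$ itself is a simple path of length $k-1$ from $v_1'$ to $v_k'$ whose vertices carry $f$-labels $2k-1,2k-2,\ldots,k$, so the test on line~\ref{li:pathP} succeeds and the algorithm inserts into $V(\mathcal{C}')$ some configuration $\mathsf{conf}^*=(w_1=v_1',w_2,\ldots,w_k=v_k')$ with $f(w_a)=2k-a$ for every $1\le a\le k$, having $v_1'$ as its head. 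It then remains only to verify that $\mathsf{conf}^*$ meets the three hypotheses of Condition~(\ref{prop:oC3}) of Lemma~\ref{lem:orderedColoring}: item {\bf (1)} holds by the choice of $\mathsf{conf}^*$; since $\ell=k-1$, item {\bf (2)} amounts to $w_k=v_1$, which holds because $w_k=v_k'$ and Corollary~\ref{cor:(k-1)Transition} applied to the $(k-1)$-transition $(\mathsf{conf},\mathsf{conf}')$ gives $v_k'=v_1$; and since $r=k-1$, item {\bf (3)} amounts to $v_k''=w_1$, which holds because $w_1=v_1'$ and Corollary~\ref{cor:(k-1)Transition} applied to the $(k-1)$-transition $(\mathsf{conf}',\mathsf{conf}'')$ gives $v_k''=v_1'$. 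Therefore Condition~(\ref{prop:oC3}) yields that $(\mathsf{conf},\mathsf{conf}^*)$ is a $(k-1)$-transition and $(\mathsf{conf}^*,\mathsf{conf}'')$ is a $(k-1)$-transition (if $\mathsf{conf}^*=\mathsf{conf}'$ this is immediate from the hypotheses of Lemma~\ref{lem:orderedColoring}), and since $\mathsf{conf}^*\in V(\mathcal{C}')$ this is exactly the assertion of part~(\ref{prop:gk2}).

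The main obstacle, and essentially the only place any real thought is needed, is this bookkeeping in part~(\ref{prop:gk2}): one must recognise that the single abstract function $f$ delivered by Lemma~\ref{lem:orderedColoring} is among the permuter functions that Algorithm~\ref{alg:sparseGraph} iterates over, that $\mathsf{conf}'$ serves as its own witness for the path test on line~\ref{li:pathP} (so that a replacement $\mathsf{conf}^*$ is guaranteed to be inserted into $\mathcal{C}'$), and that the two ``endpoint'' equalities $w_k=v_1$ and $v_k''=w_1$ needed to fire Condition~(\ref{prop:oC3}) are precisely what Corollary~\ref{cor:(k-1)Transition} provides for the two given $(k-1)$-transitions. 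Once this correspondence between the combinatorial lemma and the steps of the algorithm is in place, everything else follows directly from the definitions and the earlier results.
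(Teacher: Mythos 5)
Your proof is correct and follows essentially the same route as the paper's: part~$(i)$ by checking that the vertices added on line~\ref{li:pathP} are $(G,k)$-configurations and that the triplet enumeration together with Corollary~\ref{cor:(k-1)Transition} captures exactly the induced arcs, and part~$(ii)$ by invoking Lemma~\ref{lem:orderedColoring} with $\ell=r=k-1$, observing that $\mathsf{conf}'$ itself witnesses the test on line~\ref{li:pathP} for $g=f$ and $(u,v)=(v_1',v_k')$, and then discharging the hypotheses of Condition~(\ref{prop:oC3}) via the endpoint identities $v_k'=v_1$ and $v_k''=v_1'$. Your write-up is in fact slightly more explicit than the paper's at the two bookkeeping points (why every $(k-1)$-transition pair is enumerated, and why items {\bf (2)} and {\bf (3)} reduce to single endpoint equalities), but the argument is the same.
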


\begin{proof}
We first prove that $V(\mathcal{C}') \subseteq V(\mathcal{C})$. As $V(\mathcal{C})$ is the set of all $(G,k)$-configurations, it suffices to show that every vertex in $V(\mathcal{C}')$ is a $(G,k)$-configuration. Clearly, $\mathsf{init}$ and $\mathsf{fin}$ are $(G,k)$-configurations. Let $\mathsf{conf}$ be a vertex in $V(\mathcal{C}')$ other than $\mathsf{init}$ and $\mathsf{fin}$. Then, by construction, $\mathsf{conf}$ is a simple path of length $k-1$, with a head and a tail, in $G$. So, by Definition~\ref{def:config}, $\mathsf{conf}$ is a $(G,k)$-configuration thus indeed, $V(\mathcal{C}') \subseteq V(\mathcal{C})$. We now prove that $\mathcal{C}'$ is an induced subgraph of $\mathcal{C}$. To this end, let $\mathsf{conf}$ and $\mathsf{conf}'$ be two configurations in $\mathcal{C}$. Then, there is an edge from $\mathsf{conf}$ to $\mathsf{conf}'$ in $\mathcal{C}$ if and only if $(\mathsf{conf},\mathsf{conf}')$ is a $(k-1)$-transition. Moreover, if $\mathsf{conf}$ and $\mathsf{conf}'$ belong to $\mathcal{C}'$ then the same condition is true in $\mathcal{C}'$ by construction. In turn, this proves that $\mathcal{C}'$ is indeed an induced subgraph of $\mathcal{C}$.

We now prove Condition~(\ref{prop:gk2}). Let $\mathsf{conf}=(v_1,v_2,\ldots,v_k)$, $\mathsf{conf}'=(v_1',v_2',\ldots,v_k' = v_1)$ and $\mathsf{conf}''=(v_1'',v_2'',\ldots,v_k'' = v_1')$. Let $W$ be the triplet order of $(\mathsf{conf},\mathsf{conf}',\mathsf{conf}'')$ and $\mathcal{G}$ be a $(V(G),3k-2)$-permuter. As both $(\mathsf{conf},\mathsf{conf}')$ and $(\mathsf{conf}',\mathsf{conf}'')$ are $(k-1)$-transitions, by Condition~(\ref{prop:oC1}) of Lemma~\ref{lem:orderedColoring}, there exists a function $f \in \mathcal{G}$ that maps $W$ to an ordered set $\{i,i+1,\ldots,k,k+1,\ldots,2k-1,\ldots,j\}$ such that $1 \leq i \leq j \leq 3k-2$, $j-i+1 = |W|$ and for every $1 \leq a \leq k$, $f(v_a') = 2k-a$. Let $\mathsf{prop}$ be the property of the path $P$ stated in line~\ref{li:pathP} of Algorithm~\ref{alg:sparseGraph}. So, when Algorithm~\ref{alg:sparseGraph} considers $g=f, u=v_1', v =v_k'$, $\mathsf{conf}'$ satisfies the property $\mathsf{prop}$. As there can be more than one path between $v_1'$ and $v_k'$ that satisfies the property $\mathsf{prop}$, Algorithm~\ref{alg:sparseGraph} either adds $\mathsf{conf}'$ or any other path between $v_1'$ and $v_k'$ satisfying the property $\mathsf{prop}$. If the algorithm adds $\mathsf{conf}'$, then $\mathsf{conf}^* = \mathsf{conf}'$ and we are done. 

Otherwise, let $\widehat{\mathsf{conf}}=(w_1=v_1',w_2,\ldots,w_k=v_k')$ be the configuration added by the algorithm. As $\widehat{\mathsf{conf}}$ satisfies the property $\mathsf{prop}$, for every $1 \leq a \leq k$, $g(w_a)= 2k-a$. Also, observe that $w_k = v_k' = v_1$ and $v_k'' = v_1' = w_1$, so by Condition~(\ref{prop:oC3}) of Lemma~\ref{lem:orderedColoring}, both $(\mathsf{conf},\widehat{\mathsf{conf}})$ and $(\widehat{\mathsf{conf}},\mathsf{conf}'')$ are $(k-1)$-transitions. So, we can take $\mathsf{conf}^* = \widehat{\mathsf{conf}}$ and we are done. 
\end{proof}
We now give the following lemma which relates paths in the $(k-1)$-configuration graph and a $(k-1)$-sparse configuration graph.

\begin{lemma}\label{lem:gkToSparsek}
Let $\mathbf{SG}=\langle G,k,\mathsf{init},\mathsf{fin}\rangle$ be an instance of {\sc Snake Game} and $t \in \mathbb{N}$. Let $\mathcal{C}$ be the $(k-1)$-configuration graph and $\mathcal{C}'$ be a $(k-1)$-sparse configuration graph of $\mathbf{SG}$. Let $\mathsf{conf}_s, \mathsf{conf}_e \in V(\mathcal{C})$. Then, there exists a path $P = (\mathsf{conf}_1 = \mathsf{conf}_s, \mathsf{conf}_2, \ldots, \mathsf{conf}_{t+1} = \mathsf{conf}_e)$ of length $t$ from $\mathsf{conf}_s$ to $\mathsf{conf}_e$ in $\mathcal{C}$ if and only if there exists a path $P' = (\mathsf{conf}'_1 = \mathsf{conf}_s, \mathsf{conf}'_2, \ldots, \mathsf{conf}'_t, \mathsf{conf}'_{t+1} = \mathsf{conf}_e)$ of length $t$ from $\mathsf{conf}_s$ to $\mathsf{conf}_e$ in $\mathcal{C}$ such that $\mathsf{conf}'_i \in V(\mathcal{C}')$ for every $2 \leq i \leq t$.
\end{lemma}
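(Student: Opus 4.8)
The plan is to prove the two directions separately, with the forward direction being the essentially trivial one and the backward direction carrying all the weight.

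For the backward direction $(\Leftarrow)$: suppose $P' = (\mathsf{conf}'_1 = \mathsf{conf}_s, \ldots, \mathsf{conf}'_{t+1} = \mathsf{conf}_e)$ is a path of length $t$ in $\mathcal{C}$ with $\mathsf{conf}'_i \in V(\mathcal{C}')$ for $2 \leq i \leq t$. By the definition of the $(k-1)$-configuration graph $\mathcal{C}$, each consecutive pair $(\mathsf{conf}'_i, \mathsf{conf}'_{i+1})$ being an arc of $\mathcal{C}$ just means it is a $(k-1)$-transition, so $P'$ itself is already a path of length $t$ from $\mathsf{conf}_s$ to $\mathsf{conf}_e$ in $\mathcal{C}$, and we may take $P = P'$. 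Thus this direction requires essentially nothing beyond unwinding definitions. (In fact the statement of the lemma is slightly unusual in that $P'$ is already stipulated to live in $\mathcal{C}$; the content is entirely in producing such a $P'$ from an arbitrary $P$.)

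For the forward direction $(\Rightarrow)$: we are given an arbitrary path $P = (\mathsf{conf}_1 = \mathsf{conf}_s, \ldots, \mathsf{conf}_{t+1} = \mathsf{conf}_e)$ of length $t$ in $\mathcal{C}$, and we must produce one whose interior vertices lie in $V(\mathcal{C}')$. The natural approach is to rewrite the interior vertices one at a time, from $\mathsf{conf}_2$ up to $\mathsf{conf}_t$, each time invoking Lemma~\ref{lem:pgkToSparsek}(ii). Concretely, I would argue by induction on the index $j \in \{2, \ldots, t\}$ that there is a path $(\mathsf{conf}_s = \widehat{\mathsf{conf}}_1, \widehat{\mathsf{conf}}_2, \ldots, \widehat{\mathsf{conf}}_{t+1} = \mathsf{conf}_e)$ of length $t$ in $\mathcal{C}$ with $\widehat{\mathsf{conf}}_i \in V(\mathcal{C}')$ for all $2 \leq i \leq j$. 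For the inductive step we look at the three consecutive configurations $\widehat{\mathsf{conf}}_j, \widehat{\mathsf{conf}}_{j+1}, \widehat{\mathsf{conf}}_{j+2}$: since consecutive pairs along the path are $(k-1)$-transitions, both $(\widehat{\mathsf{conf}}_j, \widehat{\mathsf{conf}}_{j+1})$ and $(\widehat{\mathsf{conf}}_{j+1}, \widehat{\mathsf{conf}}_{j+2})$ are $(k-1)$-transitions, so Lemma~\ref{lem:pgkToSparsek}(ii) yields $\mathsf{conf}^* \in V(\mathcal{C}')$ with both $(\widehat{\mathsf{conf}}_j, \mathsf{conf}^*)$ and $(\mathsf{conf}^*, \widehat{\mathsf{conf}}_{j+2})$ being $(k-1)$-transitions; we replace $\widehat{\mathsf{conf}}_{j+1}$ by $\mathsf{conf}^*$ and leave everything else untouched. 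The new tuple is still a walk of length $t$ in $\mathcal{C}$ from $\mathsf{conf}_s$ to $\mathsf{conf}_e$, and its first $j+1$ non-terminal (or terminal-where-relevant) entries now lie in $V(\mathcal{C}')$; iterating up to $j = t$ gives a walk all of whose interior vertices lie in $V(\mathcal{C}')$, which is the desired $P'$.

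The main obstacle — really the only subtlety — is that the object produced by this rewriting is a \emph{walk}, not necessarily a \emph{simple path}: a replacement could in principle introduce a repeated configuration. I would handle this by observing that if the resulting walk is not simple, we can shortcut it by deleting the segment between any two equal configurations, which strictly decreases its length while keeping endpoints and the property that all interior vertices lie in $V(\mathcal{C}')$; repeating this yields a genuine path of length $t' \leq t$. If the lemma genuinely requires length exactly $t$, I would instead note that the $(k-1)$-configuration graph (and its sparse version) admits self-loops is not the fix — rather, one either (a) reads "path" as "walk" consistently with how it will be used downstream (reachability within $t$ steps), or (b) pads a shorter path back up to length $t$ using the fact, already implicit in the reachability setup, that a configuration can reach itself; I would check which convention the surrounding text adopts and phrase the conclusion accordingly. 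Modulo this bookkeeping, the proof is a clean induction driven entirely by Lemma~\ref{lem:pgkToSparsek}(ii).
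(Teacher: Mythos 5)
Your proposal is correct and takes essentially the same route as the paper: the backward direction is immediate (take $P=P'$), and the forward direction rewrites the interior configurations one triplet at a time via Lemma~\ref{lem:pgkToSparsek}(ii), which is exactly the paper's iterative transformation. Your aside about walks versus simple paths flags a point the paper glosses over (``it is easy to see that $P^*$ is also a path''); the paper in fact uses ``path'' in the walk sense throughout (as it must, since transitions in the configuration graph may revisit configurations), so option~(a) of your bookkeeping is the intended reading and no shortcutting or padding is needed.
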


\begin{proof}
We first prove the if part. The if part is straightforward: from Condition~(\ref{prop:gk1}) of Lemma~\ref{lem:pgkToSparsek}, $V(\mathcal{C}') \subseteq V(\mathcal{C})$ and $E(\mathcal{C}') \subseteq E(\mathcal{C})$, so take $P=P'$.

We now prove the only if part. Let $P = (\mathsf{conf}_1 = \mathsf{conf}_s, \mathsf{conf}_2, \ldots, \mathsf{conf}_{t+1} = \mathsf{conf}_e)$ be a path of length $t$ from $\mathsf{conf}_s$ to $\mathsf{conf}_e$ in $\mathcal{C}$. As $P$ is a path in $\mathcal{C}$ so, for every $1 \leq i \leq t$, $(\mathsf{conf}_i, \mathsf{conf}_{i+1})$ is a $(k-1)$-transition. Now consider the triplet $(\mathsf{conf}_1, \mathsf{conf}_2, \mathsf{conf}_3)$. From Condition~(\ref{prop:gk2}) of Lemma~\ref{lem:pgkToSparsek}, there exists a configuration $\mathsf{conf}'_2 \in V(\mathcal{C}')$ such that both $(\mathsf{conf}_1,\mathsf{conf}'_2)$ and $(\mathsf{conf}'_2,\mathsf{conf}_3)$ are $(k-1)$-transitions. Now we construct a new path $P^*$ from $P$ by replacing $\mathsf{conf}_2$ with $\mathsf{conf}'_2$. It is easy to see that $P^*$ is also a path in $\mathcal{C}$. So, by iteratively taking triplet $(\mathsf{conf}_i, \mathsf{conf}_{i+1},\mathsf{conf}_{i+2})$ for every $1 \leq i \leq (t-1)$ and applying this transformation, we get a new path $P' = (\mathsf{conf}'_1 = \mathsf{conf}_s, \mathsf{conf}'_2, \ldots, \mathsf{conf}'_{t+1} = \mathsf{conf}_e)$ of length $t$ in $\mathcal{C}$ such that $\mathsf{conf}'_i \in V(\mathcal{C}')$ for every $2 \leq i \leq t$. 
\end{proof}

By putting $\mathsf{conf}_s = \mathsf{init}, \mathsf{conf}_e = \mathsf{fin}$ in Lemma~\ref{lem:gkToSparsek}, we have the following simple corollary. 

\begin{corollary}\label{cor:kToSparsek}
Let $\mathbf{SG}=\langle G,k,\mathsf{init},\mathsf{fin}\rangle$ be an instance of {\sc Snake Game} and $t \in \mathbb{N}$. Let $\mathcal{C}$ be the $(k-1)$-configuration graph and $\mathcal{C}'$ be a $(k-1)$-sparse configuration graph of $\mathbf{SG}$. Then, there exists a path of length $t$ from $\mathsf{init}$ to $\mathsf{fin}$ in $\mathcal{C}$ if and only if there exists a path of length $t$ from $\mathsf{init}$ to $\mathsf{fin}$ in $\mathcal{C}'$
\end{corollary}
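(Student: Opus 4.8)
The plan is to derive the statement directly from Lemma~\ref{lem:gkToSparsek} together with the structural facts established in Lemma~\ref{lem:pgkToSparsek}, so the proof is essentially bookkeeping. First I would instantiate Lemma~\ref{lem:gkToSparsek} with $\mathsf{conf}_s = \mathsf{init}$ and $\mathsf{conf}_e = \mathsf{fin}$; these are legitimate vertices of $\mathcal{C}$ since the vertex set of the $(k-1)$-configuration graph consists of all $(G,k)$-configurations. This instantiation yields that there is a path of length $t$ from $\mathsf{init}$ to $\mathsf{fin}$ in $\mathcal{C}$ if and only if there is a path $P' = (\mathsf{conf}'_1 = \mathsf{init}, \mathsf{conf}'_2, \ldots, \mathsf{conf}'_{t+1} = \mathsf{fin})$ of length $t$ in $\mathcal{C}$ with $\mathsf{conf}'_i \in V(\mathcal{C}')$ for every $2 \leq i \leq t$.

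Next I would upgrade this to a statement about $\mathcal{C}'$ itself. By the initialization step of Algorithm~\ref{alg:sparseGraph}, both endpoints $\mathsf{init}$ and $\mathsf{fin}$ belong to $V(\mathcal{C}')$, and hence \emph{every} vertex of $P'$ lies in $V(\mathcal{C}')$. By Condition~(\ref{prop:gk1}) of Lemma~\ref{lem:pgkToSparsek}, $\mathcal{C}'$ is an induced subgraph of $\mathcal{C}$, so every arc of $P'$ (each being an arc of $\mathcal{C}$ between two vertices of $V(\mathcal{C}')$) is also an arc of $\mathcal{C}'$. Therefore $P'$ is a path of length $t$ from $\mathsf{init}$ to $\mathsf{fin}$ in $\mathcal{C}'$, giving the forward direction. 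For the converse, any path of length $t$ from $\mathsf{init}$ to $\mathsf{fin}$ in $\mathcal{C}'$ is also such a path in $\mathcal{C}$, again because $\mathcal{C}'$ is a subgraph of $\mathcal{C}$ by Condition~(\ref{prop:gk1}) of Lemma~\ref{lem:pgkToSparsek}; composing with the ``only if'' direction of Lemma~\ref{lem:gkToSparsek} (or simply using the subgraph inclusion directly) completes the equivalence.

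I do not anticipate any real obstacle, since the corollary is a direct specialization of Lemma~\ref{lem:gkToSparsek}. The only point worth a sentence is the degenerate case $t = 1$, in which $P'$ has no internal vertices: there the path is a single arc from $\mathsf{init}$ to $\mathsf{fin}$, both of which lie in $V(\mathcal{C}')$, and the arc is present in $\mathcal{C}'$ by the induced-subgraph property, so the argument applies verbatim.
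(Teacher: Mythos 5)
Your proposal is correct and follows exactly the route the paper intends: the paper derives the corollary by instantiating Lemma~\ref{lem:gkToSparsek} with $\mathsf{conf}_s = \mathsf{init}$ and $\mathsf{conf}_e = \mathsf{fin}$, and the details you supply (that $\mathsf{init},\mathsf{fin} \in V(\mathcal{C}')$ by the initialization of Algorithm~\ref{alg:sparseGraph}, and that the induced-subgraph property from Lemma~\ref{lem:pgkToSparsek} transfers the arcs) are precisely the implicit steps. Nothing to add.
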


Given an instance of {\sc Snake Game} $\mathbf{SG}=\langle G,k,\mathsf{init},\mathsf{fin}\rangle$, we now show how we use its $(k-1)$-configuration graph to solve it. By Observation~\ref{obs:confGraph}, we know that $\mathbf{SG}$ is a \yes-instance if and only if there exists a path from $\mathsf{init}$ to $\mathsf{fin}$ in the $1$-configuration graph. It is easy to see that every path of length $(k-1)$ between two configurations in the $1$-configuration graph is compressed to a path of length $1$ in the $(k-1)$-configuration graph. If the answer to an instance of {\sc Snake Game} is \yes, then there may or may not exist a path from $\mathsf{init}$ to $\mathsf{fin}$ in the $1$-configuration graph whose length is a multiple of $(k-1)$. We first consider the case where the length of the path is a multiple of $(k-1)$ (Section~\ref{sec:multiple}), and then extend the result for the case where the length of the path is not a multiple of $(k-1)$ (Section~\ref{sec:notMultiple}).

\subsection{When the Length of the Path is a multiple of $(k-1)$}\label{sec:multiple}

In this subsection, we consider the case where the length of the path from $\mathsf{init}$ to $\mathsf{fin}$ in the $1$-configuration graph is a multiple of $(k-1)$. We give the following lemma which relates paths in the $1$-configuration graph and the $(k-1)$-configuration graph whose lengths are multiples of $(k-1)$.

\begin{lemma}\label{lem:g1Tok}
Let $\mathbf{SG}=\langle G,k,\mathsf{init},\mathsf{fin}\rangle$ be an instance of {\sc Snake Game} and $t \in \mathbb{N}$. Let $\mathcal{C}$ be the $(k-1)$-configuration graph of $\mathbf{SG}$. Let $\mathsf{conf}_s$ and $\mathsf{conf}_e$ be any two configurations of $G$. Then, there exists a path of length $t(k-1)$ from $\mathsf{conf}_s$ to $\mathsf{conf}_e$ in the $1$-configuration graph if and only if there exists a path of length $t$ from $\mathsf{conf}_s$ to $\mathsf{conf}_e$ in $\mathcal{C}$.
\end{lemma}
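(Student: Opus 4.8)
The plan is to prove both directions by exhibiting explicit correspondences between paths, using the characterization of $(k-1)$-transitions from Corollary~\ref{cor:(k-1)Transition} and, more fundamentally, Definition~\ref{def:MoveToConfInEll}, which already tells us that a $(k-1)$-transition in the $1$-configuration graph is \emph{by definition} witnessed by a tuple of $k$ intermediate configurations linked by $1$-transitions.

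For the ``only if'' direction, suppose $Q = (\mathsf{conf}_0 = \mathsf{conf}_s, \mathsf{conf}_1, \ldots, \mathsf{conf}_{t(k-1)} = \mathsf{conf}_e)$ is a path of length $t(k-1)$ in the $1$-configuration graph, so each consecutive pair $(\mathsf{conf}_j, \mathsf{conf}_{j+1})$ is a $1$-transition. I would break $Q$ into $t$ consecutive blocks, the $b$-th block being the subpath from $\mathsf{conf}_{(b-1)(k-1)}$ to $\mathsf{conf}_{b(k-1)}$. Since a subpath of $k$ configurations joined by $1$-transitions is precisely an $(k-1)$-transition by Definition~\ref{def:MoveToConfInEll}, each pair $(\mathsf{conf}_{(b-1)(k-1)}, \mathsf{conf}_{b(k-1)})$ is a $(k-1)$-transition, hence an arc in $\mathcal{C}$. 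Thus $(\mathsf{conf}_{0}, \mathsf{conf}_{k-1}, \mathsf{conf}_{2(k-1)}, \ldots, \mathsf{conf}_{t(k-1)})$ is a walk of length $t$ from $\mathsf{conf}_s$ to $\mathsf{conf}_e$ in $\mathcal{C}$. A walk suffices to establish the existence of a path of the same endpoints but possibly shorter length; if the lemma insists on a path of length \emph{exactly} $t$, I would need the converse direction to upgrade the walk, or I would simply note that the statement ``there exists a path of length $t$'' should be read as ``there exists a walk of length $t$'' (the standard reading in reachability arguments), which the symmetric argument below will make rigorous in both directions.

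For the ``if'' direction, suppose $P = (\mathsf{conf}_1 = \mathsf{conf}_s, \mathsf{conf}_2, \ldots, \mathsf{conf}_{t+1} = \mathsf{conf}_e)$ is a path of length $t$ in $\mathcal{C}$, so each $(\mathsf{conf}_i, \mathsf{conf}_{i+1})$ is a $(k-1)$-transition. By Definition~\ref{def:MoveToConfInEll} applied to each such pair, there is a tuple of $k$ configurations $(\mathsf{conf}_i = \mathsf{D}_{i,1}, \mathsf{D}_{i,2}, \ldots, \mathsf{D}_{i,k} = \mathsf{conf}_{i+1})$ with every consecutive pair a $1$-transition. I would concatenate these $t$ tuples, identifying the shared endpoint $\mathsf{D}_{i,k} = \mathsf{D}_{i+1,1} = \mathsf{conf}_{i+1}$, to obtain a single sequence of $t(k-1)+1$ configurations from $\mathsf{conf}_s$ to $\mathsf{conf}_e$ in which every consecutive pair is a $1$-transition; this is a walk (indeed a $t(k-1)$-transition) in the $1$-configuration graph. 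Again, any walk of length $t(k-1)$ between two vertices certifies a path between them, and with the length-exactly reading interpreted as ``walk of length exactly $t(k-1)$,'' this is precisely what the concatenation delivers.

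The only genuine subtlety—hardly an obstacle, but the one point requiring care—is the word ``path'' versus ``walk'': neither the block-decomposition nor the block-concatenation automatically yields a \emph{simple} path, since intermediate configurations could repeat. I expect the cleanest fix is to observe that the statement is really about reachability in exactly $t$ (resp.\ $t(k-1)$) steps, so that ``path of length $\ell$'' throughout this section should be understood as ``walk of length $\ell$'' (equivalently, the existence of an $\ell$-transition); the two arguments above then match perfectly and give an ``if and only if.'' If simplicity is truly required, one would additionally note that from a walk of length $\ell$ one cannot in general extract a shorter \emph{path} of the prescribed length, which is exactly why the section is phrased in terms of fixed step counts—so the ``walk'' reading is forced and consistent. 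With that convention fixed, both directions are just the block/concatenation bookkeeping described above, and no further ingredients beyond Definition~\ref{def:MoveToConfInEll} are needed.
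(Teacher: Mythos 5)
Your proposal is correct and follows essentially the same route as the paper: the forward direction contracts the walk in the $1$-configuration graph into blocks of length $k-1$, each of which is a $(k-1)$-transition by Definition~\ref{def:MoveToConfInEll}, and the reverse direction expands each arc of $\mathcal{C}$ back into its witnessing sequence of $1$-transitions. The path-versus-walk subtlety you flag is real but is glossed over by the paper itself, and your resolution---reading ``path of length $\ell$'' as ``walk of length $\ell$,'' i.e.\ the existence of an $\ell$-transition---is exactly the reading the rest of the section relies on.
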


\begin{proof}
$(\Rightarrow)$ Suppose that there exists a path $P = (\mathsf{conf}_1 = \mathsf{conf}_s, \ldots, \mathsf{conf}_k, \ldots, \mathsf{conf}_{2k-1}, \ldots,$ $\mathsf{conf}_{t(k-1)+1} = \mathsf{conf}_e)$ of length $t(k-1)$ from $\mathsf{conf}_s$ to $\mathsf{conf}_e$ in the $1$-configuration graph. For every $i = 1 + r(k-1)$ where $0 \leq r \leq (t-1)$, we replace the subpaths $(\mathsf{conf}_i, \mathsf{conf}_{i+1}, \ldots, \mathsf{conf}_{i+k-1})$ of length $(k-1)$ in $P$ with $(\mathsf{conf}_i, \mathsf{conf}_{i+k-1})$ iteratively to obtain a new path $P' = (\mathsf{conf}'_1=\mathsf{conf}_s, \mathsf{conf}'_2 = \mathsf{conf}_k, \mathsf{conf}'_3 = \mathsf{conf}_{2k-1}, \ldots, \mathsf{conf}'_{t+1}=\mathsf{conf}_e)$. Observe that every pair $(\mathsf{conf}'_i,$ $\mathsf{conf}'_{i+1})$ in $P'$ is a $(k-1)$-transition, so $P'$ is a path of length $t$ from $\mathsf{conf}_s$ to $\mathsf{conf}_e$ in $\mathcal{C}$.

\noindent$(\Leftarrow)$ Suppose that there exists a path $P = (\mathsf{conf}_1=\mathsf{conf}_s,\mathsf{conf}_2,\ldots,\mathsf{conf}_{t+1}=\mathsf{conf}_e)$ of length $t$ from $\mathsf{conf}_s$ to $\mathsf{conf}_e$ in $\mathcal{C}$. As every arc $(\mathsf{conf}_i,\mathsf{conf}_{i+1})$ in path $P$ is a $(k-1)$-transition, we can replace every arc in $P$ with the corresponding path of length $(k-1)$. By applying this transformation, we get a new path $P' = (\mathsf{conf}_1' = \mathsf{conf}_s, \ldots, \mathsf{conf}_k' = \mathsf{conf}_2, \ldots, \mathsf{conf}_{2k-1}' = \mathsf{conf}_3, \ldots,$ $\mathsf{conf}_{t(k-1)+1}' = \mathsf{conf}_e)$ of length $t(k-1)$ where every pair $(\mathsf{conf}_i',\mathsf{conf}_{i+1}')$ in $P'$ is a $1$-transition, so $P'$ is a path from $\mathsf{conf}_s$ to $\mathsf{conf}_e$ in the $1$-configuration graph. 
\end{proof}

By putting $\mathsf{conf}_s = \mathsf{init}$ and $\mathsf{conf}_e = \mathsf{fin}$ in Lemma~\ref{lem:g1Tok}, we have the following simple corollary. 

\begin{corollary}\label{cor:1Tok}
Let $\mathbf{SG}=\langle G,k,\mathsf{init},\mathsf{fin}\rangle$ be an instance of {\sc Snake Game} and $t \in \mathbb{N}$. Let $\mathcal{C}$ be the $(k-1)$-configuration graph of $\mathbf{SG}$. Then, there exists a path of length $t(k-1)$ from $\mathsf{init}$ to $\mathsf{fin}$ in the $1$-configuration graph if and only if there exists a path of length $t$ from $\mathsf{init}$ to $\mathsf{fin}$ in $\mathcal{C}$.
\end{corollary}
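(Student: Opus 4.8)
The plan is to obtain this statement as an immediate specialization of Lemma~\ref{lem:g1Tok}. The only thing to observe first is that $\mathsf{init}$ and $\mathsf{fin}$ are, by the definition of an instance of {\sc Snake Game} (Definition~\ref{def:SnakeGame}), two $(G,k)$-configurations, and hence both are vertices of the $(k-1)$-configuration graph $\mathcal{C}$ by Definition~\ref{def:confGraph}. Thus they are legitimate choices for the two arbitrary configurations $\mathsf{conf}_s$ and $\mathsf{conf}_e$ appearing in the hypothesis of Lemma~\ref{lem:g1Tok}.

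The single step of the proof is then to invoke Lemma~\ref{lem:g1Tok} with $\mathsf{conf}_s := \mathsf{init}$ and $\mathsf{conf}_e := \mathsf{fin}$. The ``if and only if'' statement of that lemma specializes verbatim to: there exists a path of length $t(k-1)$ from $\mathsf{init}$ to $\mathsf{fin}$ in the $1$-configuration graph if and only if there exists a path of length $t$ from $\mathsf{init}$ to $\mathsf{fin}$ in $\mathcal{C}$, which is exactly the claimed corollary. No additional argument, case analysis, or estimate is needed.

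Accordingly, there is no real obstacle here; the corollary is a pure instantiation. The work has already been carried out in the proof of Lemma~\ref{lem:g1Tok} (via the compression of every $(k-1)$-length subpath of a $1$-configuration-graph walk into a single $(k-1)$-transition arc of $\mathcal{C}$, and conversely the expansion of each arc of $\mathcal{C}$ into its witnessing $(k-1)$-transition). The only mild point worth spelling out explicitly, if one wants the proof to be fully self-contained, is the membership $\mathsf{init},\mathsf{fin}\in V(\mathcal{C})$ noted above; everything else follows by direct substitution.
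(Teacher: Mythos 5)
Your proof is correct and matches the paper exactly: the paper obtains this corollary by substituting $\mathsf{conf}_s = \mathsf{init}$ and $\mathsf{conf}_e = \mathsf{fin}$ into Lemma~\ref{lem:g1Tok}, precisely as you do. Your added remark that $\mathsf{init}$ and $\mathsf{fin}$ are legitimate configurations (hence vertices of $\mathcal{C}$) is a harmless, slightly more careful touch.
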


The combination of Corollaries~\ref{cor:kToSparsek} and~\ref{cor:1Tok} gives us the following observation.

\begin{observation}\label{obs:1ToSparsek}
Let $\mathbf{SG}=\langle G,k,\mathsf{init},\mathsf{fin}\rangle$ be an instance of {\sc Snake Game} and $t \in \mathbb{N}$. Let $\mathcal{C}'$ be a $(k-1)$-sparse configuration graph of $\mathbf{SG}$. Then, there exists a path of length $t(k-1)$ from $\mathsf{init}$ to $\mathsf{fin}$ in the $1$-configuration graph if and only if there exists a path of length $t$ from $\mathsf{init}$ to $\mathsf{fin}$ in $\mathcal{C}'$.
\end{observation}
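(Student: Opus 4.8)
The plan is to obtain Observation~\ref{obs:1ToSparsek} directly by chaining the two corollaries that immediately precede it, treating $t$ as fixed throughout. First I would invoke Corollary~\ref{cor:1Tok}, which states precisely that there is a path of length $t(k-1)$ from $\mathsf{init}$ to $\mathsf{fin}$ in the $1$-configuration graph if and only if there is a path of length $t$ from $\mathsf{init}$ to $\mathsf{fin}$ in the $(k-1)$-configuration graph $\mathcal{C}$ of $\mathbf{SG}$. This is exactly the left-hand side of the claimed equivalence, restated in terms of $\mathcal{C}$ rather than the $1$-configuration graph.

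Next I would apply Corollary~\ref{cor:kToSparsek}, which asserts that, for the same $\mathbf{SG}$ and the same $t$, there is a path of length $t$ from $\mathsf{init}$ to $\mathsf{fin}$ in $\mathcal{C}$ if and only if there is a path of length $t$ from $\mathsf{init}$ to $\mathsf{fin}$ in the $(k-1)$-sparse configuration graph $\mathcal{C}'$. Composing the two equivalences — both of which are genuine ``if and only if'' statements, so transitivity applies without any loss — yields that a path of length $t(k-1)$ from $\mathsf{init}$ to $\mathsf{fin}$ in the $1$-configuration graph exists if and only if a path of length $t$ from $\mathsf{init}$ to $\mathsf{fin}$ in $\mathcal{C}'$ exists, which is exactly the statement of the observation.

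Since both corollaries hold for an arbitrary but fixed $t \in \mathbb{N}$ and both preserve the exact path length on each side (rather than merely the existence of some path), there is essentially no obstacle: the only point to verify is that the graph $\mathcal{C}$ appearing in the conclusion of Corollary~\ref{cor:1Tok} is the same object as the $\mathcal{C}$ appearing in the hypothesis of Corollary~\ref{cor:kToSparsek}, namely the $(k-1)$-configuration graph of $\mathbf{SG}$ — which it is by definition. Hence the proof is a one-line transitivity argument that introduces no new ideas beyond what has already been established.
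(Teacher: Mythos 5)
Your proof is correct and is exactly the argument the paper intends: the observation is stated there as an immediate consequence of combining Corollaries~\ref{cor:1Tok} and~\ref{cor:kToSparsek}, which is precisely your transitivity chain. Nothing further is needed.
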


\subsection{When the Length of the Path is not a multiple of $(k-1)$}\label{sec:notMultiple}

We now consider the case where the length of the path from $\mathsf{init}$ to $\mathsf{fin}$ in the $1$-configuration graph is not a multiple of $(k-1)$. Let $r \in \mathbb{N}$ such that $r < k-1$. Let $\mathcal{C}$ be the $(k-1)$-configuration graph and $\mathcal{C}'$ be a $(k-1)$-sparse configuration graph of $\mathbf{SG}$. Let $P = (\mathsf{conf}_1=\mathsf{init},\ldots,\mathsf{conf}_k,\ldots,\mathsf{conf}_{2k-1},\ldots,\mathsf{conf}_{t(k-1)+(r+1)}=\mathsf{fin})$ be a path from $\mathsf{init}$ to $\mathsf{fin}$ in the $1$-configuration graph. Note that the length of the path $P$ is $t(k-1) + r$, which is not a multiple of $k-1$. Consider the subpath $P' = (\mathsf{conf}_1=\mathsf{init},\ldots,\mathsf{conf}_k,\ldots,\mathsf{conf}_{2k-1},\ldots,$ $\mathsf{conf}_{t(k-1)+1})$ of $P$ in the $1$-configuration graph. By Lemmas~\ref{lem:gkToSparsek} and~\ref{lem:g1Tok}, there exists a path $P'' = (\mathsf{conf}'_1=\mathsf{init},\mathsf{conf}'_2, \mathsf{conf}'_3, \ldots, \mathsf{conf}'_{t+1} = \mathsf{conf}_{t(k-1)+1})$ of length $t$ in $\mathcal{C}$ such that $\mathsf{conf}'_i \in V(\mathcal{C}')$ for every $1 \leq i \leq t$.

Now, consider the path $P^* = (\mathsf{conf}'_1 = \mathsf{init}, \mathsf{conf}'_2, \ldots, \mathsf{conf}'_{t+1}, \mathsf{conf}'_{t+2} = \mathsf{fin})$ defined on the vertex set $V = V(P'') \cup \{\mathsf{fin}\}$. Observe that, in $P^*$, all the vertices except possibly $\mathsf{conf}'_{t+1}$ belong to $V(\mathcal{C}')$. For every $1 \leq i \leq t$, $(\mathsf{conf}'_i,\mathsf{conf}'_{i+1})$ is a $(k-1)$-transition, but possibly $\mathsf{conf}'_{t+1} \notin V(\mathcal{C}')$, we only know that $(\mathsf{conf}'_i,\mathsf{conf}'_{i+1}) \in E(\mathcal{C}')$ for every $1 \leq i \leq t-1$. Moreover, $(\mathsf{conf}'_{t+1},\mathsf{conf}'_{t+2} = \mathsf{fin})$ is an $r$-transition so $(\mathsf{conf}'_{t+1}, \mathsf{fin}) \notin E(\mathcal{C}')$. To handle the vertex $\mathsf{conf}'_{t+1}$ and edges $(\mathsf{conf}'_t,\mathsf{conf}'_{t+1})$ and $(\mathsf{conf}'_{t+1},\mathsf{conf}'_{t+2} = \mathsf{fin})$, we now give a procedure (Algorithm~\ref{alg:gSparseGraph}) to construct a new graph called a {\em generalized $(k-1)$-sparse configuration graph} of $\mathbf{SG}$ based on a $(k-1)$-sparse configuration graph. We first give an outline of the procedure.

We first initialize $\mathcal{C}'$ to be a $(k-1)$-sparse configuration graph of $\mathbf{SG}$ and $\mathcal{G}$ to be a corresponding $(V(G),3k-2)$-permuter, returned by Algorithm~\ref{alg:sparseGraph}. Then, we initialize a new graph $\mathcal{C}''$ as $\mathcal{C}'$, i.e. all the vertices and edges of $\mathcal{C}'$ are present in $\mathcal{C}''$. In addition to these vertices and edges, for every function $g \in \mathcal{G}$ and every pair of vertices $(u,v)$ of $G$, such that $v \neq u$ and $u = f_{s+1}$ for some $1 \leq s \leq k-2$, we add at most a vertex and an edge as follows. If there exists a path $\widehat{P}$ of length $k-1$ between $u$ and $v$ such that $(\widehat{P}, \mathsf{fin})$ is a $s$-transition, considering $\widehat{P}$ as a configuration with $u$ as its head and while traversing the path $\widehat{P}$ from $u$ to $v$, the ordered set of images of the vertices of $\widehat{P}$ under $g$ is $\{2k-1, 2k-1, \ldots, k\}$, then we add $\widehat{P}$ as a configuration with $u$ as its head to $V(\mathcal{C}'')$ and the ordered pair $(\widehat{P}, \mathsf{fin})$ as a directed edge to $E(\mathcal{C}'')$. In the end, we return $\mathcal{C}''$.

\begin{algorithm}
    \SetKwInOut{Input}{Input}
    \SetKwInOut{Output}{Output}
	\medskip
    {\textbf{function} generalizedSparseConfigurationGraph}$(\mathbf{SG},k)$\;
    	let $\mathsf{fin} = (f_1,f_2,\ldots,f_k)$\;
    	initialize $(\mathcal{C}', \mathcal{G})$ = sparseConfigurationGraph$(\mathbf{SG},k)$\;
    	initialize $\mathcal{C}'' = \mathcal{C}'$\;  
	    \For{each $g \in \mathcal{G}$}{
	    	\For{every vertex $u \in V(G)$ such that $u = f_{s+1}$ for some $1 \leq s \leq k-2$}{
		    	\For{every vertex $v \in V(G)$ such that $v \neq u$}{
		    		\If{there is a (simple) path $\widehat{P} = (w_1=u, $ $w_2, \ldots,w_k=v)$ of length $k-1$ between $u$ and $v$ in $G$ such that, for every $1 \leq a \leq k$, $g(w_a)= 2k-a$ and for every $1 \leq a \leq k-s$, $w_a = f_{a+s}$}{
		    			\label{li:pathPe}
		    			add $\widehat{P}$ as a configuration with $u$ as its head to $V(\mathcal{C}'')$\;
		    			add ordered pair $(P, \mathsf{fin})$ as a directed edge in $E(\mathcal{C}'')$;
		    		}
		    	}
	    	}
	    	
		}
		\Return $\mathcal{C}''$.
    \caption{Construction of a generalized $(k-1)$-sparse configuration graph.}
    \label{alg:gSparseGraph}
\end{algorithm}

Similarly to the notion of a $(k-1)$-sparse configuration graph, we now define the notion of a {\em generalized $(k-1)$-sparse configuration graph} as follows.

\begin{definition}
Let $\mathbf{SG}=\langle G,k,\mathsf{init},\mathsf{fin}\rangle$ be an instance of {\sc Snake Game}. Then, any graph returned by Algorithm~\ref{alg:gSparseGraph} is called a {\em generalized $(k-1)$-sparse configuration graph}.
\end{definition}

We give the following lemma which relates paths in the $1$-configuration graph and a generalized $(k-1)$-sparse configuration graph whose lengths are not multiples of $(k-1)$.

\begin{lemma}\label{lem:1TogSparsek}
Let $\mathbf{SG}=\langle G,k,\mathsf{init},\mathsf{fin}\rangle$ be an instance of {\sc Snake Game} and $t, r \in \mathbb{N}$ such that $r < k-1$. Let $\mathcal{C}''$ be a generalized $(k-1)$-sparse configuration graph of $\mathbf{SG}$. Then, there exists a path of length $t(k-1) + r$ from $\mathsf{init}$ to $\mathsf{fin}$ in the $1$-configuration graph if and only if there exists a path of length $t+1$ from $\mathsf{init}$ to $\mathsf{fin}$ in $\mathcal{C}''$.
\end{lemma}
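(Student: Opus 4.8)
<br>

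The plan is to prove both directions by carefully tracking how the ``leftover'' portion of a path of non-multiple length is handled. Recall we are given a path $P$ of length $t(k-1)+r$ from $\mathsf{init}$ to $\mathsf{fin}$ in the $1$-configuration graph, with $r < k-1$. I would first split $P$ at the vertex $\mathsf{conf}_{t(k-1)+1}$, obtaining a prefix $P'$ of length $t(k-1)$ (from $\mathsf{init}$ to $\mathsf{conf}_{t(k-1)+1}$) and a suffix of length $r$ (so $(\mathsf{conf}_{t(k-1)+1}, \mathsf{fin})$ is an $r$-transition). By Lemma~\ref{lem:g1Tok}, $P'$ corresponds to a path of length $t$ in the $(k-1)$-configuration graph $\mathcal{C}$, and by Lemma~\ref{lem:gkToSparsek} this path can be taken so that all of its interior vertices $\mathsf{conf}'_2,\ldots,\mathsf{conf}'_t$ (and in fact also $\mathsf{conf}'_{t+1}$, by re-examining how Lemma~\ref{lem:gkToSparsek} is applied on the last triplet — or simply noting the endpoint $\mathsf{conf}'_{t+1}$ need not yet be in $\mathcal{C}'$) lie in $V(\mathcal{C}')$. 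This gives a walk $\mathsf{init} = \mathsf{conf}'_1 \to \cdots \to \mathsf{conf}'_{t+1}$ where the first $t-1$ arcs are in $E(\mathcal{C}')$, and $(\mathsf{conf}'_{t+1}, \mathsf{fin})$ is an $r$-transition.

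The crux of the forward direction is then to replace the last arc $(\mathsf{conf}'_t, \mathsf{conf}'_{t+1})$ and the $r$-transition $(\mathsf{conf}'_{t+1}, \mathsf{fin})$ by arcs inside $\mathcal{C}''$. This is exactly what the extra vertices and edges added by Algorithm~\ref{alg:gSparseGraph} are for: I apply Lemma~\ref{lem:orderedColoring} to the triplet $(\mathsf{conf}'_t, \mathsf{conf}'_{t+1}, \mathsf{fin})$ — noting $(\mathsf{conf}'_t, \mathsf{conf}'_{t+1})$ is a $(k-1)$-transition and $(\mathsf{conf}'_{t+1}, \mathsf{fin})$ is an $r$-transition, with $\ell = k-1 \le k-1$ and $r \le k-1$ as required — to obtain a permuter function $g$ under which the triplet order is mapped appropriately and $g$ assigns $\mathsf{conf}'_{t+1}$ the labels $2k-1,\ldots,k$. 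Then, when Algorithm~\ref{alg:gSparseGraph} processes this $g$, the vertices $u = f_{s+1}$ with $s = r$ (so the head of $\mathsf{conf}'_{t+1}$ equals $f_{r+1}$, which holds since $(\mathsf{conf}'_{t+1},\mathsf{fin})$ is an $r$-transition and by Lemma~\ref{lem:confGraph} the last $k-r$ vertices of $\mathsf{fin}$ are the first $k-r$ of $\mathsf{conf}'_{t+1}$) and $v$ = tail of $\mathsf{conf}'_{t+1}$, the path $\mathsf{conf}'_{t+1}$ itself satisfies the condition on line~\ref{li:pathPe}, so some configuration $\widehat{P}$ (possibly $\mathsf{conf}'_{t+1}$ itself, possibly another) with the same labeling is added to $V(\mathcal{C}'')$ together with the arc $(\widehat{P}, \mathsf{fin})$. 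Applying Condition~(\ref{prop:oC3}) of Lemma~\ref{lem:orderedColoring} to $\mathsf{conf}^* = \widehat{P}$ shows $(\mathsf{conf}'_t, \widehat{P})$ is a $(k-1)$-transition (hence an arc of $\mathcal{C}''$ by construction, since both endpoints and the relevant head/tail triple are considered in Algorithm~\ref{alg:sparseGraph}, applied to $\mathcal{C}''$ which contains all of $\mathcal{C}'$) and $(\widehat{P}, \mathsf{fin})$ is an $r$-transition (and is explicitly an arc of $\mathcal{C}''$). Then $(\mathsf{conf}'_1,\ldots,\mathsf{conf}'_t,\widehat{P},\mathsf{fin})$ is a path of length $t+1$ in $\mathcal{C}''$.

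For the converse, given a path $Q = (\mathsf{conf}''_1 = \mathsf{init}, \ldots, \mathsf{conf}''_{t+2} = \mathsf{fin})$ of length $t+1$ in $\mathcal{C}''$, I observe that $\mathsf{fin}$ has no outgoing arcs in $\mathcal{C}''$ (Algorithm~\ref{alg:gSparseGraph} adds only arcs into $\mathsf{fin}$, and in $\mathcal{C}'$ an arc out of $\mathsf{fin}$ would be a genuine $(k-1)$-transition, which is fine — but $\mathsf{fin}$ appears only as the last vertex of $Q$ since $Q$ is a path, so this is moot). The last arc $(\mathsf{conf}''_{t+1}, \mathsf{fin})$ is either an ordinary arc of $\mathcal{C}'$, in which case it is a $(k-1)$-transition and $r$ would have to be $0$ contradicting $r \ge 1$ — so actually it must be one of the newly added arcs, hence $(\mathsf{conf}''_{t+1}, \mathsf{fin})$ is an $r$-transition for the specific $s = r$ used in its construction; I need to confirm $s$ is forced to equal $r$, which follows because the number of shared vertices between $\widehat P$ and $\mathsf{fin}$ is exactly $k-s$ by the labeling condition, and an $r$-transition shares exactly $k-r$ vertices by Lemma~\ref{lem:confGraph}. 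The prefix $(\mathsf{conf}''_1,\ldots,\mathsf{conf}''_{t+1})$ is a path of length $t$ in $\mathcal{C}''$; its first $t-1$ arcs lie in $\mathcal{C}'$ (interior vertices of this prefix other than the last are genuine $\mathcal{C}'$-vertices, as only configurations with an outgoing $\mathsf{fin}$-arc were newly added and those have no other outgoing arcs), and the last arc $(\mathsf{conf}''_t, \mathsf{conf}''_{t+1})$ is a $(k-1)$-transition. So this prefix is a path of length $t$ in $\mathcal{C}$, which by Lemma~\ref{lem:g1Tok} expands to a path of length $t(k-1)$ in the $1$-configuration graph from $\mathsf{init}$ to $\mathsf{conf}''_{t+1}$; appending the $r$-transition $(\mathsf{conf}''_{t+1},\mathsf{fin})$ (which expands to a path of length $r$ in the $1$-configuration graph, by definition of $r$-transition) yields a path of total length $t(k-1)+r$ from $\mathsf{init}$ to $\mathsf{fin}$.

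I expect the main obstacle to be the bookkeeping around which vertices of the intermediate paths are guaranteed to lie in $V(\mathcal{C}')$ versus $V(\mathcal{C}'')\setminus V(\mathcal{C}')$, and making precise that the newly added ``terminal'' configurations have no outgoing arcs other than to $\mathsf{fin}$ — this is what lets the argument localize the non-$\mathcal{C}'$ behaviour to the single penultimate vertex. A secondary subtlety is verifying that $s$ is uniquely determined by the $r$-transition (so that the algorithm's case $s = r$ is the relevant one) and that the head of $\mathsf{conf}'_{t+1}$ is indeed $f_{r+1}$; both follow from Lemma~\ref{lem:confGraph} applied to the $r$-transition, but need to be stated carefully. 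Everything else is a routine combination of Lemmas~\ref{lem:confGraph}, \ref{lem:orderedColoring}, \ref{lem:gkToSparsek} and~\ref{lem:g1Tok}.
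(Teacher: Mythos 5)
Your forward direction is essentially the paper's own proof: the same decomposition into a prefix of length $t(k-1)$ followed by an $r$-transition, the same use of Lemmas~\ref{lem:g1Tok} and~\ref{lem:gkToSparsek} to push the interior vertices into $V(\mathcal{C}')$, and the same application of Lemma~\ref{lem:orderedColoring} to the triplet $(\mathsf{conf}'_t,\mathsf{conf}'_{t+1},\mathsf{fin})$ (with the head of $\mathsf{conf}'_{t+1}$ identified as $f_{r+1}$ via Lemma~\ref{lem:confGraph}, so that line~\ref{li:pathPe} of Algorithm~\ref{alg:gSparseGraph} fires with $s=r$) to produce the replacement $\widehat{P}$. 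That part is correct, modulo one assumption you share with the paper: that the arc $(\mathsf{conf}'_t,\widehat{P})$ actually lies in $E(\mathcal{C}'')$ when $\widehat{P}$ is a newly added vertex. Algorithm~\ref{alg:gSparseGraph} as written only adds the arc $(\widehat{P},\mathsf{fin})$ and never re-runs the $(k-1)$-transition loop of Algorithm~\ref{alg:sparseGraph} on the enlarged vertex set, so your justification (``Algorithm~\ref{alg:sparseGraph}, applied to $\mathcal{C}''$'') does not describe what the construction does; the paper makes the same silent assumption, so this is an inherited defect rather than a new one, but you should not present it as following ``by construction.''

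The genuine gap is in your converse, which (notably) the paper does not attempt at all. Your argument hinges on forcing the last arc $(\mathsf{conf}''_{t+1},\mathsf{fin})$ to be a newly added $s$-arc with $s=r$, and you try to rule out the alternative by claiming ``$r$ would have to be $0$, contradiction.'' No such contradiction is derivable: the hypothesis is only that \emph{some} path of length $t+1$ from $\mathsf{init}$ to $\mathsf{fin}$ exists in $\mathcal{C}''$, and nothing ties that path to the particular value of $r$ fixed in the lemma. If the path consists entirely of $\mathcal{C}'$-arcs it expands to a walk of length $(t+1)(k-1)$, and if it ends in a newly added arc built for some $s\neq r$ it expands to length $t(k-1)+s$; in neither case do you obtain a path of length $t(k-1)+r$. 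Indeed, for fixed $(t,r)$ the biconditional is false in general (take an instance whose only route from $\mathsf{init}$ to $\mathsf{fin}$ has length exactly $(t+1)(k-1)$: then $\mathcal{C}''\supseteq\mathcal{C}'$ contains a path of length $t+1$ by Observation~\ref{obs:1ToSparsek}, but no path of length $t(k-1)+r$ exists for any $1\le r<k-1$). The converse can only be salvaged in the weaker, existential form actually needed downstream --- a path of length $t+1$ in $\mathcal{C}''$ yields a path of length $t(k-1)+s$ in the $1$-configuration graph for \emph{some} $0\le s\le k-1$ --- and your expansion argument does prove that weaker statement once the spurious forcing of $s=r$ is dropped.
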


\begin{proof}
Let $\mathcal{C}$ be the $(k-1)$-configuration graph of $\mathbf{SG}$. Let $P = (\mathsf{conf}_1=\mathsf{init},\ldots,\mathsf{conf}_k,\ldots,$ $\mathsf{conf}_{2k-1},\ldots,\mathsf{conf}_{t(k-1)+(r+1)}=\mathsf{fin})$ be a path from $\mathsf{init}$ to $\mathsf{fin}$ in the $1$-configuration graph. From the above discussion, we can get a path $P^* = (\mathsf{conf}'_1 = \mathsf{init}, \mathsf{conf}'_2, \ldots, \mathsf{conf}'_{t+1}, \mathsf{conf}'_{t+2} = \mathsf{fin})$ defined on the vertex set $V = V(P'') \cup \{\mathsf{fin}\}$ such that all the vertices except possibly $\mathsf{conf}'_{t+1}$ belong to $V(\mathcal{C}')$ and all the edges except possibly $(\mathsf{conf}'_t,\mathsf{conf}'_{t+1})$ and $(\mathsf{conf}'_{t+1},\mathsf{conf}'_{t+2} = \mathsf{fin})$ belong to $E(\mathcal{C}')$.

Observe that $\mathcal{C}'$ is a subgraph of $\mathcal{C}''$. So, if we can find a configuration $\mathsf{conf}^* \in V(\mathcal{C}'')$ such that $(\mathsf{conf}'_t, \mathsf{conf}^*)$ is a $(k-1)$-transition, $(\mathsf{conf}^*, \mathsf{conf}'_{t+2})$ is an $r$-transition, and $(\mathsf{conf}'_t, \mathsf{conf}^*)$ and $(\mathsf{conf}^*, \mathsf{conf}'_{t+2})$ belong to $E(\mathcal{C}'')$, then we can replace $\mathsf{conf}'_{t+1}$ with $\mathsf{conf}^*$ in $P^*$ to get a path from $\mathsf{init}$ to $\mathsf{fin}$ in $\mathcal{C}''$. 

Consider the triplet $(\mathsf{conf}'_t, \mathsf{conf}'_{t+1}, \mathsf{conf}'_{t+2} = \mathsf{fin})$. Let $\mathsf{conf}'_{t+1} = (v_1', v_2', \ldots, v_k')$ and $\mathsf{fin} = (f_1,f_2,\ldots,f_k)$. Let $W$ be the triplet order of $(\mathsf{conf}'_t, \mathsf{conf}'_{t+1}, \mathsf{fin})$ and $\mathcal{G}$ be a $(V(G),3k-2)$-permuter. As $(\mathsf{conf}'_t, \mathsf{conf}'_{t+1})$  is a $(k-1)$-transition and $(\mathsf{conf}'_{t+1}, \mathsf{fin})$ is an $r$-transition ($r < k-1$), by Condition~(\ref{prop:oC1}) of Lemma~\ref{lem:orderedColoring}, there exists a function $h \in \mathcal{G}$ that maps $W$ to an ordered set $\{i,i+1,\ldots,k,k+1,\ldots,2k-1,\ldots,j\}$ such that $1 \leq i \leq j \leq 3k-2$, $j-i+1 = |W|$ and for every $1 \leq a \leq k$, $h(v_a') = 2k-a$. Also, as $(\mathsf{conf}'_{t+1}, \mathsf{fin})$ is an $r$-transition ($r < k-1$), by Lemma~\ref{lem:confGraph}, $v_1' = f_{r+1}$. Let $\mathsf{prop}$ be the property of the path $\widehat{P}$ stated in line~\ref{li:pathPe} of Algorithm~\ref{alg:gSparseGraph}. So, when Algorithm~\ref{alg:gSparseGraph} considers $g=h, u=v_1'$ and $v =v_k'$, $\mathsf{conf}'_{t+1}$ satisfies the property $\mathsf{prop}$ for $s = r$. But there can be more than one path between $v_1'$ and $v_k'$ that satisfy the property $\mathsf{prop}$, so Algorithm~\ref{alg:gSparseGraph} either adds $\mathsf{conf}'_{t+1}$ or any other path between $v_1'$ and $v_k'$ satisfying the property $\mathsf{prop}$. If the algorithm adds $\mathsf{conf}'_{t+1}$, then $\mathsf{conf}^* = \mathsf{conf}'_{t+1}$ and we are done.

Otherwise, let $\widehat{\mathsf{conf}}=(w_1=v_1',w_2,\ldots,w_k=v_k')$ be the configuration added by the algorithm. As $\widehat{\mathsf{conf}}$ satisfies the property $\mathsf{prop}$ for $s = r$ so for every $1 \leq a \leq k$, $g(w_a)= 2k-a$ and for every $1 \leq a \leq k-r$, $w_a = f_{a+r}$. Additionally, observe that $w_k = v_k' = v_1$, by Condition~(\ref{prop:oC3}) of Lemma~\ref{lem:orderedColoring}, $(\mathsf{conf},\widehat{\mathsf{conf}})$ is a $(k-1)$-transition and $(\widehat{\mathsf{conf}},\mathsf{conf}'')$ is an $r$-transition. So, we can take $(\mathsf{conf}^* = \widehat{\mathsf{conf}})$ and we are done.
\end{proof}

As $\mathcal{C}'$ is a subgraph of $\mathcal{C}''$, combining Observation~\ref{obs:1ToSparsek} and Lemma~\ref{lem:1TogSparsek} results in the following theorem.

\begin{theorem}\label{the:1TogSparse}
Let $\mathbf{SG}=\langle G,k,\mathsf{init},\mathsf{fin}\rangle$ be an instance of {\sc Snake Game} and $t, r \in \mathbb{N}$ such that $r \leq k-1$. Then, there exists a path of length $t(k-1) + r$ from $\mathsf{init}$ to $\mathsf{fin}$ in the $1$-configuration graph if and only if there exists a path of length $t+1$ from $\mathsf{init}$ to $\mathsf{fin}$ in any generalized $(k-1)$-sparse configuration graph.
\end{theorem}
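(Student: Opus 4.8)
The plan is to prove the theorem by a case split on the residue $r$, reducing each case to one of the two already-established facts: Lemma~\ref{lem:1TogSparsek} handles lengths that are \emph{not} multiples of $k-1$, and Observation~\ref{obs:1ToSparsek} handles lengths that \emph{are}. Fix a generalized $(k-1)$-sparse configuration graph $\mathcal{C}''$, and let $\mathcal{C}'$ be the $(k-1)$-sparse configuration graph from which Algorithm~\ref{alg:gSparseGraph} builds $\mathcal{C}''$, so that $\mathcal{C}' \subseteq \mathcal{C}''$ and Observation~\ref{obs:1ToSparsek} applies to $\mathcal{C}'$.

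If $r < k-1$ (equivalently $r \le k-2$), then $t(k-1)+r$ is not a multiple of $k-1$, and the asserted equivalence is verbatim Lemma~\ref{lem:1TogSparsek}; nothing further is needed in this case. If $r = k-1$, then $t(k-1)+r=(t+1)(k-1)$, and Observation~\ref{obs:1ToSparsek}, invoked with the parameter $t+1$, already gives that a path of length $(t+1)(k-1)$ from $\mathsf{init}$ to $\mathsf{fin}$ exists in the $1$-configuration graph if and only if a path of length $t+1$ from $\mathsf{init}$ to $\mathsf{fin}$ exists in $\mathcal{C}'$. It then remains only to replace $\mathcal{C}'$ by $\mathcal{C}''$ in this statement. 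The forward inclusion is free, since $\mathcal{C}' \subseteq \mathcal{C}''$. For the converse I would show that a path of length $t+1$ from $\mathsf{init}$ to $\mathsf{fin}$ in $\mathcal{C}''$ can be taken to avoid every arc of $E(\mathcal{C}'')\setminus E(\mathcal{C}')$ and hence to live inside $\mathcal{C}'$: by the construction of Algorithm~\ref{alg:gSparseGraph}, each such arc enters $\mathsf{fin}$, so on a path terminating at $\mathsf{fin}$ it can only be the last arc, while its tail is one of the auxiliary configurations introduced by that algorithm, which carries no incoming arc in $\mathcal{C}''$ and therefore could only be the path's first vertex too; this is impossible because $t \in \mathbb{N}$ forces the length $t+1$ to be at least $2$. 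Combining the two cases proves the theorem.

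The step that needs the most care is this converse transfer when $r = k-1$: one has to be certain that the extra vertices and arcs that Algorithm~\ref{alg:gSparseGraph} attaches to $\mathcal{C}'$---whose only purpose is to reroute the single ``short'' last block of a route whose total length is not divisible by $k-1$---cannot be reused or chained to create a spurious length-$(t+1)$ route to $\mathsf{fin}$ in $\mathcal{C}''$. The structural facts to pin down are precisely that every such arc points into $\mathsf{fin}$ and that every newly added configuration is incident in $\mathcal{C}''$ to that single arc only; together with $t+1 \ge 2$, these relegate those arcs to the useless position of being simultaneously first and last on the path, and the argument closes.
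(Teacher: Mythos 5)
Your overall decomposition is exactly the paper's: the paper proves the theorem with the single remark that, since $\mathcal{C}'$ is a subgraph of $\mathcal{C}''$, Lemma~\ref{lem:1TogSparsek} (covering $r<k-1$) and Observation~\ref{obs:1ToSparsek} (covering $r=k-1$) combine to give the statement. Your treatment of $r<k-1$ is verbatim the lemma, and your forward transfer for $r=k-1$ via the subgraph relation is the same as the paper's.

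The place where you go beyond the paper---the converse transfer for $r=k-1$, i.e.\ pruning a length-$(t+1)$ path in $\mathcal{C}''$ down to $\mathcal{C}'$---rests on an unjustified claim. You assert that the tail of every arc in $E(\mathcal{C}'')\setminus E(\mathcal{C}')$ is an auxiliary configuration with no incoming arc in $\mathcal{C}''$. Algorithm~\ref{alg:gSparseGraph} indeed only adds arcs pointing into $\mathsf{fin}$, but the configuration $\widehat{P}$ to which such an arc is attached need not be a genuinely new vertex: the condition in line~\ref{li:pathPe} strengthens the path condition in line~\ref{li:pathP} of Algorithm~\ref{alg:sparseGraph}, so $\widehat{P}$ may coincide with a configuration already placed in $V(\mathcal{C}')$, and such a vertex can have incoming $(k-1)$-transition arcs in $\mathcal{C}'$. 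Hence a length-$(t+1)$ path in $\mathcal{C}''$ may legitimately end with such an auxiliary arc and cannot be rerouted inside $\mathcal{C}'$; unwinding it yields a route of length $t(k-1)+s$ in the $1$-configuration graph for the particular $s\leq k-2$ encoded by that arc, not $(t+1)(k-1)$. To be fair, this reflects an imprecision in the statement itself (its right-hand side does not depend on $r$) that the paper's one-line proof also leaves unaddressed, and the directions actually supplied by the cited lemma and observation are exactly the ones you obtain without this extra argument; but as written, your closing step for the $r=k-1$ converse does not go through.
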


We now give the following lemma about the size of a generalized $(k-1)$-sparse configuration graph and the running time of Algorithm~\ref{alg:gSparseGraph}. 

\begin{lemma}\label{lem:gAlgoTime}
Let $\mathbf{SG}=\langle G,k,\mathsf{init},\mathsf{fin}\rangle$ be an instance of {\sc Snake Game}. Then, Algorithm~\ref{alg:gSparseGraph} runs in time $k^{\OO(k)} n^3\log^2 n$ and returns a generalized $(k-1)$-sparse configuration graph of $\mathbf{SG}$ with $k^{\OO(k)} n^2\log n$ vertices and $k^{\OO(k)} n^3\log^2 n$ arcs.
\end{lemma}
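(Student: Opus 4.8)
The plan is to analyze Algorithm~\ref{alg:gSparseGraph} by separating it into the two phases it performs: the call to \texttt{sparseConfigurationGraph}$(\mathbf{SG},k)$, and the extra triple loop that adds the ``partial-transition'' configurations with their arcs into $\mathsf{fin}$. For the first phase, I would simply invoke Lemma~\ref{lem:algoTime}: it gives us that $\mathcal{C}'$ has $k^{\OO(k)}n^2\log n$ vertices and $k^{\OO(k)}n^3\log^2 n$ arcs, and that it is produced (together with the permuter $\mathcal{G}$) in time $k^{\OO(k)}n^3\log^2 n$. Since Algorithm~\ref{alg:gSparseGraph} initializes $\mathcal{C}''$ as a copy of $\mathcal{C}'$, all of these bounds carry over as a baseline, and it remains only to bound the \emph{additional} vertices, arcs, and running time incurred by the new loop.

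For the second phase, I would count as follows. The outer loop ranges over $g \in \mathcal{G}$, and by Lemma~\ref{lem:sizePermuter} we have $|\mathcal{G}| = k^{\OO(k)}\log n$ (taking $t = 3k-2$, so $t^{\OO(t)} = k^{\OO(k)}$). The middle loop ranges over vertices $u \in V(G)$ with $u = f_{s+1}$ for some $1 \le s \le k-2$; there are at most $k-2 = \OO(k)$ such vertices (one per relevant index of the fixed configuration $\mathsf{fin}$), not $n$ of them — this is the key point that keeps the count down. The inner loop ranges over $v \in V(G)$, contributing a factor of $n$. For each such triple $(g,u,v)$ the algorithm adds at most one vertex and at most one arc. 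Hence the number of vertices and arcs added in this phase is at most $|\mathcal{G}| \cdot \OO(k) \cdot n = k^{\OO(k)}n\log n$, which is dominated by the $k^{\OO(k)}n^2\log n$ vertex bound and the $k^{\OO(k)}n^3\log^2 n$ arc bound already established for $\mathcal{C}'$; so $\mathcal{C}''$ has $k^{\OO(k)}n^2\log n$ vertices and $k^{\OO(k)}n^3\log^2 n$ arcs. For the running time of this phase: the test inside the inner \textbf{if} asks for a labeled path of length $k-1$ from $u$ to $v$ respecting the label profile $g(w_a) = 2k-a$, \emph{and additionally} respecting the prefix constraint $w_a = f_{a+s}$ for $1 \le a \le k-s$. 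The labeled-path part can be found in time $\OO(n+m)$ by Lemma~\ref{lem:labPath}; the prefix constraint is a trivial additional check (either the fixed prefix $f_{s+1},\ldots,f_k$ has the right labels under $g$, in which case we only need to search for a labeled path from $f_k$ to $v$ on the remaining labels, or it does not, in which case no such path exists). Using the convention $m = k^{\OO(1)}n$ (as in the proof of Lemma~\ref{lem:algoTime}, via Corollary~\ref{cor:edgeSetSize}), each test costs $\OO(n+m) = k^{\OO(1)}n$, so the whole second phase runs in time $|\mathcal{G}| \cdot \OO(k) \cdot n \cdot k^{\OO(1)}n = k^{\OO(k)}n^2\log n$.

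Finally I would add the two phases: $k^{\OO(k)}n^3\log^2 n$ (phase one, from Lemma~\ref{lem:algoTime}) plus $k^{\OO(k)}n^2\log n$ (phase two), which is $k^{\OO(k)}n^3\log^2 n$, as claimed. There is no real obstacle here; the only point that requires a moment's care is recognizing that the middle loop over $u$ ranges over only $\OO(k)$ vertices rather than all of $V(G)$ — this is what prevents the extra phase from blowing the bounds up by a factor of $n$ — together with noting that adapting Lemma~\ref{lem:labPath} to also honor the fixed prefix $f_{s+1},\ldots,f_k$ does not change its linear running time. Everything else is bookkeeping against the already-proven bounds of Lemma~\ref{lem:algoTime} and Lemma~\ref{lem:sizePermuter}.
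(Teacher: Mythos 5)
Your proposal is correct and follows essentially the same route as the paper: invoke Lemma~\ref{lem:algoTime} for the first phase, then bound the extra loop's added vertices, arcs, and per-test cost via Lemma~\ref{lem:labPath} and the size of the permuter. The only difference is cosmetic---you observe that the middle loop ranges over only $\OO(k)$ vertices $u$ (giving a tighter $k^{\OO(k)}n\log n$ count for the added elements), whereas the paper more loosely charges one addition per pair of distinct vertices and per $g\in\mathcal{G}$; both counts are dominated by the phase-one bounds, so the conclusion is the same.
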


\begin{proof}
Let $|V(G)| = n$, $|E(G)| = m$, $\mathcal{C}'$ be a $(k-1)$-sparse configuration graph and $\mathcal{G}$ be a $(V(G),3k-2)$-permuter constructed by Algorithm~\ref{alg:sparseGraph}. From Corollary~\ref{cor:edgeSetSize}, we implicitly assume that $m = k^{\OO(1)}n$. Let $\mathcal{C}''$ be a generalized $(k-1)$-sparse configuration graph. In addition to the vertices of $\mathcal{C}'$, Algorithm~\ref{alg:gSparseGraph} adds at most a vertex and an edge to $\mathcal{C}''$ for each function $g \in \mathcal{G}$ and every pair of distinct vertices in $G$. From Lemma~\ref{lem:algoTime}, we know that $|V(\mathcal{C}')| = k^{\OO(k)} n^2\log n$ and $|E(\mathcal{C}')| = k^{\OO(k)} n^3\log^2 n$ and Algorithm~\ref{alg:sparseGraph} runs in time $k^{\OO(k)} n^3\log^2 n$. So, $|V(\mathcal{C}'')| = k^{\OO(k)} n^2\log n$ and $|E(\mathcal{C}'')| = k^{\OO(k)} n^3\log^2 n$. Also, again by Lemma~\ref{lem:labPath}, we can find a path $\widehat{P}$ between $u$ and $v$ satisfying the the condition given in line~\ref{li:pathPe} of Algorithm~\ref{alg:gSparseGraph} in time $\OO(n+m)$. So, the running time of Algorithm~\ref{alg:gSparseGraph} is $k^{\OO(k)} n^3\log^2 n + (k^{\OO(k)} \log n) \cdot n^2 \cdot (n+m)) = k^{\OO(k)} n^3\log^2 n$. 
\end{proof}

Thus, from Theorem~\ref{the:1TogSparse} and Lemma~\ref{lem:gAlgoTime}, we derive the following theorem. 

\begin{theorem}
There exists an algorithm that, given an instance of {\sc Snake Game} $\mathbf{SG}=\langle G,k,\mathsf{init},\mathsf{fin}\rangle$, solves $\mathbf{SG}$ in time $k^{\OO(k)}|V(G)|^3\log^2 |V(G)|$. Moreover, it finds the shortest path from $\mathsf{init}$ to $\mathsf{fin}$ if one exists.
\end{theorem}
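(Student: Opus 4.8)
The plan is to combine the characterization of \yes-instances via the $1$-configuration graph (Observation~\ref{obs:confGraph}) with the equivalence between path lengths in the $1$-configuration graph and a generalized $(k-1)$-sparse configuration graph (Theorem~\ref{the:1TogSparse}). First, I would construct a generalized $(k-1)$-sparse configuration graph $\mathcal{C}''$ of $\mathbf{SG}$ by invoking Algorithm~\ref{alg:gSparseGraph}; by Lemma~\ref{lem:gAlgoTime}, this takes time $k^{\OO(k)}|V(G)|^3\log^2|V(G)|$ and produces a graph with $k^{\OO(k)}|V(G)|^2\log|V(G)|$ vertices and $k^{\OO(k)}|V(G)|^3\log^2|V(G)|$ arcs. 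I would then run a breadth-first search (or Dijkstra on unit-weight arcs) in $\mathcal{C}''$ from the vertex $\mathsf{init}$, and declare the instance a \yes-instance if and only if $\mathsf{fin}$ is reachable from $\mathsf{init}$ in $\mathcal{C}''$.

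The correctness argument proceeds as follows. If $\mathbf{SG}$ is a \yes-instance, then by Observation~\ref{obs:confGraph} there is a path from $\mathsf{init}$ to $\mathsf{fin}$ in the $1$-configuration graph; let its length be $L$. Write $L = t(k-1) + r$ with $0 \le r \le k-2$ (and handle the trivial edge case $k=1$ separately, where $\mathsf{init}$ reaches $\mathsf{fin}$ iff $\mathsf{init}=\mathsf{fin}$ or the head vertices are adjacent-reachable—this follows directly from the definitions). If $r = 0$, Observation~\ref{obs:1ToSparsek} gives a path of length $t$ from $\mathsf{init}$ to $\mathsf{fin}$ in a $(k-1)$-sparse configuration graph, which is a subgraph of $\mathcal{C}''$; if $r \ge 1$, Theorem~\ref{the:1TogSparse} (equivalently Lemma~\ref{lem:1TogSparsek}) gives a path of length $t+1$ in $\mathcal{C}''$. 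Either way $\mathsf{fin}$ is reachable from $\mathsf{init}$ in $\mathcal{C}''$, so the BFS succeeds. Conversely, if the BFS finds a path of some length $\ell$ from $\mathsf{init}$ to $\mathsf{fin}$ in $\mathcal{C}''$, then by Theorem~\ref{the:1TogSparse} there is a corresponding path in the $1$-configuration graph, so by Observation~\ref{obs:confGraph} the instance is a \yes-instance. To also report the \emph{shortest} route, I would let BFS return the distance $d$ from $\mathsf{init}$ to $\mathsf{fin}$ in $\mathcal{C}''$; since every arc of $\mathcal{C}''$ corresponds to a walk of at most $k-1$ steps in the $1$-configuration graph, and by Theorem~\ref{the:1TogSparse} the optimum over all residues $r$ is attained, one recovers the exact shortest length in the original game by taking the minimum over the $O(k)$ possible residue completions—concretely, iterate over $r \in \{0,1,\dots,k-2\}$, and for each $r$ compute the shortest path to $\mathsf{fin}$ using only arcs of $\mathcal{C}''$ that certify the right residue, then output $\min_r\big((d_r-1)(k-1)+r\big)$ where $d_r$ is the corresponding distance.

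For the running time: BFS on $\mathcal{C}''$ runs in time linear in $|V(\mathcal{C}'')| + |E(\mathcal{C}'')| = k^{\OO(k)}|V(G)|^3\log^2|V(G)|$, which is dominated by (and of the same order as) the construction time in Lemma~\ref{lem:gAlgoTime}; extracting an actual shortest path in $\mathcal{C}''$ and then expanding each of its arcs into the underlying length-$(k-1)$ (or length-$r$) sub-path of the $1$-configuration graph—each such expansion stored as a witness during Algorithms~\ref{alg:sparseGraph} and~\ref{alg:gSparseGraph}—adds only lower-order overhead. Hence the total running time is $k^{\OO(k)}|V(G)|^3\log^2|V(G)|$, as claimed.

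The main obstacle in writing this up cleanly is the shortest-route bookkeeping: Theorem~\ref{the:1TogSparse} is an existence statement parameterized by the decomposition $L = t(k-1)+r$, so to obtain the \emph{minimum} $L$ one must be careful that minimizing the number of arcs in $\mathcal{C}''$ does not directly minimize $L$ (an arc of $\mathcal{C}''$ ending at $\mathsf{fin}$ may correspond to $r < k-1$ steps rather than $k-1$). The fix is to partition the arcs into $\mathcal{C}'$-arcs (weight $k-1$) and the special "generalized" arcs into $\mathsf{fin}$ (weight $r$, the value of $s$ recorded when that arc was added in line~\ref{li:pathPe} of Algorithm~\ref{alg:gSparseGraph}), and then run a shortest-path computation with these weights; I would need to verify that this weighted shortest path in $\mathcal{C}''$ exactly equals the length of the shortest route in the $1$-configuration graph, which again reduces to Theorem~\ref{the:1TogSparse} applied in both directions together with the observation that each weighted arc is realized by an actual sub-path of the specified length.
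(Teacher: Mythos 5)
Your proposal is correct and takes essentially the same route as the paper, which derives this theorem directly by combining Theorem~\ref{the:1TogSparse} with Lemma~\ref{lem:gAlgoTime} and (implicitly) a BFS on the generalized $(k-1)$-sparse configuration graph. In fact, your treatment of the shortest-route bookkeeping---weighting the $\mathcal{C}'$-arcs by $k-1$ and the special arcs into $\mathsf{fin}$ by the recorded residue $s$, then taking the minimum over residues---is more explicit than anything in the paper, which leaves this detail entirely to the reader.
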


\section{Non-Existence of a Polynomial Kernel on Grid Graphs}\label{sec:kernel}

In this section, we prove that the {\sc Snake Game} problem is unlikely to admit a polynomial kernel even on grid graphs. Our proof is based on the exhibition of a cross-composition (see Definition \ref{def:cross-comp}). We remind that our cross-composition is inspired by the proof of \NP-hardness of {\sc Snake Game} on grid graphs by Biasi and Ophelders \cite{DBLP:journals/tcs/BiasiO18}. As the source problem $\Pi$ for the cross-composition, we select the {\sc Hamiltonian Cycle} problem on grid graphs, and the target problem is {\sc Snake Game} on grid graphs. In what follows, we first present the reduction that serves as the cross-composition (Section \ref{sec:kernelReduction}), and then prove its correctness (Section \ref{sec:kernelReductionCorrectness}).

\subsection{Construction of the Reduction Function}\label{sec:kernelReduction}

We refer to our reduction function as $\mathsf{HamToSna}$. Its input consists of $t$ instances $I_1,I_2,\ldots,I_t$ of the {\sc Hamiltonian Cycle} problem on grid graphs where $t\in\mathbb{N}$, and its output is a single instance of the {\sc Snake Game} problem on grid graphs. Each instance $I_i$ is a grid graph $G_i$ with $n$ vertices, which we can assume without loss of generality to be connected else the instance is trivially a \no-instance. We need the following definition for describing the reduction function.

For the construction of $\mathsf{HamToSna}$, we would like to ``separate'' the $t$ input grid graphs in the sense that we would like to identify specific ``squares'' in the output grid graph where the input grid graphs will be embedded. The notion of a {\em boundary square of a grid graph} (defined below) will serve this purpose. 
As intuition for its definition and the next observation, note that any connected grid graph $G$ can be thought of as an induced subgraph of a $(|V(G)|\times |V(G)|)$-$0$- grid. We will denote the four corner vertices of this $(|V(G)|\times |V(G)|)$-$0$-grid by $(r_{\mathrm{min}},c_{\mathrm{min}}), (r_{\mathrm{min}},c_{\mathrm{max}}),(r_{\mathrm{max}},c_{\mathrm{min}})$ and $(r_{\mathrm{max}},c_{\mathrm{max}})$. Accordingly, we will say that $G$ is ``bounded'' by $(r_{\mathrm{min}},r_{\mathrm{max}},c_{\mathrm{min}},c_{\mathrm{max}})$. We would like this boundary to be unique, to which end we have Conditions \ref{cond:1}, \ref{cond:2} and \ref{cond:3} as follows.   

\begin{definition} [{\bf A $(r_{\mathrm{min}},r_{\mathrm{max}},c_{\mathrm{min}},c_{\mathrm{max}})$-Boundary Square of a Connected Grid Graph}]\label{def:Square grid graph}
A {\em boundary square} of a connected grid graph $G$ is a quadruple $(r_{\mathrm{min}},r_{\mathrm{max}},c_{\mathrm{min}},c_{\mathrm{max}})$ where $r_{\mathrm{min}},r_{\mathrm{max}},c_{\mathrm{min}},c_{\mathrm{max}}\in \mathbb{N}_0$, such that the following conditions are satisfied.
\begin{enumerate}
\item $r_{\mathrm{min}}=\mathrm{min}\{ r\in \mathbb{N}_0~|~$there exists $c\in \mathbb{N}_0$ such that $(r,c)\in V(G) \}$.\label{cond:1} 
\item $c_{\mathrm{min}}=\mathrm{min}\{ c\in \mathbb{N}_0~|~$there exists $r\in \mathbb{N}_0$ such that $(r,c)\in V(G) \}$.\label{cond:2}
\item $r_{\mathrm{max}}-r_{\mathrm{min}}=c_{\mathrm{max}}-c_{\mathrm{min}}=|V(G)|-1$.\label{cond:3}
\end{enumerate}
\end{definition}

\begin{observation} \label{obs:squreExists}
Let $G$ be a connected grid graph. Then, there exists a quadruple $(r_{\mathrm{min}},r_{\mathrm{max}},c_{\mathrm{min}},c_{\mathrm{max}})\in \mathbb{N}_0^4$ such that $(r_{\mathrm{min}},r_{\mathrm{max}},c_{\mathrm{min}},c_{\mathrm{max}})$ is a boundary square of $G$, and it is unique.
\end{observation}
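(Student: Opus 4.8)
The plan is to prove Observation~\ref{obs:squreExists} in two parts: first exhibiting a quadruple satisfying Definition~\ref{def:Square grid graph}, and then showing it is the only one. For existence, let $r_{\mathrm{min}}$ be the minimum first coordinate occurring among vertices of $G$ and $c_{\mathrm{min}}$ the minimum second coordinate; both are well-defined since $G$ is a nonempty finite graph, so Conditions~\ref{cond:1} and~\ref{cond:2} pin these down. Then I simply set $r_{\mathrm{max}} = r_{\mathrm{min}} + |V(G)| - 1$ and $c_{\mathrm{max}} = c_{\mathrm{min}} + |V(G)| - 1$, which makes Condition~\ref{cond:3} hold by construction. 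Since $r_{\mathrm{min}}, c_{\mathrm{min}} \in \mathbb{N}_0$ and $|V(G)| - 1 \geq 0$, all four values lie in $\mathbb{N}_0$, so $(r_{\mathrm{min}}, r_{\mathrm{max}}, c_{\mathrm{min}}, c_{\mathrm{max}}) \in \mathbb{N}_0^4$ and it is a boundary square of $G$.

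For uniqueness, suppose $(r_{\mathrm{min}}, r_{\mathrm{max}}, c_{\mathrm{min}}, c_{\mathrm{max}})$ and $(r_{\mathrm{min}}', r_{\mathrm{max}}', c_{\mathrm{min}}', c_{\mathrm{max}}')$ are both boundary squares of $G$. Condition~\ref{cond:1} determines $r_{\mathrm{min}}$ as a specific minimum over $V(G)$, hence $r_{\mathrm{min}} = r_{\mathrm{min}}'$; similarly Condition~\ref{cond:2} forces $c_{\mathrm{min}} = c_{\mathrm{min}}'$. Then Condition~\ref{cond:3} gives $r_{\mathrm{max}} = r_{\mathrm{min}} + |V(G)| - 1 = r_{\mathrm{min}}' + |V(G)| - 1 = r_{\mathrm{max}}'$ and likewise $c_{\mathrm{max}} = c_{\mathrm{max}}'$. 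Thus the two quadruples coincide.

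I do not anticipate a serious obstacle here; this is essentially a bookkeeping argument, and the only mild subtlety is making sure the sets over which the minima in Conditions~\ref{cond:1} and~\ref{cond:2} are taken are nonempty — which holds precisely because $V(G) \neq \emptyset$ (a grid graph is assumed finite and, in the cross-composition, connected with $n$ vertices). Connectedness of $G$ is not actually needed for Observation~\ref{obs:squreExists} itself; it is only relevant for the surrounding remark that $G$ embeds as an induced subgraph of a $(|V(G)| \times |V(G)|)$-$0$-grid, which motivates why the side length $|V(G)| - 1$ in Condition~\ref{cond:3} is large enough to contain all of $G$. If one wanted to be thorough, one could also note in passing that every vertex $(r,c) \in V(G)$ indeed satisfies $r_{\mathrm{min}} \leq r \leq r_{\mathrm{max}}$ and $c_{\mathrm{min}} \leq c \leq c_{\mathrm{max}}$: the lower bounds are immediate from the definitions of the minima, and the upper bounds follow since any two vertices differ in each coordinate by at most $|V(G)| - 1$ (there are only $|V(G)|$ vertices, so at most $|V(G)|$ distinct values of each coordinate appear).
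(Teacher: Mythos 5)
Your proof is correct: the paper states this as an observation with no explicit proof, and your bookkeeping argument (Conditions~1--3 determine all four coordinates uniquely, and the minima exist because $V(G)\neq\emptyset$ and is finite) is exactly the intended justification. One small caveat on your optional closing aside: the containment $r\le r_{\mathrm{max}}$ for every vertex $(r,c)\in V(G)$ does require connectedness --- ``at most $|V(G)|$ distinct row values appear'' alone does not bound their spread, whereas connectedness forces the set of occurring row values to be a contiguous interval of length at most $|V(G)|-1$ --- but since that containment is not part of the observation being proved, your proof of the stated claim stands as is.
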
  

\paragraph{The $\mathsf{HamToSna}$ Reduction Function: Preprocessing.}
Let $G_1,G_2,\ldots,G_t$ be the $t$ instances of {\sc Hamiltonian Cycle} on grid graphs. From Observation \ref{obs:squreExists}, we know that there exists a boundary square of $G_i$ for every $1\leq i\leq t$. Let $(\widehat{r}^i_{\mathrm{min}},\widehat{r}^i_{\mathrm{max}},\widehat{c}^i_{\mathrm{min}},\widehat{c}^i_{\mathrm{max}})\in \mathbb{N}_0^4$ be the boundary square of $G_i$. Then, we modify $G_i$ as follows. For every vertex $(a,b)\in V(G_i)$, we replace it by the vertex $(a - \widehat{r}^i_{\mathrm{min}} + (n+1), b - \widehat{c}^i_{\mathrm{min}} + i(n+1))$. Intuitively, this operation simply means that we ``shift'' the entire grid graph $G_i$ in parallel to the axes. It is easy to see that the new boundary square $(r_{\mathrm{min}}^i,r_{\mathrm{max}}^i,c_{\mathrm{min}}^i,c_{\mathrm{max}}^i)\in \mathbb{N}_0^4$ of $G_i$ satisfies the following conditions.

\begin{itemize}
\item $r_{\mathrm{min}}^i=n+1$.
\item $c_{\mathrm{min}}^i=i(n+1)$.
\end{itemize}

In addition, because $(r_{\mathrm{min}}^i,r_{\mathrm{max}}^i,c_{\mathrm{min}}^i,c_{\mathrm{max}}^i)$ is the boundary square of $G_i$, Condition \ref{cond:1} in Definition \ref{def:Square grid graph} implies that there exists a vertex $(a,b)\in V(G_i)$ such that $a=r_{\mathrm{min}}^i$. As there can be more than one such vertex, for every $1\leq i\leq t$, we arbitrarily choose one such vertex in $V(G_i)$ and denote it by $(r_{\mathrm{min}}^i,c_i)$.

\paragraph{The $\mathsf{HamToSna}$ Reduction Function: Construction} After executing the above preprocessing step, we define $\mathsf{HamToSna}(G_1,G_2,\ldots,G_t) = \langle G,n,\mathsf{init},\mathsf{fin}\rangle$ as follows. (See Figure~\ref{fig:noKernel}.
\begin{enumerate}
\item$V(G)=V^{\mathsf{InitFinConfs}}\cup V^{\mathsf{Connectors}}\cup V^{\mathsf{Instances}}$, where
\begin{itemize}
\item $V^{\mathsf{InitFinConfs}}= \{(n-1,j)~|~0\leq j\leq n-1\}\cup \{(i,n-2)~|~0\leq i\leq n-1\}$, 
\item $V^{\mathsf{Connectors}}=\{(n-1,j)~|~n\leq j\leq c_t\}\cup \{(n,c_i)~|~1\leq i\leq t\}$, and 
\item $V^{\mathsf{Instances}}=\bigcup ^t_{i=1} V(G_i)$. 
\end{itemize}
\item $E(G)=\{((a,b),(c,d))~|~(a,b),(c,d)\in V(G), |a-c|+|b-d|=1$\}. 
\item $\mathsf{init}=((n-1,n-1),(n-1,n-2),(n-1,n-3),\ldots,(n-1,0))$.
\item $\mathsf{fin}=((0,n-2),(1,n-2),(2,n-2),\ldots,(n-1,n-2))$.
\end {enumerate}
   
The following observation about the reduction function $\mathsf{HamToSna}$ follows directly from its definition.

\begin{observation} \label{obs:polRed}
The reduction function $\mathsf{HamToSna}$ returns a valid instance of the {\sc Snake Game} problem on grid graphs. Moreover, it can be computed in polynomial time.
\end{observation}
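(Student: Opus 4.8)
The plan is to verify the two assertions separately: that $\mathsf{HamToSna}(G_1,\dots,G_t)=\langle G,n,\mathsf{init},\mathsf{fin}\rangle$ is a syntactically legal instance of {\sc Snake Game} on grid graphs (Definition~\ref{def:SnakeGame} together with Definition~\ref{def:Grid graph}), and that the construction runs in polynomial time. For the first assertion, three checks suffice: (a) $G$ is a grid graph; (b) $n\in\mathbb{N}$; and (c) $\mathsf{init}$ and $\mathsf{fin}$ are $(G,n)$-configurations in the sense of Definition~\ref{def:config}. Item (b) is immediate (each $G_i$ has $n\geq 1$ vertices), and throughout we may assume $n$ is large enough that all coordinate expressions occurring below are nonnegative, since for tiny $n$ the {\sc Hamiltonian Cycle} instances are decidable directly.

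For (a): by construction $E(G)$ is exactly the set of pairs of vertices of $V(G)$ at $\ell_1$-distance one, so by Definition~\ref{def:Grid graph} it only remains to check that $V(G)\subseteq\{(i,j)\mid i,j\in\mathbb{N}_0\}$. For $V^{\mathsf{InitFinConfs}}$ and $V^{\mathsf{Connectors}}$ this is clear from the formulas once $n\geq 2$: the coordinates appearing are among $0,\dots,n-1$, the value $n$, and columns $j$ with $n\leq j\leq c_t$. For $V^{\mathsf{Instances}}=\bigcup_{i=1}^t V(G_i)$, the preprocessing step shifts $G_i$ so that its boundary square starts at $r_{\mathrm{min}}^i=n+1$ and $c_{\mathrm{min}}^i=i(n+1)$; hence, by Condition~\ref{cond:3} of Definition~\ref{def:Square grid graph}, every vertex of $G_i$ has row in $\{n+1,\dots,2n\}$ and column in $\{i(n+1),\dots,i(n+1)+n-1\}$, which are positive. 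Thus $V(G)$ is a legal grid-graph vertex set, and $G$ is a grid graph. (The same bounds incidentally show the three vertex groups occupy pairwise disjoint regions apart from the intended contact vertices $(n,c_i)$; this is not needed here but is used in the correctness proof.)

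For (c): $\mathsf{init}=((n-1,n-1),(n-1,n-2),\dots,(n-1,0))$ is an $n$-tuple whose every entry is of the form $(n-1,j)$ with $0\leq j\leq n-1$, hence lies in $V^{\mathsf{InitFinConfs}}\subseteq V(G)$; consecutive entries $(n-1,j)$ and $(n-1,j-1)$ are at $\ell_1$-distance one, hence form an edge of $G$ (recall $E(G)$ contains every such pair with both endpoints in $V(G)$); and the entries have pairwise distinct second coordinates, hence are pairwise distinct. So $\mathsf{init}$ satisfies both conditions of Definition~\ref{def:config}. The verification for $\mathsf{fin}=((0,n-2),(1,n-2),\dots,(n-1,n-2))$ is the same with the coordinates swapped: each entry $(i,n-2)$, $0\leq i\leq n-1$, lies in $V^{\mathsf{InitFinConfs}}$, consecutive entries are adjacent, and all entries are distinct.

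Finally, the preprocessing computes, for each $G_i$, its (unique, by Observation~\ref{obs:squreExists}) boundary square by scanning the $n$ vertices for the minimum row and column, applies the affine shift to each vertex, and selects one vertex $(r_{\mathrm{min}}^i,c_i)$; this is polynomial in $\sum_i|V(G_i)|=tn$. The construction then writes out $V(G)$ of size $O(tn)$, its edge set of size $O(tn)$ (a grid graph has maximum degree four), and the two length-$n$ tuples $\mathsf{init}$ and $\mathsf{fin}$; all of this is polynomial in the total input size. I do not anticipate a genuine obstacle here: the only care required is the coordinate bookkeeping of the shifts and the trivial small-$n$ edge cases, both handled by assuming $n$ sufficiently large.
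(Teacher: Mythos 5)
Your proof is correct and matches the paper's treatment: the paper offers no explicit proof, stating that the observation follows directly from the definition of $\mathsf{HamToSna}$, and your argument is exactly that direct verification (grid-graph structure of $G$, $\mathsf{init}$ and $\mathsf{fin}$ being $(G,n)$-configurations, and a polynomial-size, polynomial-time construction) spelled out. Your handling of the tiny-$n$ coordinate edge cases is a reasonable extra precaution the paper simply glosses over.
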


\begin{figure}
    \includegraphics[page=2,width=\textwidth]{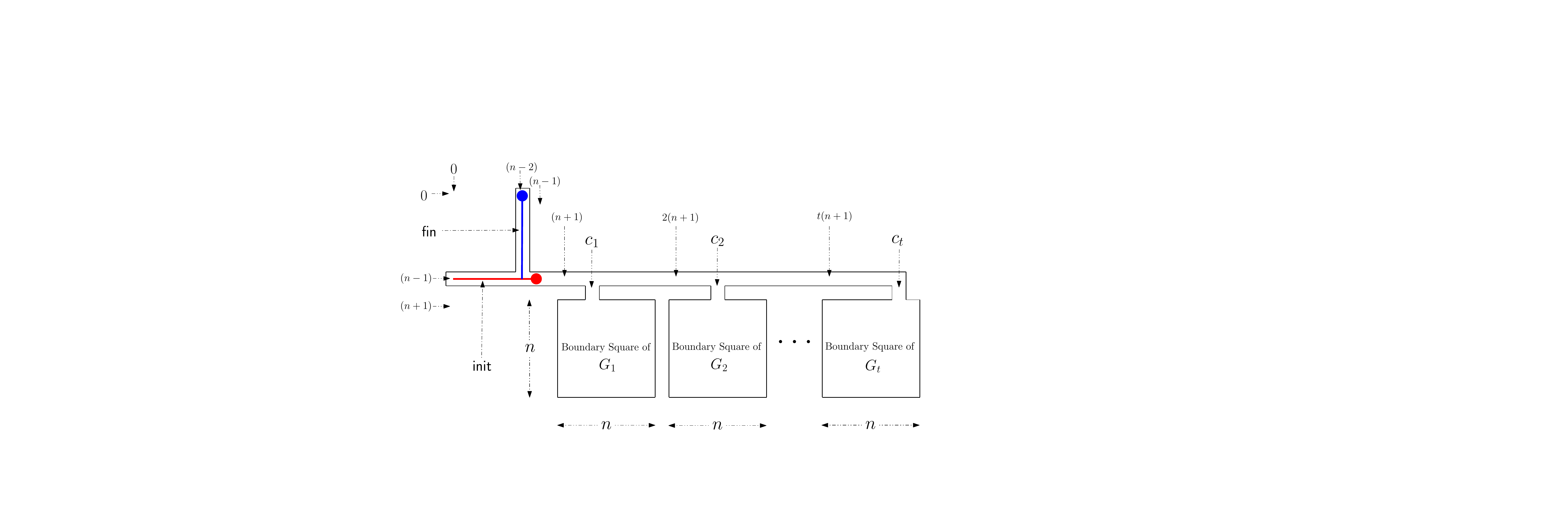}
    \caption{Graph constructed by the reduction function $\mathsf{HamToSna}$.}
  \label{fig:noKernel}
\end{figure}

\subsection{The Correctness of Construction}\label{sec:kernelReductionCorrectness}

To prove the correctness of our reduction, we need to show that the configuration $\mathsf{init}$ can reach the configuration $\mathsf{fin}$ if and only if there exists a \yes-instance among the $t$ instances of {\sc Hamiltonian Cycle}.
Towards this, we begin by showing that in order to reach the configuration $\mathsf{fin}$, the configuration $\mathsf{init}$ must first reach a configuration where the head of the snake is at one of the vertices of one of the $t$ instances of {\sc Hamiltonian Cycle}, that is, at $\bigcup_{i=1}^tV(G_i)$. Clearly, the snake must repeatedly make a move (i.e. a $1$-transition) until it has not reached $\mathsf{fin}$, and as long as it has also not reached a vertex in $\bigcup_{i=1}^tV(G_i)$, the choice of this move is very limited. Indeed, to see this, note that the graph we have constructed has no cycle that involves vertices that do not belong to $\bigcup_{i=1}^tV(G_i)$. Thus, starting at $\mathsf{init}$ and making one move at a time, the snake will never be able to ``turn around'' to reach $\mathsf{fin}$ unless it traverses a cycle in a graph~$G_i$.

To formalize this, we define the notion of the {\em head's neighbor set}. Roughly speaking, this set consists of the vertices to which the snake (in a given configuration) can move its head, thereby describing all possible $1$-transitions of the configuration. 

\begin{definition}[{\bf Head's Neighbor Set}]\label{def:headNe}
Let $\langle G,k,\mathsf{init},\mathsf{fin}\rangle$ be an instance of {\sc Snake Game}. Let $\mathsf{conf}=(v_1,\ldots ,v_k)$ be a configuration. The {\em head's neighbor set} of $\mathsf{conf}$, denote by $N^h(\mathsf{conf})$, is defined as $\{ u\in V(G)| \{ u,v_1\} \in E(G), u\neq v_i$ for every $1\leq i\leq k-1\}$.   
\end{definition}

The following observation follows directly from Definition~\ref{def:headNe}.

\begin{observation} \label{obs:1-trans}
Let $\langle G,k,\mathsf{init},\mathsf{fin}\rangle$ be an instance of {\sc Snake Game}. Let $\mathsf{conf}=(v_1,\ldots ,v_k)$ be a configuration. For every $u\in N^h(\mathsf{conf})$, let $\mathsf{conf}^u$ denotes the tuple $(u,v_1,\ldots,v_{k-1})$. Then, for every $u\in N^h(\mathsf{conf})$, $\mathsf{conf}^u$ is a configuration. Moreover, $\bigcup_{u\in N^h(\mathsf{conf})} \{ \mathsf{conf}^u\}$ is the set of all the configurations $\mathsf{conf'}$ such that ($\mathsf{conf},\mathsf{conf'}$) is a $1$-transition.  
\end{observation}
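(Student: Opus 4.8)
The plan is to verify the two assertions directly from the definitions, with essentially no combinatorial work required. First I would fix an instance $\langle G,k,\mathsf{init},\mathsf{fin}\rangle$ and a configuration $\mathsf{conf}=(v_1,\ldots,v_k)$, and take an arbitrary $u\in N^h(\mathsf{conf})$. To show that $\mathsf{conf}^u=(u,v_1,\ldots,v_{k-1})$ is a configuration, I check the two conditions of Definition~\ref{def:config}. For the edge condition: the first edge $\{u,v_1\}$ is in $E(G)$ by the definition of $N^h(\mathsf{conf})$, and every subsequent edge $\{v_i,v_{i+1}\}$ (for $1\le i\le k-2$) is in $E(G)$ because $\mathsf{conf}$ is itself a configuration. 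For the distinctness condition: the vertices $v_1,\ldots,v_{k-1}$ are pairwise distinct since they form a subsequence of the configuration $\mathsf{conf}$, and $u\neq v_i$ for every $1\le i\le k-1$ again by the definition of $N^h(\mathsf{conf})$; hence all $k$ entries of $\mathsf{conf}^u$ are distinct. This establishes the first claim.

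For the second claim, I would prove the two inclusions between $\bigcup_{u\in N^h(\mathsf{conf})}\{\mathsf{conf}^u\}$ and the set of configurations $\mathsf{conf}'$ with $(\mathsf{conf},\mathsf{conf}')$ a $1$-transition. For the forward inclusion, given $u\in N^h(\mathsf{conf})$, I check that the pair $(\mathsf{conf},\mathsf{conf}^u)$ satisfies the two bullet conditions of Definition~\ref{def:MoveToConfInOne}: writing $\mathsf{conf}^u=(v_1',\ldots,v_k')$ we have $v_1'=u$, so the first condition ($v_1'\neq v_i$ for $1\le i\le k-1$) is exactly the membership condition $u\in N^h(\mathsf{conf})$; and $v_i'=v_{i-1}$ for $2\le i\le k$ holds by construction of $\mathsf{conf}^u$. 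For the reverse inclusion, suppose $\mathsf{conf}'=(v_1',\ldots,v_k')$ is a configuration with $(\mathsf{conf},\mathsf{conf}')$ a $1$-transition. Then the second bullet of Definition~\ref{def:MoveToConfInOne} forces $v_i'=v_{i-1}$ for all $2\le i\le k$, so $\mathsf{conf}'=(v_1',v_1,\ldots,v_{k-1})$; and the first bullet states $v_1'\neq v_i$ for $1\le i\le k-1$, while $\{v_1',v_1\}\in E(G)$ since $\mathsf{conf}'$ is a configuration, so $v_1'\in N^h(\mathsf{conf})$ and $\mathsf{conf}'=\mathsf{conf}^{v_1'}$.

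There is no real obstacle here; the only point requiring a small amount of care is making sure the indexing matches between the ``shift by one'' in the definition of $\mathsf{conf}^u$ and the index ranges in Definitions~\ref{def:config} and~\ref{def:MoveToConfInOne} (in particular that the distinctness of $v_1,\ldots,v_{k-1}$ together with $u$ avoiding all of them gives distinctness of all $k$ entries, and that the ``$v_1'\neq v_i$ for $i\le k-1$'' clause corresponds precisely to the defining property of $N^h$). Once the bookkeeping is lined up, both parts are immediate from unwinding the definitions.
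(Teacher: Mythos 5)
Your proof is correct and is exactly the routine unwinding of Definitions~\ref{def:config}, \ref{def:MoveToConfInOne} and~\ref{def:headNe} that the paper has in mind when it states that the observation ``follows directly'' from the definition of the head's neighbor set (the paper gives no explicit proof). Both directions of the set equality and the configuration check are handled as the paper intends, so there is nothing to add.
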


Now, we further observe that if the initial and final configurations are not equal, the final configuration is reachable from the initial configuration if and only if the final configuration is reachable from at least one of the configurations where the head of the snake moves to one of the vertices in the head's neighbor set.

\begin{observation} \label{lem:eqconf}
Let $\langle G,k,\mathsf{init},\mathsf{fin}\rangle$ be an instance of {\sc Snake Game} where $\mathsf{init}\neq \mathsf{fin}$. Then, $\langle G,k,\mathsf{init},\mathsf{fin}\rangle$ is a \yes-instance if and only if there exists a \yes-instance in $\{\langle G,k,\mathsf{init}^u,\mathsf{fin}\rangle~|~u\in N^h(\mathsf{init})\}$.   
\end{observation}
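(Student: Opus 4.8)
The plan is to peel off the first step of any witnessing motion and identify it, via Observation~\ref{obs:1-trans}, with a move of the head into a vertex of $N^h(\mathsf{init})$. First I would rephrase being a \yes-instance through the definitions of transition and reachability (Definitions~\ref{def:MoveToConfInEll} and~\ref{def:MoveToconf}): $\langle G,k,\mathsf{init},\mathsf{fin}\rangle$ is a \yes-instance iff there is an $\ell$-transition $(\mathsf{conf}_1=\mathsf{init},\mathsf{conf}_2,\ldots,\mathsf{conf}_{\ell+1}=\mathsf{fin})$ for some $\ell\in\mathbb{N}$, and I would note that $\ell\geq 1$ here precisely because $\mathsf{init}\neq\mathsf{fin}$.

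For the forward direction, I would take such a transition and look at its first step $(\mathsf{conf}_1,\mathsf{conf}_2)$, which is a $1$-transition; by Observation~\ref{obs:1-trans} there is some $u\in N^h(\mathsf{init})$ with $\mathsf{conf}_2=\mathsf{init}^u$. If $\ell\geq 2$, the truncated tuple $(\mathsf{conf}_2,\ldots,\mathsf{conf}_{\ell+1})$ is an $(\ell-1)$-transition, so $\mathsf{init}^u$ reaches $\mathsf{fin}$ and $\langle G,k,\mathsf{init}^u,\mathsf{fin}\rangle$ is a \yes-instance; if $\ell=1$, then $\mathsf{init}^u=\mathsf{fin}$, so that instance is (degenerately) a \yes-instance. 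For the converse, given $u\in N^h(\mathsf{init})$ with $\langle G,k,\mathsf{init}^u,\mathsf{fin}\rangle$ a \yes-instance, Observation~\ref{obs:1-trans} says that $\mathsf{init}^u$ is a valid configuration and that $(\mathsf{init},\mathsf{init}^u)$ is a $1$-transition; prepending this single step to a transition from $\mathsf{init}^u$ to $\mathsf{fin}$ (concatenating tuples as in Definition~\ref{def:MoveToConfInEll}) yields a transition from $\mathsf{init}$ to $\mathsf{fin}$, so $\langle G,k,\mathsf{init},\mathsf{fin}\rangle$ is a \yes-instance.

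I do not expect a real obstacle: the statement is a direct unwinding of the definitions, with Observation~\ref{obs:1-trans} doing the only genuine work by enumerating the one-step successors of a configuration. The sole point needing a line of care is the degenerate case $\mathsf{init}^u=\mathsf{fin}$ in the forward direction (and, symmetrically, a trivial transition on the converse side), which is handled by regarding every configuration as reachable from itself; an alternative that sidesteps it entirely is to run the whole argument inside the $1$-configuration graph via Observation~\ref{obs:confGraph}, where a length-$0$ path covers this case automatically.
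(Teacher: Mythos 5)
Your proof is correct and follows essentially the same route as the paper's: peel off the first $1$-transition, identify it with some $\mathsf{init}^u$ via Observation~\ref{obs:1-trans}, and then truncate (forward direction) or prepend (converse) accordingly. You are in fact slightly more careful than the paper, which dismisses the degenerate case by asserting $\ell>1$ (not actually forced by $\mathsf{init}\neq\mathsf{fin}$, since $\mathsf{fin}$ could be $\mathsf{init}^u$ itself), whereas you handle $\ell=1$ explicitly via the reflexive-reachability reading that Observation~\ref{obs:confGraph} supports.
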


\begin{proof}
$(\Rightarrow)$ Suppose that $\langle G,k,\mathsf{init},\mathsf{fin}\rangle$ is a \yes-instance. Then, by Definitions~\ref{def:MoveToconf} and~\ref{def:SnakeGame}, the pair $(\mathsf{init}, \mathsf{fin})$ is an $\ell$-transition for some $\ell\in \mathbb{N}$. Moreover, by Definition~\ref{def:MoveToConfInEll}, there exists a tuple $(\mathsf{conf}_1 = \mathsf{init},\mathsf{conf}_2,\ldots,\mathsf{conf}_\ell = \mathsf{fin})$ such that, for every $1\leq i\leq \ell$, the pair $(\mathsf{conf}_i,\mathsf{conf}_{i+1})$ is a $1$-transition. As $\mathsf{init}\neq \mathsf{fin}$, we know that $\ell>1$. Moreover, as ($\mathsf{init},\mathsf{conf}_2$) is a $1$-transition, by Observation~\ref{obs:1-trans}, there exists  $u\in N^h(\mathsf{init})$ such that $\mathsf{init}^u=\mathsf{conf}_2$. So, we get that $(\mathsf{conf}_2=\mathsf{init}^u,\ldots,\mathsf{conf}_\ell=\mathsf{fin})$ is an $\ell-1$-transition. Thus, by Definitions~\ref{def:MoveToconf} and~\ref{def:SnakeGame}$, \langle G,k,\mathsf{init}^u,\mathsf{fin}\rangle$ is a \yes-instance.

$(\Leftarrow)$ Suppose that $\langle G,k,\mathsf{init}^u,\mathsf{fin}\rangle$ is a \yes-instance for some $u\in N^h(\mathsf{init})$. Then, by Definitions~\ref{def:MoveToconf} and~\ref{def:SnakeGame}, the pair $(\mathsf{init}^u, \mathsf{fin})$ is an $\ell$-transition for some $\ell\in \mathbb{N}$. Moreover, by Definition~\ref{def:MoveToConfInEll}, there exists a tuple $(\mathsf{conf}_1 = \mathsf{init}^u,\mathsf{conf}_2,\ldots,\mathsf{conf}_\ell = \mathsf{fin})$ such that, for every $1\leq i\leq \ell$, the pair $(\mathsf{conf}_i,\mathsf{conf}_{i+1})$ is a $1$-transition.  As $(\mathsf{init},\mathsf{init}^u)$ is a $1$-transition, we get that $(\mathsf{init},\mathsf{conf}_1,\mathsf{conf}_2,\ldots,\mathsf{conf}_\ell)$ is an $\ell+1$-transition. Thus, by Definitions~\ref{def:MoveToconf} and~\ref{def:SnakeGame}, $\langle G,k,\mathsf{init},\mathsf{fin}\rangle$ is a \yes-instance.	
\end{proof}       

We proceed as follows. For every $1\leq i \leq t$, let $\mathsf{init_i}$ denote the configuration whose head is $(r_{\mathrm{min}}^i, c_i)$ and the rest of its vertices does not belong to $V(G_i)$. (See Figure~\ref{fig:noKernelCons}). Then, we show that $\langle G,k,\mathsf{init},\mathsf{fin}\rangle$ is a \yes-instance if and only if at least one of the instances $\langle G,k,\mathsf{init_i},\mathsf{fin}\rangle$ is a \yes-instance.

\begin{figure}
    \includegraphics[page=2,width=\textwidth]{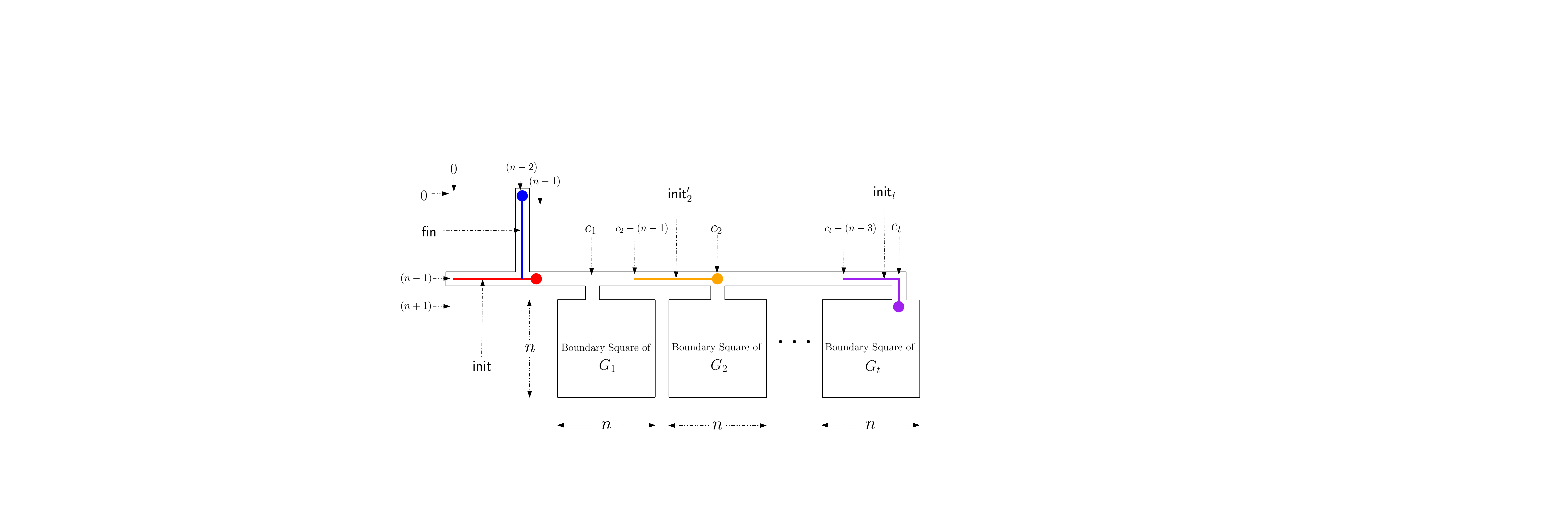}
    \caption{The configurations $\mathsf{init}$, $\mathsf{fin}$, $\mathsf{init}_i$ and $\mathsf{init}'_i$} for $1\leq i\leq t$ (specifically, $\mathsf{init}'_2$ and $\mathsf{init}_t$ are shown). 
  \label{fig:noKernelCons}
\end{figure}

\begin{lemma} \label{lem:noKer1}
Let $G_1,G_2,\ldots,G_t$ be $t$ instances of {\sc Hamiltonian Cycle} on grid graphs. Let $\mathsf{HamToSna}(G_1,G_2,\ldots,G_t)=\langle G,n,\mathsf{init},\mathsf{fin}\rangle$, and $\mathsf{init}_i:=((n+1,c_i),(n,c_i),(n-1,c_i),(n-1,c_i-1)\ldots (n-1,c_i-(n-3)))$ for every $1\leq i\leq t$ (see Figure~\ref{fig:noKernelCons}). Then, $\langle G,n,\mathsf{init},\mathsf{fin}\rangle$ is a \yes-instance if and only if there exists a \yes-instance in $\{\langle G,n,\mathsf{init}_i,\mathsf{fin}\rangle~|~1\leq i\leq t\}$.
\end{lemma}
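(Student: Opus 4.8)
The plan is to prove the two directions of the equivalence by analyzing the very limited freedom that the snake has outside the grid graphs $G_1,\dots,G_t$. For the easy direction $(\Leftarrow)$, suppose $\langle G,n,\mathsf{init}_i,\mathsf{fin}\rangle$ is a \yes-instance for some $i$; I would first observe that $\mathsf{init}$ can reach $\mathsf{init}_i$ by a straightforward forced sequence of $1$-transitions (the snake crawls along the long path $L$ and down the pendant edge $(n,c_i)\to(n+1,c_i)$ into $G_i$), and then concatenate this transition sequence with the one witnessing reachability of $\mathsf{fin}$ from $\mathsf{init}_i$; by Definitions~\ref{def:MoveToConfInEll}, \ref{def:MoveToconf} and~\ref{def:SnakeGame} this shows $\langle G,n,\mathsf{init},\mathsf{fin}\rangle$ is a \yes-instance. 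Technically one should double check that the intermediate configurations of the first sequence are vertex-disjoint from the second, but since the snake is fully outside $\bigcup_i V(G_i)$ only at the start and all these configurations are distinct simple paths, this is routine.

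For the forward direction $(\Rightarrow)$, assume $\langle G,n,\mathsf{init},\mathsf{fin}\rangle$ is a \yes-instance, so by Observation~\ref{obs:confGraph} there is a path $(\mathsf{conf}_1=\mathsf{init},\dots,\mathsf{conf}_\ell=\mathsf{fin})$ in the $1$-configuration graph, which I may take to be shortest. The core claim is that some intermediate configuration has its head in $\bigcup_{i=1}^t V(G_i)$. I would argue this by contradiction: suppose every $\mathsf{conf}_j$ has head (indeed all vertices) outside $\bigcup_i V(G_i)$, i.e. every configuration lies entirely within the subgraph $G'$ induced by $V^{\mathsf{InitFinConfs}}\cup V^{\mathsf{Connectors}}$. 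The key structural observation is that $G'$ is a tree (it is the long horizontal path together with the pendant vertices $(n,c_i)$ and the protruding path forming $\mathsf{fin}$'s vertices — no cycles), so by repeatedly applying Observation~\ref{lem:eqconf}/\ref{obs:1-trans} one sees that a snake confined to a tree can never reverse its orientation: formally, I would track the orientation of the snake as an ordered path in the tree and show that a $1$-transition within a tree preserves a suitable invariant (e.g. the head always moves to the unique tree-neighbor not already occupied, so the snake behaves like a rigid caterpillar sliding along branches and can never turn around). Since $\mathsf{init}$ and $\mathsf{fin}$ have "opposite" orientations relative to $L$ (one points left along $L$, the other points up the protruding path), this yields the contradiction. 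Hence let $j^*$ be the first index with $\mathsf{conf}_{j^*}$ having head in some $V(G_i)$; then $\mathsf{conf}_{j^*-1}$ still lies in $G'$ with head at the pendant vertex $(n,c_i)$, and the forced nature of motion in the tree up to that point shows $\mathsf{conf}_{j^*}$ must in fact equal $\mathsf{init}_i$ (the snake enters $G_i$ head-first with the rest of its body still strung out along $L$). Therefore $(\mathsf{conf}_{j^*}=\mathsf{init}_i,\dots,\mathsf{conf}_\ell=\mathsf{fin})$ witnesses that $\langle G,n,\mathsf{init}_i,\mathsf{fin}\rangle$ is a \yes-instance.

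The main obstacle is making precise the "a snake in a tree cannot turn around" argument and, relatedly, the claim that the first configuration entering $G_i$ must be exactly $\mathsf{init}_i$ rather than some other entry configuration. The cleanest way I see is to define, for a configuration fully contained in the tree $G'$, a direction/orientation datum (for instance, orient each edge of the snake's path and note that on the path $L$ all body edges must be oriented consistently because there is nowhere to fold), prove it is invariant under $1$-transitions that stay in $G'$, and observe $\mathsf{init}$ and $\mathsf{fin}$ disagree on it; combined with $n = |V(G_i)|$ being large enough that the whole body spans more of $L$ than the short protruding stub, this forces the head to reach $\bigcup_i V(G_i)$ and pins down the entry configuration. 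Once this structural lemma is in hand, the rest is bookkeeping with the definitions of transition and reachability.
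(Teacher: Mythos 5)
Your proposal is correct in substance and rests on the same structural facts as the paper's proof --- outside $\bigcup_i V(G_i)$ the graph is a tree, the snake's motion there is essentially forced, and the only gateway into $G_i$ is $(n,c_i)\to(n+1,c_i)$ --- but it packages the argument differently. The paper proves both directions simultaneously by iterating Observation~\ref{lem:eqconf}: from $\mathsf{init}$ the head's neighbor set is a singleton until each junction $(n-1,c_i)$, where it has exactly two elements (down, which one forced step later yields $\mathsf{init}_i$, or onward along the long path toward the next junction), and at the last junction the move into $G_t$ is again forced; this gives a chain of equivalent instances and never needs a separate ``a snake in a tree cannot turn around'' lemma, nor a case analysis of whether the snake ever enters a grid. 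You instead treat the directions separately: for $(\Leftarrow)$ you crawl from $\mathsf{init}$ to $\mathsf{init}_i$ and concatenate transition sequences --- which is fine, but note that no disjointness check between the two sequences is needed at all, since Definition~\ref{def:MoveToConfInEll} imposes no constraint across different time steps and reachability is transitive by plain concatenation; for $(\Rightarrow)$ you take a witnessing sequence, show the head must enter some $G_i$, and identify the first entry configuration with $\mathsf{init}_i$. That route can be completed, but two supporting claims need repair when you formalize it: the head does \emph{not} always have a unique unoccupied tree-neighbor (at each junction $(n-1,c_i)$ it has two), so the right invariant is that the head's trajectory is a non-backtracking walk (the new head is never $v_2$ nor any other body vertex), hence a simple path in the tree, which after the forced first move to $(n-1,n)$ can never reach the stub vertex $(0,n-2)$ --- this replaces the somewhat loose ``opposite orientations along $L$'' phrasing, since $\mathsf{fin}$ lies on the protruding stub rather than on $L$. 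Likewise, what pins down the entry configuration is not that $n$ exceeds the stub length, but that the body of the first configuration whose head lies in $V(G_i)$ consists of the last $n$ head positions (possibly padded by a prefix of $\mathsf{init}$), all of which lie consecutively along the pendant and the long path, which is exactly $\mathsf{init}_i$; also observe that the head cannot have visited a pendant $(n,c_m)$ with $m\neq i$ earlier, as its only non-backtracking continuation from there is into $V(G_m)$, contradicting the choice of the first entry time. With those points made precise your argument is a valid, if slightly longer, alternative to the paper's iterated head's-neighbor-set equivalences.
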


\begin{proof}
For every $1\leq i\leq t$, denote $\mathsf{init}'_i=((n-1,c_i),(n-1,c_i-1),\ldots,(n-1,c_i-(n-1)))$. Refer to Figure~\ref{fig:noKernelCons}. Recall that $\mathsf{init}=((n-1,n-1),(n-1,n-2),(n-1,n-3),\ldots,(n-1,0))$, and notice that $N^h(\mathsf{init})=\{(n-1,n)\}$. So, by Observation \ref{lem:eqconf}, we know that $\langle G,n,\mathsf{init},\mathsf{fin}\rangle$ is a \yes-instance if and only if $\langle G,n,((n-1,n),(n-1,n-1),(n-1,n-2),\ldots,(n-1,1)),\mathsf{fin}\rangle$ is a \yes-instance. By iteratively repeating this step $c_1-n$ additional times, we derive that $\langle G,n,\mathsf{init},\mathsf{fin}\rangle$ is a \yes-instance if and only if $\langle G,n,\mathsf{init}'_1,\mathsf{fin}\rangle$ is a \yes-instance.

Now, for any $1\leq i\leq t-1$, note that $N^h(\mathsf{init}'_i)=\{(n,c_i),(n-1,c_i+1)\}$ and $N^h(((n,c_i),(n-1,c_i),\ldots,(n-1,c_i-(n-2))))=\{(n+1,c_i)\}$. So, by Observation \ref{lem:eqconf}, we know that $\langle G,n,\mathsf{init}'_i,\mathsf{fin}\rangle$ is a \yes-instance if and only if at least one of the instances $\langle G,n,\mathsf{init}_i,\mathsf{fin}\rangle$ and $\langle G,n,((n-1,c_i+1),(n-1,c_i),\ldots,(n-1,c_i-(n-2))),\mathsf{fin}\rangle \}$ is a \yes-instance. Furthermore, repeating the argument of the first paragraph for $((n-1,c_i+1),(n-1,c_i),\ldots,(n-1,c_i-(n-2)))$, we obtain that $\langle G,n,((n-1,c_i+1),(n-1,c_i),\ldots,(n-1,c_i-(n-2))),\mathsf{fin}\rangle$ is a \yes-instance if and only if $\langle G,n,\mathsf{init}'_{i+1},\mathsf{fin}\rangle$ is a \yes-instance.

Putting the two claims above together, we derive that $\langle G,n,\mathsf{init},\mathsf{fin}\rangle$ is a \yes-instance if and only if there exists a \yes-instance in $\{\langle G,n,\mathsf{init}_i,\mathsf{fin}\rangle~|~1\leq i\leq t-1\}\cup\{ \langle G,n,\mathsf{init}'_{t},\mathsf{fin}\rangle\}$. As $N^h(\mathsf{init}'_t)=\{(n,c_t)\}$ and $N^h(((n,c_t),(n-1,c_t),\ldots,(n-1,c_t-(n-2))))=\{(n+1,c_t)\}$, by Observation \ref{lem:eqconf}, we know that $\langle G,n,\mathsf{init}'_{t},\mathsf{fin}\rangle$ is a \yes-instance if and only if $\langle G,n,\mathsf{init}_{t},\mathsf{fin}\rangle$ is a \yes-instance. In turn, this completes the proof.             
\end{proof}

Up until now, we have shown that the snake must reach one of the graphs corresponding to the $t$ instances of {\sc Hamiltonian Cycle} in order to reach $\mathsf{fin}$.
Next, we show that once the snake reaches one of these instances, it can reach $\mathsf{fin}$ if and only if that instance is a \yes-instance of {\sc Hamiltonian Cycle}. For the sake of clarity, we split the proof into two lemmas as follows.
     
\begin{lemma} \label{lem:noKer2}
If $G_i$ is a \yes-instance of the {\sc Hamiltonian Cycle} problem for some $1 \leq i\leq t$, then $\langle G,n,\mathsf{init}_i,\mathsf{fin}\rangle$ is a \yes-instance of {\sc Snake Game}. 
\end {lemma}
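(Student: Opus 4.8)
The plan is to construct an explicit $\ell$-transition from $\mathsf{init}_i$ to $\mathsf{fin}$ — a concrete legal sequence of head-moves — and then invoke Definitions~\ref{def:MoveToConfInOne}, \ref{def:MoveToConfInEll}, \ref{def:MoveToconf} and~\ref{def:SnakeGame}. The route I would use has two phases. Phase~1: starting from $\mathsf{init}_i$, whose head sits on the vertex $p_i := (r_{\mathrm{min}}^i,c_i)=(n+1,c_i)\in V(G_i)$, drive the head once around a Hamiltonian cycle $C=(u_0=p_i,u_1,\ldots,u_{n-1},u_0)$ of $G_i$; that is, perform the $n$ head-moves onto $u_1,u_2,\ldots,u_{n-1}$ and finally back onto $u_0$. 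Phase~2: move the head off $G_i$ via the connector vertex $(n,c_i)$, up to the long path, left along row $n-1$ to column $n-2$, and then up column $n-2$ to $(0,n-2)$, arguing that at the moment the head reaches $(0,n-2)$ the snake occupies exactly $\mathsf{fin}$.

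For Phase~1 I would verify, via Definition~\ref{def:MoveToConfInOne}, that each of the $n$ moves is a legal $1$-transition. Consecutive vertices of $C$ are adjacent in $G_i\subseteq G$, so the adjacency condition holds; for non-intersection I would track the body explicitly: after $j\le n-1$ of these moves the body consists of $u_0,\ldots,u_j$ (the $j+1$ cycle vertices already visited) together with $n-1-j$ vertices lying outside $V(G_i)$, so the next cycle vertex $u_{j+1}$ is fresh. The move deserving care is the $n$-th, where the head steps from $u_{n-1}$ onto $u_0$: there $u_0$ is precisely the current tail $v_n$, which is permitted since Definition~\ref{def:MoveToConfInOne} forbids the new head only from coinciding with $v_1,\ldots,v_{k-1}$. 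After Phase~1 the snake occupies exactly $V(G_i)$ with its head at $u_0$. For Phase~2 I would check that every vertex on the path $u_0\to(n,c_i)\to(n-1,c_i)\to\cdots\to(n-1,n-2)\to(n-2,n-2)\to\cdots\to(0,n-2)$ lies in $V(G)$: $(n,c_i)\in V^{\mathsf{Connectors}}$; the row-$(n-1)$ vertices with columns from $c_i$ down to $n$ lie in $V^{\mathsf{Connectors}}$ because $n\le c_i\le c_t$ (which I would extract from $c_{\mathrm{min}}^i=i(n+1)$ and the left-to-right placement of $G_1,\ldots,G_t$); and the remaining vertices lie in $V^{\mathsf{InitFinConfs}}$. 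Since at every step of Phase~2 the body is contained in $V(G_i)\cup\{(n,c_i)\}$ together with the already-traversed row-$(n-1)$ and column-$(n-2)$ vertices, it is disjoint from the not-yet-visited remainder of the (simple) path, so every move is legal; and because the last $n$ vertices visited are exactly $(0,n-2),(1,n-2),\ldots,(n-1,n-2)$, the resulting configuration equals $\mathsf{fin}$.

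I expect the work here to be bookkeeping rather than insight: one must keep precise track of the occupied vertex set after each move, both to certify non-intersection and to ensure the terminal configuration is \emph{literally} $\mathsf{fin}$ (the same vertices \emph{in the same order}), and one must use the coordinates fixed in the preprocessing step to confirm that all traversed vertices actually exist in $G$. The single conceptual point — and the reason the whole reduction works — is that a snake of length $n=|V(G_i)|$ can ``reverse its orientation'' only by traversing a full Hamiltonian cycle of $G_i$; this is exactly what makes the connector vertex $(n,c_i)$, which is part of the body in $\mathsf{init}_i$ and therefore initially blocks any retreat, available again as the head's exit move once Phase~1 is complete.
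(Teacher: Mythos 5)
Your proposal is correct and follows essentially the same route as the paper's proof: traverse the Hamiltonian cycle of $G_i$ in $n$ head-moves (using that the final move onto $u_0$ is legal because $u_0$ is then the tail, which Definition~\ref{def:MoveToConfInOne} exempts), arriving at the configuration $(u_0,u_{n-1},\ldots,u_1)$ occupying exactly $V(G_i)$, and then walk the head out through $(n,c_i)$ and along row $n-1$ and column $n-2$ to realize $\mathsf{fin}$. The paper compresses your Phase~2 into an appeal to Observation~\ref{lem:eqconf} (as in Lemma~\ref{lem:noKer1}), whereas you spell out the exit path and the membership of its vertices in $V^{\mathsf{Connectors}}\cup V^{\mathsf{InitFinConfs}}$; this is only a difference in level of detail, not of method.
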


\begin{proof}
Assume that $G_i$ is a \yes-instance of the {\sc Hamiltonian Cycle} problem for some $1 \leq i\leq t$. Let $V(G_i) = \{a_1, a_2, \ldots, a_n\}$ where $a_1 = (n+1,c_i)$. So, there exists a simple cycle $C=a_1-\ldots-a_n-a_1$ in $G_i$.
Note that $V(\mathsf{init}_i) \cap V(G_i) = \{a_1\}$, $\{a_1,a_2\}\in E(G_i)$ and $a_1\neq a_2$. 
Therefore, $(\mathsf{init}_i=(a_1,(n,c_i),(n-1,c_i),(n-1,c_i-1),\ldots (n-1,c_i-(n-3))),(a_2,a_1,(n,c_i),(n-1,c_i),(n-1,c_i-1),\ldots (n-1,c_i-(n-4))))$ is a $1$-transition.

By iteratively repeating the above transformation $n-1$ additional times, we get that the pair 
$(\mathsf{init}_i=(a_1,(n,c_i),\ldots (n-1,c_i-(n-3))), (a_1,a_n,\ldots a_2))$ is an $n$-transition. So, by Definition~\ref{def:MoveToconf}, we know that $\mathsf{init}_i$ can reach $(a_1,a_n,\ldots a_2)$. 
Moreover, by Observation \ref{lem:eqconf} (as in the proof of Lemma \ref{lem:noKer1}), it is easily seen that $(a_1,a_{n},\ldots a_2)$ can reach $\mathsf{fin}$. Combining these two statements together, we get that $\mathsf{init}_i$ can reach $\mathsf{fin}$. By Definition~\ref{def:SnakeGame}, we thus conclude that $\langle G,n,\mathsf{init}_i,\mathsf{fin}\rangle$ is a \yes-instance of the {\sc Snake Game} problem.              
\end{proof}

Now, we show that the opposite direction of Lemma \ref{lem:noKer2} is also true. 
Intuitively, if the snake is located at $\mathsf{init}_i$, then it must traverse a Hamiltonian cycle in $G_i$ in order to ``exit'' $G_i$ and reach $\mathsf{fin}$---in particular, the Hamiltonicity condition has to be satisfied otherwise the snake will intersect itself while trying to ``exit'' $G_i$.  

\begin{lemma}\label{lem:noKer3}
If $\langle G,n,\mathsf{init}_i,\mathsf{fin}\rangle$ is a \yes-instance of {\sc Snake Game} for some $1\leq i\leq t$, then $G_i$ is a \yes-instance of the {\sc Hamiltonian Cycle} problem.
\end {lemma}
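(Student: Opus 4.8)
The plan is to exploit the fact that $G_i$ sits inside $G$ as a ``pendant'' attached by a single edge, and then to read off a Hamiltonian cycle of $G_i$ from the trajectory of the snake's head at the moment the snake first escapes from $G_i$.

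\smallskip\noindent\emph{Step 1: the pendant structure.} First I would verify the purely structural claim that the only edge of $G$ between $V(G_i)$ and $V(G)\setminus V(G_i)$ is $\{a_1,(n,c_i)\}$, where $a_1:=(n+1,c_i)=(r^i_{\mathrm{min}},c_i)\in V(G_i)$, and that $(n,c_i)$ has in $G$ exactly the two neighbours $a_1$ and $(n-1,c_i)$. This is a short coordinate check: vertices of $V(G_i)$ have row $\geq n+1$; the connectors lie in rows $n-1$ and $n$; $V^{\mathsf{InitFinConfs}}$ lies in rows $\leq n-1$; and each $G_j$ with $j\neq i$ occupies a column range disjoint from that of $G_i$; grid-adjacency then forces any crossing edge to be $\{(n+1,c_i),(n,c_i)\}$. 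I would also observe that $G[V(G_i)]=G_i$ (the grid-adjacency relation agrees on $V(G_i)$), so any cycle on the vertex set $V(G_i)$ that uses edges of $G$ is a cycle of $G_i$.

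\smallskip\noindent\emph{Step 2: following the head into $G_i$.} Assume $\langle G,n,\mathsf{init}_i,\mathsf{fin}\rangle$ is a \yes-instance and fix a witnessing sequence $\mathsf{init}_i=\mathsf{conf}_1,\ldots,\mathsf{conf}_m=\mathsf{fin}$ of $1$-transitions. Let $h_0,h_1,h_2,\ldots$ be the successive head positions, so $h_0=a_1$ and $h_{j-1}$ is the head of $\mathsf{conf}_j$; writing $\mathsf{init}_i=(h_0,b_1,\ldots,b_{n-1})$ one has $\mathsf{conf}_j=(h_{j-1},\ldots,h_0,b_1,\ldots,b_{n-j})$ for $1\leq j\leq n$ and $\mathsf{conf}_j=(h_{j-1},\ldots,h_{j-n})$ for $j\geq n$. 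Since $v_2(\mathsf{init}_i)=(n,c_i)$ is occupied, the first move forces $h_1\in V(G_i)\setminus\{a_1\}$; by Step~1 the head can only leave $V(G_i)$ through the move $a_1\mapsto (n,c_i)$; and because $\mathsf{fin}$ has no vertex in $V(G_i)$, this move occurs. Let $\mathsf{conf}_p\to\mathsf{conf}_{p+1}$ be the \emph{first} time it occurs and set $s:=p-1$, so that $h_s=a_1$, $h_{s+1}=(n,c_i)$, and $h_0,h_1,\ldots,h_s\in V(G_i)$.

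\smallskip\noindent\emph{Step 3: the escape encodes a Hamiltonian cycle.} This is the heart of the argument. Since $h_0=a_1$ already, for the head to return to $a_1$ at step $s$ the vertex $a_1$ must not lie among the first $n-1$ vertices of $\mathsf{conf}_s$; inspecting the forms above, $a_1$ lies in the body of every $\mathsf{conf}_j$ with $1\leq j\leq n-1$, hence $s\geq n$. Then $\mathsf{conf}_s=(h_{s-1},h_{s-2},\ldots,h_{s-n})$; its $n$ vertices are distinct and all lie in $V(G_i)$, so $\{h_{s-n},\ldots,h_{s-1}\}=V(G_i)$, and since $a_1\in V(G_i)$ is not among the first $n-1$ of them, necessarily $a_1=h_{s-n}$, the tail of $\mathsf{conf}_s$. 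Therefore $h_{s-n}=a_1=h_s$, the consecutive pairs among $h_{s-n},h_{s-n+1},\ldots,h_{s-1},h_s$ are all edges of $G$ (each is a head move), and $h_{s-n},\ldots,h_{s-1}$ are precisely the $n$ distinct vertices of $V(G_i)$; by Step~1 every edge used lies in $G_i$, so $h_{s-n}-h_{s-n+1}-\cdots-h_{s-1}-h_{s-n}$ is a Hamiltonian cycle of $G_i$. Hence $G_i$ is a \yes-instance of {\sc Hamiltonian Cycle}.

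\smallskip I expect Step~3 to be the main obstacle: one must argue that before the head can leave $G_i$ the whole snake has necessarily been pulled into $G_i$ (which forces $s\geq n$ and identifies $a_1$ as the tail at that instant), and then notice that the single move carrying the head out of $G_i$ is exactly what certifies that the last $n$ head positions, together with the return to $a_1$, close into a Hamiltonian cycle. Step~1 and the forced first move in Step~2 are routine coordinate bookkeeping, and the (easy) question of how the snake subsequently travels from the escape point to $\mathsf{fin}$, handled as in Lemma~\ref{lem:noKer1} via head's neighbour sets, is not needed for this direction.
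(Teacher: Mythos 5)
Your proof is correct and follows essentially the same strategy as the paper's: both exploit the pendant structure of $G_i$ (the unique crossing edge $\{a_1,(n,c_i)\}$) together with the $1$-transition constraints to force the head to visit all of $V(G_i)$ before it can exit through $a_1$, and then read a Hamiltonian cycle off the resulting head trajectory. The only (harmless) difference is that you extract the cycle from the last $n+1$ head positions preceding the \emph{first} escape move, whereas the paper first establishes $\ell+1>n$ and reads the cycle off the first $n+1$ configurations; your variant neatly sidesteps that preliminary counting step.
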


\begin{proof}
Assume that $\langle G,n,\mathsf{init}_i,\mathsf{fin}\rangle$ is a \yes-instance of the {\sc Snake Game} problem for some $1\leq i\leq t$. By Definitions~\ref{def:MoveToconf} and~\ref{def:SnakeGame}, the pair $(\mathsf{init}_i, \mathsf{fin})$ is an $\ell$-transition for some $\ell\in \mathbb{N}$. Moreover, by Definition~\ref{def:MoveToConfInEll}, there exists a tuple $T = (\mathsf{conf}_1 = \mathsf{init}_i,\mathsf{conf}_2,\ldots,\mathsf{conf}_{\ell+1} = \mathsf{fin})$ such that, for every $1\leq i\leq \ell$, the pair $(\mathsf{conf}_i,\mathsf{conf}_{i+1})$ is a $1$-transition. For every $1\leq j\leq \ell +1$, let $\mathsf{conf}_j=(v^j_1,\ldots,v^j_n)$.

First, we show that $\ell +1>n$. By way of contradiction, suppose that $\ell +1\leq n$. Note that $|V(\mathsf{init}_i)| = |V(\mathsf{fin})| = n$. As $(\mathsf{init}_i, \mathsf{fin})$ is an $\ell$-transition for some $\ell \leq n-1$, by Lemma~\ref{lem:confGraph}, $|V(\mathsf{init}_i) \cap V(\mathsf{fin})| \geq n-\ell \geq 1$. By construction, $|V(\mathsf{init}_i) \cap V(\mathsf{fin})| \leq 1$, so we get that $|V(\mathsf{init}_i) \cap V(\mathsf{fin})| = 1$, which in turn gives us that $\ell = n-1$. So, by Corollary~\ref{cor:(k-1)Transition}, head of the configuration $\mathsf{init}_i$ should be same as the tail of the configuration $\mathsf{fin}$, which is a contradiction. Thus, we get that $\ell +1>n$, hence it is well defined to consider the first $n+1$ configurations of the tuple $T$. 

Let $a = (n+1,c_i)$. Observe that $v^1_1 = a \in V(G_i)$. Furthermore, observe that $\mathsf{HamToSna}$ constructs $G$ in a way that $a$ is the only vertex in $V(G_i)$ that is connected to vertices not in $V(G_i)$. As $v_2^1=(n,c_i)$, by the definition of $1$-transition, $v_1^2\neq (n,c_i)$, $v_1^2\neq v^1_1$ and $\{ v_1^1, v_1^2\}  \in E(G)$. Moreover, $(n,c_i)$ is the only neighbor of $v^1_1$ which is not in $V(G_i)$, so $v_1^2 \in V(G_i)$ and $(v_1^1, v_1^2) \in E(G_i)$. By repeating the same argument for the first $n$ configurations of the tuple $T$, we get that for every $1 \leq k < j \leq n$, $v_1^j \in V(G_i)$, $v_1^k \neq v_1^j$ and for every $1 \leq j < n$, $\{ v_1^j, v_1^{j+1}\} \in E(G_i)$. In turn, we get that for every $1 \leq k < j \leq n$, $v_j^n \in V(G_i)$, $v_k^n \neq v_j^n$ and for every $1 \leq j < n$, $(v_j^n, v_{j+1}^n) \in E(G_i)$. Therefore, $v^n_1-\cdots-v^n_n$ is a Hamiltonian {\em path} in $G_i$.

Towards the proof that $v^n_1$ is adjacent to $v^n_n$ in $G_i$, notice that $v^n_n = a \neq v^n_1$. Moreover, from tuple $T$, we know that $\mathsf{fin}$ is reachable from $\mathsf{conf}_n$. As $a$ is the only vertex in $V(G_i)$ that is connected to vertices not in $V(G_i)$ and by the definition of $1$-transition, we get that $v_1^{n+1} = a$. This implies that $\{ a, v^1_n\} \in E(G_i)$. Therefore, we get that $a=v^n_1-\ldots,v^n_n-a$ is a Hamiltonian cycle in $G_i$. We thus conclude that $G_i$ is a \yes-instance of the {\sc Hamiltonian Cycle} problem.                
\end{proof}

We now combine Lemmas~\ref{lem:noKer2} and~\ref{lem:noKer3} to conclude the correctness of the construction. 

\begin{lemma}\label{lem:algoWorks}
Let $G_1,G_2,\ldots,G_t$ be $t$ instances of {\sc Hamiltonian Cycle} on grid graphs. Then, at least one of the instances $G_1,G_2,\ldots,G_t$ is a \yes-instance of {\sc Hamiltonian Cycle} if and only if $\mathsf{HamToSna}(G_1,G_2,\ldots,G_t)$ is a \yes-instance of {\sc Snake Game}.
\end{lemma}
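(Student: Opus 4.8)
The plan is to deduce Lemma~\ref{lem:algoWorks} directly by chaining together Lemmas~\ref{lem:noKer1}, \ref{lem:noKer2}, and~\ref{lem:noKer3}, which have already established the three links needed. Recall the goal: at least one of $G_1,\dots,G_t$ is a \yes-instance of {\sc Hamiltonian Cycle} if and only if $\mathsf{HamToSna}(G_1,\dots,G_t)=\langle G,n,\mathsf{init},\mathsf{fin}\rangle$ is a \yes-instance of {\sc Snake Game}. The bridge in both directions is the family of intermediate instances $\{\langle G,n,\mathsf{init}_i,\mathsf{fin}\rangle \mid 1\leq i\leq t\}$, where $\mathsf{init}_i$ is the configuration with head at $(n+1,c_i)$ whose remaining vertices avoid $V(G_i)$.

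First I would handle the forward direction. Suppose some $G_i$ is a \yes-instance of {\sc Hamiltonian Cycle}. By Lemma~\ref{lem:noKer2}, $\langle G,n,\mathsf{init}_i,\mathsf{fin}\rangle$ is then a \yes-instance of {\sc Snake Game}. Hence there exists a \yes-instance in $\{\langle G,n,\mathsf{init}_i,\mathsf{fin}\rangle \mid 1\leq i\leq t\}$, so by the ``if'' direction of Lemma~\ref{lem:noKer1}, $\langle G,n,\mathsf{init},\mathsf{fin}\rangle$ is a \yes-instance, i.e. $\mathsf{HamToSna}(G_1,\dots,G_t)$ is a \yes-instance of {\sc Snake Game}.

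Conversely, suppose $\mathsf{HamToSna}(G_1,\dots,G_t)=\langle G,n,\mathsf{init},\mathsf{fin}\rangle$ is a \yes-instance of {\sc Snake Game}. By the ``only if'' direction of Lemma~\ref{lem:noKer1}, there exists an index $i$ with $1\leq i\leq t$ such that $\langle G,n,\mathsf{init}_i,\mathsf{fin}\rangle$ is a \yes-instance of {\sc Snake Game}. By Lemma~\ref{lem:noKer3} applied to that index $i$, $G_i$ is a \yes-instance of the {\sc Hamiltonian Cycle} problem, so at least one of $G_1,\dots,G_t$ is a \yes-instance. This completes both directions and hence the proof.

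I do not expect any genuine obstacle here: this lemma is purely the assembly step, and all the real work (the careful $1$-transition bookkeeping at the ``neck'' of the snake near $\mathsf{init}$, and the Hamiltonicity-forcing argument inside $G_i$) has already been discharged in the preceding lemmas. The only thing to be slightly careful about is matching the exact definition of $\mathsf{init}_i$ used in Lemmas~\ref{lem:noKer1}, \ref{lem:noKer2}, and~\ref{lem:noKer3} so that the three statements compose without a gap; since all three lemmas are phrased in terms of the same $\mathsf{init}_i:=((n+1,c_i),(n,c_i),(n-1,c_i),(n-1,c_i-1),\ldots,(n-1,c_i-(n-3)))$, the composition is immediate.
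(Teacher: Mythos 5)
Your proposal is correct and follows exactly the same route as the paper's own proof: both directions are obtained by composing Lemma~\ref{lem:noKer1} with Lemma~\ref{lem:noKer2} (forward) and Lemma~\ref{lem:noKer3} (backward) through the intermediate instances $\langle G,n,\mathsf{init}_i,\mathsf{fin}\rangle$. No discrepancies or gaps.
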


\begin{proof}
$(\Rightarrow)$ Assume that at least one of $G_1,G_2,\ldots,G_t$ is a \yes-instance of {\sc Hamiltonian Cycle}. Let $i$ be such that $G_i$ is a \yes-instance of {\sc Hamiltonian Cycle}.  From Lemma~\ref{lem:noKer2}, we get that $\langle G,n,\mathsf{init}_i,\mathsf{fin}\rangle$ is a \yes-instance of {\sc Snake Game}. Therefore, at least one instance in $\{\langle G,n,\mathsf{init}_i,\mathsf{fin}\rangle~|~1\leq i\leq t\}$ is a \yes-instance. So, from Lemma~\ref{lem:noKer1}, we get that $\mathsf{HamToSna}$ $(G_1,G_2,\ldots,G_t)=\langle G,n,\mathsf{init},\mathsf{fin}\rangle$ is a \yes-instance of {\sc Snake Game}.

$(\Leftarrow)$ Assume that $\mathsf{HamToSna}(G_1,G_2,\ldots,G_t)=\langle G,n,\mathsf{init},\mathsf{fin}\rangle$ is a \yes-instance of {\sc Snake Game}. Then, from Lemma~\ref{lem:noKer1}, we get that at least one instance in $\{\langle G,n,\mathsf{init}_i,$ $\mathsf{fin}\rangle~|~1\leq i\leq t\}$ is a \yes-instance. Let $i$ be such that $\langle G,n,\mathsf{init}_i,\mathsf{fin}\rangle$ is a \yes-instance. In turn, from Lemma~\ref{lem:noKer3}, we get that $G_i$ is a \yes-instance of {\sc Hamiltonian Cycle}. So, at least one of $G_1,G_2,\ldots,G_t$ is a \yes-instance of {\sc Hamiltonian Cycle}.      
\end{proof}

\subsection{Conclusion of the Proof}

Up until now, we have proved the correctness of our reduction. Now, we are ready to prove the result of the (unlikely) existence of a polynomial kernel (or even compression) for {\sc Snake Game} on grid graphs.   

\begin{lemma} \label{lem:lastlemma}
The {\sc Hamiltonian Cycle} problem on grid graphs cross-composes into the {\sc Snake Game} problem on grid graphs.
\end{lemma}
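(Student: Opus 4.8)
The plan is to check, one at a time, the two requirements in Definition~\ref{def:cross-comp} for the reduction $\mathsf{HamToSna}$ built in Section~\ref{sec:kernelReduction}; essentially all the substance has already been isolated into the preceding observations and lemmas, so what remains is bookkeeping. First I would dispose of a standard technicality. Since a grid graph on $n$ vertices is encoded in size $s\ge n$, two grid graphs of the same encoding size need not have the same number of vertices, so I would first group the input instances by a polynomial equivalence relation that places two grid graphs in the same class exactly when they have the same number of vertices (there are at most $s$ such classes among instances of size $s$, and within a class instances can additionally be assumed connected, else they are trivial \no-instances). Henceforth I assume all of $G_1,\dots,G_t$ share a common vertex count $n$, which is precisely what makes $\mathsf{HamToSna}$ well defined as stated.

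Next I would verify the parameter bound: the output of $\mathsf{HamToSna}(G_1,\dots,G_t)$ is the instance $\langle G,n,\mathsf{init},\mathsf{fin}\rangle$, whose parameter is exactly $n$, and since $n\le s$ this parameter is bounded by the polynomial $p(s)=s$, as required by the first condition. Then I would verify the equivalence of answers, which is immediate from the work already done: Observation~\ref{obs:polRed} gives that $\mathsf{HamToSna}$ runs in polynomial time and produces a legal instance of {\sc Snake Game} on a grid graph (note $|V(G)|=\OO(tn)$ is polynomial in the total input length, so ``polynomial time'' is consistent), while Lemma~\ref{lem:algoWorks} gives that $\langle G,n,\mathsf{init},\mathsf{fin}\rangle$ is a \yes-instance of {\sc Snake Game} if and only if at least one of $G_1,\dots,G_t$ is a \yes-instance of {\sc Hamiltonian Cycle} on grid graphs. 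Combining these two checks, $\mathsf{HamToSna}$ is a cross-composition from {\sc Hamiltonian Cycle} on grid graphs into {\sc Snake Game} on grid graphs, which is exactly the claim.

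The genuinely hard part of this section is not the present (wrapping) lemma but the correctness of the reduction itself, i.e.\ Lemmas~\ref{lem:noKer1}--\ref{lem:noKer3} feeding into Lemma~\ref{lem:algoWorks}, all of which are already established; here the only real care needed is the polynomial-equivalence-relation preprocessing (so that ``instances of the same size'' can be upgraded to ``instances with the same number of vertices'') and keeping straight that the reported parameter is the snake length $n$ rather than $|V(G)|$. Once the cross-composition is in hand, a subsequent application of Proposition~\ref{prop:noKern}, using that {\sc Hamiltonian Cycle} on grid graphs is \NPH, would yield the intended consequence that {\sc Snake Game} on grid graphs admits no polynomial compression (hence no polynomial kernel) unless $\NP\subseteq\coNPpoly$; but that conclusion is a separate statement from the lemma being proved here.
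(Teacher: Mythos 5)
Your proposal is correct and follows essentially the same route as the paper: invoke Observation~\ref{obs:polRed} for polynomial-time computability and validity of the output, invoke Lemma~\ref{lem:algoWorks} for the OR-equivalence, and check the conditions of Definition~\ref{def:cross-comp}. You are in fact slightly more careful than the paper, which silently assumes that same-size instances have the same vertex count $n$ and does not explicitly verify the parameter bound $k=n\leq p(s)$; your grouping by vertex count and explicit check of that bound tidy up these technicalities without changing the argument.
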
 

\begin{proof}
By Observation \ref{obs:polRed}, $\mathsf{HamToSna}$ is computable in polynomial-time. By Lemma \ref{lem:algoWorks}, given any $t$ instances of the {\sc Hamiltonian Cycle} problem on grid graphs, $\mathsf{HamToSna}$ evaluates to a \yes-instance of the {\sc Snake Game} problem on grid graphs if and only if at least one of the $t$ instances of {\sc Hamiltonian Cycle} is a \yes-instance. Thus, by Definition \ref{def:cross-comp}, {\sc Hamiltonian Cycle} on grid graphs cross-composes into {\sc Snake Game} on grid graphs.
\end {proof}

Thus, from Lemma~\ref{lem:lastlemma} and Proposition~\ref{prop:noKern}, we get the following theorem.

\begin{theorem}
The {\sc Snake Game} problem on grid graphs does not admit a polynomial compression unless \NP $\subseteq$\coNPpoly.
\end{theorem}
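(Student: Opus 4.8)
The plan is to obtain the theorem as an immediate consequence of the cross-composition established in Lemma~\ref{lem:lastlemma} together with the general framework recorded in Proposition~\ref{prop:noKern}. First I would recall that, by Proposition~\ref{prop:noKern}, it suffices to exhibit an \NP-hard (unparameterized) problem $\Pi$ that cross-composes into {\sc Snake Game} on grid graphs; the non-existence of a polynomial compression for {\sc Snake Game} on grid graphs then follows unless $\NP\subseteq\coNPpoly$.

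For the source problem $\Pi$ I would take {\sc Hamiltonian Cycle} on grid graphs, which is \NP-complete by the classical result of Papadimitriou and Vazirani~\cite{DBLP:journals/jal/PapadimitriouV84} quoted in Section~\ref{sec:prelims}. The required cross-composition is precisely the content of Lemma~\ref{lem:lastlemma}: the reduction function $\mathsf{HamToSna}$ runs in polynomial time and outputs a legal {\sc Snake Game} instance on a grid graph (Observations~\ref{obs:polRed} and the construction of Section~\ref{sec:kernelReduction}); its parameter $k=n$ is bounded by the common size $s$ of the $t$ input instances, since each $G_i$ has $n\le s$ vertices, so $k\le p(s)$ holds with $p$ the identity; and by Lemma~\ref{lem:algoWorks} the produced instance is a \yes-instance if and only if at least one of $G_1,\dots,G_t$ admits a Hamiltonian cycle. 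These are exactly the three requirements of Definition~\ref{def:cross-comp}, so the hypotheses of Proposition~\ref{prop:noKern} are met.

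Combining the two ingredients, Proposition~\ref{prop:noKern} applies verbatim and yields the claimed conditional lower bound. There is no real obstacle remaining at this stage: all of the substantive work---the design of $\mathsf{HamToSna}$, the argument via head's-neighbour sets (Observations~\ref{obs:1-trans} and~\ref{lem:eqconf}, Lemma~\ref{lem:noKer1}) that the snake can only ``turn around'' after entering one of the embedded grid graphs, and the two directions of the Hamiltonicity equivalence (Lemmas~\ref{lem:noKer2} and~\ref{lem:noKer3})---has already been carried out in Section~\ref{sec:kernel}. The only point worth an explicit sentence is the polynomial bound on the parameter, which is immediate from $k=n\le s$; everything else is a direct invocation of the stated results.
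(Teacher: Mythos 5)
Your proposal is correct and follows essentially the same route as the paper: invoke Lemma~\ref{lem:lastlemma} (the cross-composition from {\sc Hamiltonian Cycle} on grid graphs via $\mathsf{HamToSna}$), cite the \NP-hardness of {\sc Hamiltonian Cycle} on grid graphs, and apply Proposition~\ref{prop:noKern}. Your explicit remark that the parameter satisfies $k=n\le s$ is a small additional check of Definition~\ref{def:cross-comp} that the paper leaves implicit, but it does not change the argument.
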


\begin{proof}
By Lemma \ref{lem:lastlemma}, {\sc Hamiltonian Cycle} on grid graphs cross-composes into {\sc Snake Game} on grid graphs. As {\sc Hamiltonian Cycle} on grid graphs is an \NP-hard problem~\cite{DBLP:journals/jal/PapadimitriouV84}, Proposition \ref{prop:noKern} implies that {\sc Snake Game} on grid graphs does not admit a polynomial compression unless \NP$\subseteq$\coNPpoly.
\end{proof}
\section{Treewidth Reduction on General Graphs}\label{sec:treewidth}

In this section, we prove that given an instance of {\sc Snake Game}, we can get an equivalent instance of {\sc Snake Game} where the treewidth of the graph is bounded by a polynomial in the size of the snake. Our proof is based on a theorem that, for a given undirected graph $G$ and $k\in \mathbb{N}$, asserts the following statement. We can, in time $\OO(|V(G)| + |E(G)|)$, either find a specific ``pattern'' in the graph called a wall, or determine that $G$ has treewidth $k^{\OO(1)}$. Given an instance $\langle G,k,\mathsf{init},\mathsf{fin}\rangle$ of {\sc Snake Game}, we show how we use that pattern in $G$ in order to decrease the number of vertices in $G$, yet retain an equivalent instance of {\sc Snake Game}. Specifically, we utilize rerouting arguments to find a so called {\em irrelevant edge} (with respect to contraction), inspired by the classic work of Robertson and Seymour \cite{article:Thedisjointpathsproblem}.  Because either way we obtain an exploitable structure (a wall or small treewidth), this method is known as ``win/win approach'' \cite{DBLP:books/sp/CyganFKLMPPS15}. By applying this method repeatedly, we derive in polynomial time an equivalent instance of {\sc Snake Game}, $\langle G',k,\mathsf{init},\mathsf{fin}\rangle$, such that $G'$ has treewidth $k^{\OO(1)}$.

First, we present the formal definition of the specific pattern we are looking for (see Figure \ref{fig:4-wall}).      

\begin{definition}[{\bf Elementary $r$-Wall}]\label{def:ElemWall}
Let $r\in \mathbb{N}$. Let $G_r$ be the $r\times 2r$-grid, i.e. $V(G_r)=\{(i,j)~|~i\in \{1,\ldots,r\}, j\in \{1,\ldots,2r\} \}$ and $E(G_r)=\{ \{(i,j),(i',j')\} ~|~|i-i'|+|j-j'|=1\}$. The {\em elementary $r$-wall} is the graph obtained from $G_r$ by deleting all edges $\{ (2i-1, 2j-1),(2i,2j-1)\}$
for $i\in \{ 1,2,\ldots,\lfloor r/2\rfloor \}$ and $j\in \{ 1,2,\ldots,r\}$ and all edges $\{ (2i,2j),(2i+1,2j)\}$
for $i\in \{ 1,2,\ldots,\lfloor(r-1)/2\rfloor \}$ and $j\in \{ 1,2,\ldots,r\}$ , and then deleting the two resulting vertices
of degree $1$. 
\end{definition}

\begin{figure}
\centering
    \includegraphics[width=0.75\textwidth]{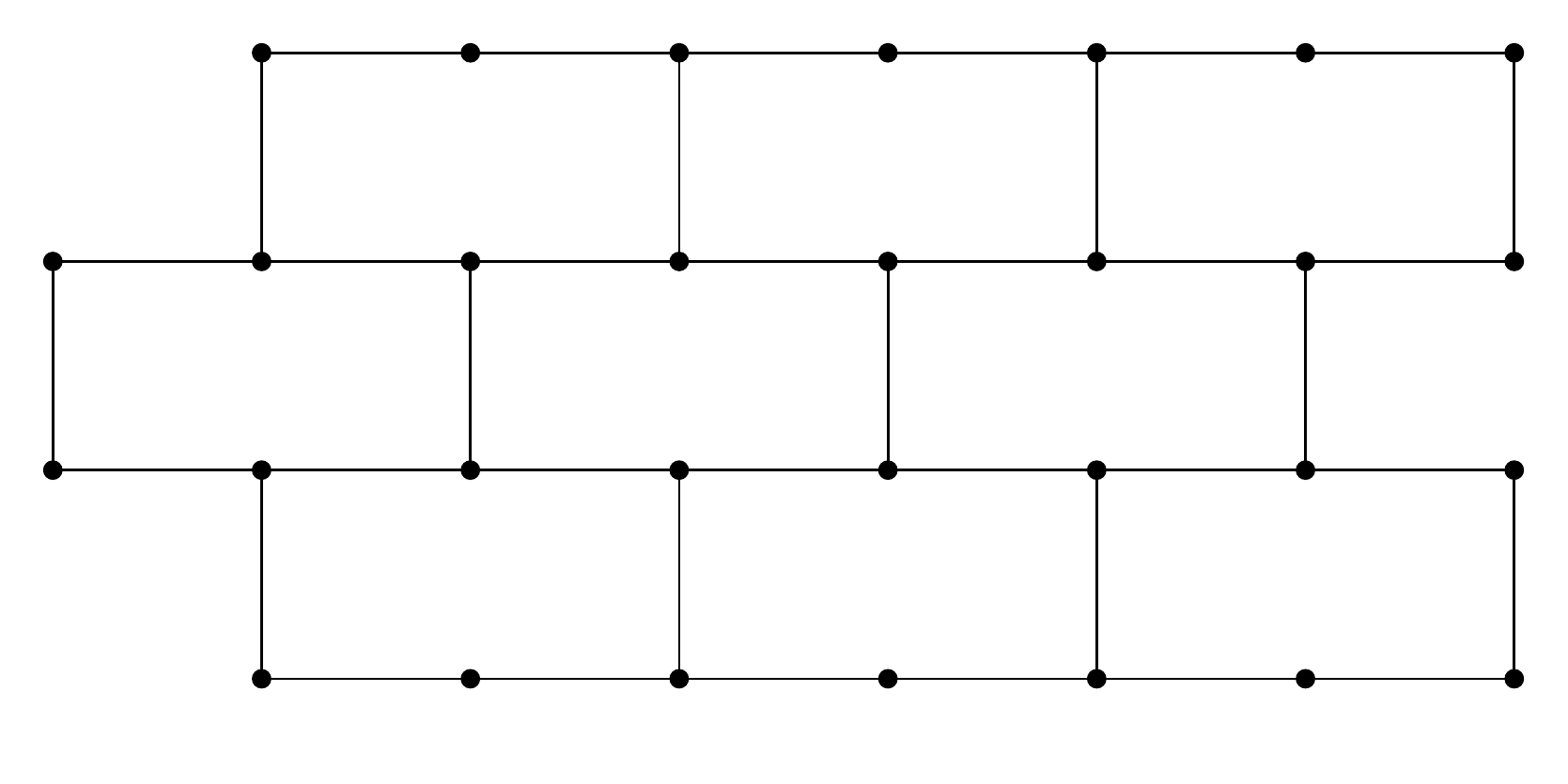}
    \caption{An elementary 4-wall.}
  \label{fig:4-wall}
\end{figure}

\begin{definition}[{\bf $r$-Wall}]\label{def:Wall}
Let $r\in \mathbb{N}$. A graph $G$ is an {\em $r$-wall} if it can be obtained from an elementary $r$-wall by subdividing edges.
\end {definition}

We would like to exploit the notion of an $r$-wall to present our reduction. To this end, we say that a graph $G$ {\em contains an $r$-wall} if there exists a subgraph $H$ of $G$ such that $H$ is an $r$-wall. 
Now, we aim to show that, given an instance $\langle G,k,\mathsf{init},\mathsf{fin}\rangle$ of {\sc Snake Game} where $G$ contains a $3\sqrt{k}$-wall, we can efficiently decrease the number of vertices in $G$ and still retain an equivalent instance of {\sc Snake Game}. In particular, we consider a $3\sqrt{k}$-wall because of its following special property. For any two vertices $s$ and $t$ in a $3\sqrt{k}$-wall $H$, there exists a simple path (or simple cycle if $s=t$) in $H$ that starts with $s$ and ends with $t$ whose size is at least $k$. Moreover, as it follows from the next observation, this property is preserved even if we contract any edge in $H$. 

\begin{observation}\label{obs:kpathexis}
Let $H$ be a $3\sqrt{k}$-wall. Let $e=\{ u,v\} \in E(H)$. Then, for every pair of vertices $s,t\in V(H/e)$, there exists a (simple) path $P$ (or cycle if $s=t$) in $H/e$ between $s$ and $t$ such that the size of $P$ is at least $k$. 
\end{observation}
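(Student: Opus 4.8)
The plan is to argue directly from the grid-like structure of walls, showing that contracting a single edge leaves essentially all of that structure intact. Recall from Definition~\ref{def:ElemWall} that an elementary $r$-wall is obtained from the $r\times 2r$-grid by deleting only \emph{vertical} edges (and two resulting degree-$1$ vertices); consequently it contains $r$ pairwise vertex-disjoint \emph{rows} $R_1,\dots,R_r$, each being a path on at least $2r-2$ vertices, and between every two consecutive rows $R_i$ and $R_{i+1}$ there remain $\Omega(r)$ ``rungs'', i.e.\ single edges from a vertex of $R_i$ to a vertex of $R_{i+1}$. Subdividing edges (Definition~\ref{def:Wall}) only lengthens rows and rungs, so a $3\sqrt k$-wall $H$ still decomposes into $3\sqrt k$ vertex-disjoint paths, each of size $\ge 6\sqrt k-2$, joined by $\Omega(\sqrt k)$ rungs between consecutive ones, whence $|V(H)|\ge 18k-O(\sqrt k)$. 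I would first record that $H/e$ inherits all of this up to a harmless perturbation: if $e$ lies inside a row, that row loses a single vertex; if $e$ is (a subdivision arc of) a rung, we lose one rung between a single pair of consecutive rows, of which $\Omega(\sqrt k)\ge 2$ survive for all but finitely many $k$; and the merged vertex $v_e$ is adjacent to every former neighbour of $u$ and of $v$, so it can only help. Hence $H/e$ still contains $3\sqrt k$ vertex-disjoint paths of size $\ge 6\sqrt k-3$ with a surviving rung between each consecutive pair.

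Given $s,t\in V(H/e)$, I would then build a ``serpentine'' path. Each of $s$, $t$ and $v_e$ lies on (or, if it is an internal subdivision vertex of a rung, one short step away from) one of these rows; say $s$ is on $R_a$ and $t$ on $R_b$. Rearrange the order in which the path visits the rows so that it starts at $R_a$, ends at $R_b$, and passes through each row exactly once; within the walk, enter $R_a$ at $s$, traverse as much of each visited row as possible before stepping to the next row through a surviving rung, and on the last row end exactly at $t$. Using the local ``brick'' detours available in a wall -- each elementary face lets one lengthen the covered portion by $2$ or $4$ -- one can fine-tune the traversal so that it covers at least half of the vertices of every row. This yields a simple $s$--$t$ path of size at least $\tfrac12\cdot 3\sqrt k\cdot(6\sqrt k-3)-O(\sqrt k)=9k-O(\sqrt k)\ge k$, the exception being finitely many small values of $k$, which are immediate. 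When $s=t$ the same construction produces a cycle through $s$ instead: since at least two rungs join $R_a$ to an adjacent row, the serpentine walk can leave $R_a$ and return to it.

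The crux of the argument is the ``endpoint fitting'': arranging that $s$ and $t$ are precisely the two ends of the path (respectively that the cycle passes through $s$) while still covering enough of the wall to reach $k$. This needs a short case analysis -- $s$ and $t$ on the same row, one of them on a boundary row, one of them equal to $v_e$, and $e$ a rung versus a row edge -- verifying in each case that the path can be anchored at $s$ and $t$ at the cost of at most $O(\sqrt k)$ uncovered vertices, which the slack $9k-k=8k-O(\sqrt k)$ easily absorbs. Everything else is routine bookkeeping once the row/rung accounting is fixed. A convenient way to organise the write-up is to first prove the analogue for $H$ in place of $H/e$ (the property quoted just before this observation) with an explicit surplus of, say, $2\sqrt k$ vertices, and then observe that the at most one vertex and at most one rung lost to contracting $e$ are absorbed by that surplus.
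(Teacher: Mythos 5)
The paper itself supplies no proof of this observation (it is asserted as self-evident), so you are filling in detail the authors omit. Your row-and-rung accounting for the wall and your treatment of the contracted edge as a harmless $O(1)$ perturbation are both sound; one small point worth recording explicitly is that when $e$ is an unsubdivided rung, the two rows incident to it share the vertex $v_e$ after contraction and are no longer vertex-disjoint, but this costs only one vertex in whichever traversal skips $v_e$.

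The step that fails as written is the serpentine itself: ``rearrange the order in which the path visits the rows so that it starts at $R_a$, ends at $R_b$, and passes through each row exactly once.'' Rungs exist only between \emph{consecutive} rows, so if each row is visited in a single contiguous stretch, the sequence of rows visited is a Hamiltonian path of the path graph on the indices $1,\dots,r$ and is therefore monotone; such a traversal must begin in $R_1$ and end in $R_r$ (or vice versa), and for interior $a,b$ no such ordering exists. The count ``at least half of every row, hence $9k-O(\sqrt k)$'' presupposes this impossible ordering, and the claim that anchoring at $s$ and $t$ costs only $O(\sqrt k)$ uncovered vertices is false under your plan: what you would actually sacrifice is entire rows. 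The standard repair is a \emph{double} serpentine: between every two consecutive rows there are rungs in both the left half (columns $1,\dots,r$) and the right half (columns $r+1,\dots,2r$), so one can descend from $R_1$ to $R_r$ zigzagging inside the left half-rows and ascend back to $R_1$ zigzagging inside the right half-rows, closing up into a single cycle $C$ through all but $O(\sqrt k)$ of the roughly $18k$ vertices of $H/e$; since each half contains at least two usable rungs per row-gap, the rungs used by $C$ can be chosen so that $C$ passes through any two prescribed vertices (in particular through $v_e$, or through a prescribed subdivision vertex of a rung). The longer $s$--$t$ arc of $C$ then has at least $(18k-O(\sqrt k))/2\ge k$ vertices, and $C$ itself disposes of the case $s=t$; this also eliminates most of your case analysis. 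With that substitution the argument goes through; without it, the endpoint fitting that you yourself identify as the crux is not actually carried out.
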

 
Next, we consider the case where we have two configurations $\mathsf{conf}_1$ and $\mathsf{conf}_2$, and a (simple) path of size at least $k$ between the head of $\mathsf{conf}_1$ and the tail of $\mathsf{conf}_2$. In addition, we suppose that the path does not ``intersect'' $\mathsf{conf}_1$ and $\mathsf{conf}_2$ besides at these two vertices.  We prove (in Lemma \ref{lem:canREach}) that, in this case, $\mathsf{conf}_2$ must be reachable from $\mathsf{conf}_1$. Intuitively, the snake can move from $\mathsf{conf}_1$ through the path and thereby it can reach $\mathsf{conf}_2$. 
To this end, we start with a simple observation.  
In this observation, we consider the case where we have a configuration $\mathsf{conf}_1$ and a (simple) path $P$ of size at least $k$ that starts at the head of $\mathsf{conf}_1$, ends at a vertex $t$, and the remainder of $P$ and $\mathsf{conf}_1$ do not intersect. Informally, our observation states that in this case, by making the snake traverse $P$, we get that $\mathsf{conf}_1$ can reach some configuration $\mathsf{conf}_2$ where $t$ is the head, and the rest of the vertices of the snake is located on $P$.
Moreover, if $P$ is of size exactly $k$, then it corresponds to a configuration that is reachable from $\mathsf{conf}_1$. Notice that in this case, the path is reachable from $\mathsf{conf}_1$ in a ``reversed'' order of vertices, as the first vertex of the path ``becomes'' the tail of the snake when the snake traverses the path.            

\begin{observation}\label{obs:canReach22}
Let $\langle G,k,\mathsf{init},\mathsf{fin}\rangle$ be an instance of {\sc Snake Game}. Let $\mathsf{conf}_1=(u_1,\ldots,u_k)$ be a configuration. Let $P=(v_1,\ldots,v_\ell)$ be a simple path or cycle in $G$ of size at least $k$, such that $u_1=v_1$, and for all $2\leq i\leq k$ and $2\leq j\leq \ell$, $u_i\neq v_j$.  
Then, there exists a configuration $\mathsf{conf}_2=(u'_1,\ldots,u'_k)$ such that $u'_1=v_\ell$, for all $1\leq i\leq k$, $u'_i\in V(P)$, and $\mathsf{conf}_1$ can reach $\mathsf{conf}_2$. 

In addition, if $P$ is a path of size $k$, then $\mathsf{conf}=(v_\ell,\ldots,v_1)$ is a configuration reachable from $\mathsf{conf}_1$. 
\end{observation}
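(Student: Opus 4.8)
The plan is to prove this by making the snake traverse the path $P$ one vertex at a time, tracking its configuration along the way, and then invoking Lemma~\ref{lem:confGraph} (or directly the definition of $1$-transition) to certify that each step is legal. First I would set up notation: write $\ell \geq k$ for the size of $P$, and for each $1 \leq j \leq \ell - k + 1$ define the tuple $\mathsf{conf}^{(j)} = (v_{j+k-1}, v_{j+k-2}, \ldots, v_j)$ — that is, the snake occupying the $k$ consecutive vertices $v_j, \ldots, v_{j+k-1}$ of $P$, with head at $v_{j+k-1}$ and tail at $v_j$, in reversed order. Each $\mathsf{conf}^{(j)}$ is indeed a $(G,k)$-configuration: consecutive vertices are adjacent because $P$ is a path (or cycle), and all $k$ vertices are distinct because $P$ is simple. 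Then I would show, for each $1 \leq j \leq \ell - k$, that the pair $(\mathsf{conf}^{(j)}, \mathsf{conf}^{(j+1)})$ is a $1$-transition: the ``shift'' condition $v_i' = v_{i-1}$ of Definition~\ref{def:MoveToConfInOne} holds by construction, and the new head $v_{j+k}$ is not among the old body vertices $v_j, \ldots, v_{j+k-1}$ again because $P$ is simple (and if $P$ is a cycle, $\ell \geq k+1$ so $v_{j+k}$ is still a genuinely new vertex among those indices, using that a cycle of size $\ell$ has $\ell$ distinct vertices).

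The one subtlety is the very first step, getting from $\mathsf{conf}_1 = (u_1, \ldots, u_k)$ to a configuration that lies entirely on $P$. Here I would observe that $\mathsf{conf}_1$ itself need not equal $\mathsf{conf}^{(1)}$ — it shares only the single vertex $u_1 = v_1$ with $P$. So I would instead run the snake forward along $P$ starting from $\mathsf{conf}_1$: define $\mathsf{conf}_1 = \mathsf{conf}^{[0]}$, and $\mathsf{conf}^{[m]} = (v_{m+1}, v_m, \ldots, v_2, u_1, u_2, \ldots, u_{k-m})$ for $1 \leq m \leq k-1$, i.e.\ the head has advanced $m$ steps into $P$ while the tail has retreated $m$ steps along the original snake. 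Each such tuple is a configuration: adjacency is inherited from $P$ and from $\mathsf{conf}_1$ at the junction $v_2$–$u_1$ (wait — the junction is $v_1 = u_1$ adjacent to $v_2$, fine), and distinctness holds precisely because of the hypothesis that $u_i \neq v_j$ for $2 \leq i \leq k$, $2 \leq j \leq \ell$, together with $P$ being simple and $\mathsf{conf}_1$ being a configuration. And $(\mathsf{conf}^{[m]}, \mathsf{conf}^{[m+1]})$ is a $1$-transition by the same head-freshness argument. After $k-1$ such steps we reach $\mathsf{conf}^{[k-1]} = (v_k, v_{k-1}, \ldots, v_1) = \mathsf{conf}^{(1)}$, which lies entirely on $P$. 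Then I chain on the $\ell - k$ transitions from the previous paragraph to arrive at $\mathsf{conf}^{(\ell-k+1)} = (v_\ell, v_{\ell-1}, \ldots, v_{\ell-k+1})$, whose head is $v_\ell$. Setting $\mathsf{conf}_2 = \mathsf{conf}^{(\ell-k+1)}$ gives $u_1' = v_\ell$ and $u_i' \in V(P)$ for all $i$, and composing all the $1$-transitions via Definition~\ref{def:MoveToConfInEll} shows $\mathsf{conf}_1$ can reach $\mathsf{conf}_2$.

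For the ``in addition'' clause, when $P$ has size exactly $k$ we have $\ell = k$, so there are no steps of the second type at all, and $\mathsf{conf}^{[k-1]} = (v_k, \ldots, v_1)$ is already the target; this is exactly the configuration $\mathsf{conf} = (v_\ell, \ldots, v_1)$ in the statement, and it is reachable from $\mathsf{conf}_1$ via the $(k-1)$ transitions $\mathsf{conf}^{[0]}, \ldots, \mathsf{conf}^{[k-1]}$. The main obstacle I anticipate is purely bookkeeping: making sure the distinctness of vertices in each intermediate tuple $\mathsf{conf}^{[m]}$ is correctly justified (it needs all three ingredients — $P$ simple, $\mathsf{conf}_1$ a valid configuration, and the disjointness hypothesis), and handling the cycle case where one must note $\ell \geq k+1$ so that wrapping around does not cause a vertex to repeat within any window of $k$ consecutive indices. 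There is no real mathematical difficulty here; it is a careful induction on the number of steps the snake has taken.
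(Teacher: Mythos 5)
Your proof is correct and is exactly the argument the paper has in mind: the paper states Observation~\ref{obs:canReach22} without a formal proof, offering only the intuition that the snake traverses $P$, and your two-phase formalization (first advancing the head onto $v_2,\ldots,v_k$ while the tail retreats along $\mathsf{conf}_1$, then sliding the window $(v_{j+k-1},\ldots,v_j)$ along $P$) is the routine verification of that intuition, with the head-freshness of each $1$-transition justified by precisely the right three ingredients (simplicity of $P$, validity of $\mathsf{conf}_1$, and the disjointness hypothesis). Your care about the cycle case is also well placed; note that even if one represents the cycle with $v_\ell$ identified with $v_1$, the final move is legal because the new head then coincides only with the current tail, which Definition~\ref{def:MoveToConfInOne} permits.
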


Now, we use Observation \ref{obs:canReach22} to prove the following lemma.
	
\begin{figure}
\centering
    \includegraphics[width=0.75\textwidth]{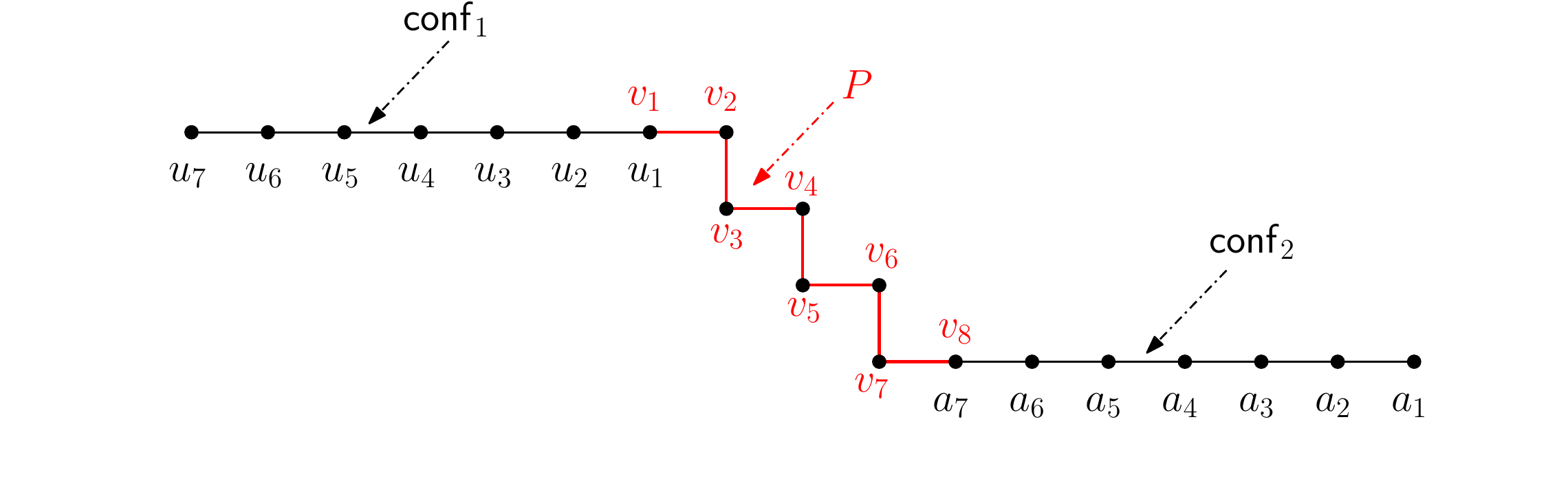}
    \caption{An example of the pre-conditions of Lemma \ref{lem:canREach} for $k=7$ and $\ell=8$.}
  \label{fig:lemma5position}
\end{figure}	
	
\begin{lemma} \label{lem:canREach}
Let $\langle G,k,\mathsf{init},\mathsf{fin}\rangle$ be an instance of {\sc Snake Game}. Let $\mathsf{conf}_1=(u_1,\ldots,u_k)$ and $\mathsf{conf}_2=(a_1,\ldots,a_k)$  be two configurations. Let $P=(v_1,\ldots,v_\ell)$ be a simple path or cycle in $G$ of size at least $k$ such that the following conditions hold (see Figure \ref{fig:lemma5position}).
\begin{itemize}
\item $v_1=u_1$ and $v_\ell=a_k$.
\item For every $2\leq i\leq k$ and $2\leq j\leq \ell$, $u_i\neq v_j$.
\item For every $1\leq i\leq k-1$ and $1\leq j\leq \ell-1$, $a_i\neq v_j$.
\end{itemize}
Then, $\mathsf{conf}_1$ can reach $\mathsf{conf}_2$.
\end{lemma}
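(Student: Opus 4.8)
The idea is to make the snake slide off $\mathsf{conf}_1$ onto $P$, travel to the far end $v_\ell=a_k$ of $P$, and then slide into $\mathsf{conf}_2$. Since Observation~\ref{obs:canReach22} pins down the reached configuration exactly only through its ``in addition'' clause (for paths of size precisely $k$), I will realize this journey in three stages rather than as one long slide, passing through the intermediate configuration
\[
C_0 \;:=\; (v_\ell, v_{\ell-1}, \ldots, v_{\ell-k+1}),
\]
which lies entirely on $P$ (this is well defined since $\ell\geq k$). First I would consider the prefix $P_{\leq k}:=(v_1,\ldots,v_k)$, which---whether $P$ is a simple path or a cycle---is a simple path on $k$ vertices that starts at $u_1=v_1$ and, by the second hypothesis of the lemma (restricted to $j\leq k$), has all of $v_2,\ldots,v_k$ outside $\{u_2,\ldots,u_k\}$. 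The ``in addition'' part of Observation~\ref{obs:canReach22} applied to $\mathsf{conf}_1$ and $P_{\leq k}$ then yields that $\mathsf{conf}_1$ can reach $\widehat{C}:=(v_k,v_{k-1},\ldots,v_1)$.

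Next, if $\ell>k$, I would advance the head of $\widehat{C}$ along $P$ by $\ell-k$ explicit $1$-transitions, sending the head successively to $v_{k+1},v_{k+2},\ldots,v_\ell$; each such move is a legal $1$-transition because the vertices of $P$ are pairwise distinct and consecutive along $P$ (here we only use edges of $P$ with indices at most $\ell$, so the possible ``wrap-around'' edge of a cyclic $P$ is never needed). This shows $\widehat{C}$ can reach $C_0$, and if $\ell=k$ then $C_0=\widehat{C}$ and this stage is vacuous. Finally, since $v_\ell=a_k$, the configuration $C_0$ has head $a_k$, and the reversal $P':=(a_k,a_{k-1},\ldots,a_1)$ of $\mathsf{conf}_2$ is a simple path on $k$ vertices starting at this head; by the third hypothesis of the lemma, $a_{k-1},\ldots,a_1$ avoid $\{v_1,\ldots,v_{\ell-1}\}$ and in particular avoid $v_{\ell-1},\ldots,v_{\ell-k+1}$, i.e.\ the remaining vertices of $C_0$. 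Applying the ``in addition'' part of Observation~\ref{obs:canReach22} to $C_0$ and $P'$ gives that $C_0$ can reach $(a_1,a_2,\ldots,a_k)=\mathsf{conf}_2$. Chaining the three stages and using that reachability is transitive (an $\ell_1$-transition followed by an $\ell_2$-transition is an $(\ell_1+\ell_2)$-transition), we conclude that $\mathsf{conf}_1$ can reach $\mathsf{conf}_2$.

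\textbf{Main obstacle.} The delicate point is that the general conclusion of Observation~\ref{obs:canReach22}---``some configuration whose head is $v_\ell$ and all of whose vertices lie on $P$''---is too weak to feed directly into a subsequent stage: when $P$ is a cycle there are many such configurations, and the lemma's hypotheses do \emph{not} forbid $\mathsf{conf}_1$ and $\mathsf{conf}_2$ from sharing vertices off $P$, so one cannot simply slide along the concatenated path $(v_1,\ldots,v_\ell,a_{k-1},\ldots,a_1)$ in one shot. Routing the argument through the explicit intermediate configuration $C_0$, built on the ``last'' $k$ vertices of $P$, is what circumvents this: it lets us invoke the precise ``in addition'' clause twice (on $P_{\leq k}$ and on $P'$, both genuine paths of size exactly $k$) and reduces the middle stage to a routine check that uses only simplicity of $P$. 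The remaining verifications---the two disjointness conditions required by the observation, and the boundary case $\ell=k$---are immediate from the hypotheses.
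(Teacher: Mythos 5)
Your proof is correct and follows essentially the same route as the paper: both reduce the lemma to Observation~\ref{obs:canReach22} plus transitivity of reachability. The only real difference is that the paper applies the general conclusion of the observation once to all of $P$ and then the ``in addition'' clause once to the reversal of $\mathsf{conf}_2$ --- the intermediate configuration need not be pinned down, because any configuration with head $v_\ell$ and all vertices on $P$ suffices for the second application (the third hypothesis keeps $a_1,\ldots,a_{k-1}$ off $\{v_1,\ldots,v_{\ell-1}\}$, and $a_j\neq a_k=v_\ell$ by simplicity of $\mathsf{conf}_2$), so the ``main obstacle'' you describe does not actually arise; your explicit three-stage variant through $C_0$ is a valid, slightly more laborious alternative.
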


\begin{proof}
Notice that $\mathsf{conf}_1=(u_1,\ldots,u_k)$ and $P=(v_1,\ldots,v_\ell)$ satisfy the conditions of Observation \ref{obs:canReach22}. Therefore, by this observation, there exists a configuration $\mathsf{conf}'_2=(u'_1,\ldots,u'_k)$ such that the following conditions are satisfied:
\begin{itemize}
\item $u'_1=v_\ell$.
\item For every $1\leq i\leq k$, $u'_i\in V(P)$.
\item $\mathsf{conf}_1$ can reach $\mathsf{conf}'_2$.  
\end{itemize} 
Now, note that $\mathsf{conf}_2=(a_1,\ldots,a_k)$ corresponds to a simple path in $G$, which can be denoted by $P'=(a'_1=a_k,\ldots,a'_k=a_1)$. In particular, the following conditions are satisfied:
\begin{itemize}
\item $a'_1=u'_1$.
\item For every $2\leq i\leq k$ and $2\leq j\leq k$, $u'_i\neq a'_j$.
\item $P'$ is a simple path in $G$ of size $k$. 
\end{itemize}  
Therefore, by Observation \ref{obs:canReach22}, $\mathsf{conf}'_2$ can reach $\mathsf{conf}_2$.
In turn, we conclude that $\mathsf{conf}_1$ can reach $\mathsf{conf}_2$.   
\end{proof}

Now, given an instance $\langle G,k,\mathsf{init},\mathsf{fin}\rangle$ of {\sc Snake Game}, we aim to show that if $G$ contains a $3\sqrt{k}$-wall $H$ that intersects neither $\mathsf{init}$ nor $\mathsf{fin}$, then for any $e\in E(H)$, $\langle G/e,k,\mathsf{init},\mathsf{fin}\rangle$ is an instance equivalent to $\langle G,k,\mathsf{init},\mathsf{fin}\rangle$.
The next lemma states the ``harder'' direction of this claim. Intuitively, if $\langle G,k,\mathsf{init},\mathsf{fin}\rangle$ is a \yes-instance, then there exists an $\ell$-transition from $\mathsf{init}$ to $\mathsf{fin}$ for some $\ell \in \mathbb{N}$. So, we can consider the first and last configurations that intersect $V(H)$ (if such configurations exist), say $\mathsf{conf}_1$ and $\mathsf{conf}_2$. Because $\mathsf{conf}_1$ is the first configuration that intersects $V(H)$, we show (in the proof of the next lemma) that the head of the snake in $\mathsf{conf}_1$ is a vertex from $V(H)$, and the rest of the vertices of the snake do not intersect $V(H)$. Similarly, we show that the tail of the snake in $\mathsf{conf}_2$ is a vertex from $V(H)$, and the rest of the vertices of the snake do not intersect $V(H)$. By Observation \ref{obs:kpathexis}, we get that there is a path in $H$ of size at least $k$ between the head of $\mathsf{conf}_1$ and the tail of $\mathsf{conf}_2$. By traversing this path from $\mathsf{conf}_1$, we show that snake can reach $\mathsf{conf}_2$. Therefore, we get that (in $\langle G/e,k,\mathsf{init},\mathsf{fin}\rangle$) $\mathsf{init}$ can reach $\mathsf{conf}_1$, $\mathsf{conf}_1$ can reach $\mathsf{conf}_2$, and $\mathsf{conf}_2$ can reach $\mathsf{fin}$. So, in turn we conclude that $\langle G/e,k,\mathsf{init},\mathsf{fin}\rangle$ is a \yes-instance.             

\begin{lemma} \label{lem:proof1}
Let $\langle G,k,\mathsf{init},\mathsf{fin}\rangle$ be an instance of {\sc Snake Game}, where $G$ contains a $3\sqrt{k}$-wall $H$ having vertices from neither $\mathsf{init}$ nor $\mathsf{fin}$. Let $e=\{u,v\} \in E(H)$. If $\langle G,k,\mathsf{init},\mathsf{fin}\rangle$ is a \yes-instance, then $\langle G/e,k,\mathsf{init},\mathsf{fin}\rangle$ is a \yes-instance.    
\end{lemma}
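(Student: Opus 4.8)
The plan is to take a witnessing transition from $\mathsf{init}$ to $\mathsf{fin}$ in $G$, keep its first and last portions untouched, and reroute the ``middle'' part — precisely the part that visits the wall — through $H/e$ by invoking Lemma~\ref{lem:canREach}. First I would fix an $\ell$-transition $(\mathsf{conf}_1=\mathsf{init},\mathsf{conf}_2,\dots,\mathsf{conf}_{\ell+1}=\mathsf{fin})$ in $G$, which exists by Definitions~\ref{def:MoveToconf} and~\ref{def:SnakeGame}, and I would note that $H/e$ is a subgraph of $G/e$ and inherits the long-path property of Observation~\ref{obs:kpathexis}. If no $\mathsf{conf}_i$ uses a vertex of $V(H)$, then since $\{u,v\}\subseteq V(H)$ every $\mathsf{conf}_i$ avoids both $u$ and $v$, all the relevant adjacencies are preserved in $G/e$, and the very same tuple witnesses that $\langle G/e,k,\mathsf{init},\mathsf{fin}\rangle$ is a \yes-instance. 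Otherwise, let $a$ and $b$ be the smallest and largest indices with $V(\mathsf{conf}_a)\cap V(H)\neq\emptyset$ and $V(\mathsf{conf}_b)\cap V(H)\neq\emptyset$; since $H$ is vertex-disjoint from $\mathsf{init}$ and $\mathsf{fin}$, we get $2\le a\le b\le\ell$, and $\mathsf{conf}_{a-1}$ and $\mathsf{conf}_{b+1}$ both avoid $V(H)$.

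The next step is a structural claim: the head of $\mathsf{conf}_a$ and the tail of $\mathsf{conf}_b$ each lie in $V(H)$, while every other vertex of $\mathsf{conf}_a$ and of $\mathsf{conf}_b$ lies outside $V(H)$. This follows from the definition of a $1$-transition: in $(\mathsf{conf}_{a-1},\mathsf{conf}_a)$ every vertex of $\mathsf{conf}_a$ except its head is already a vertex of $\mathsf{conf}_{a-1}$, hence outside $V(H)$, so as $\mathsf{conf}_a$ does meet $V(H)$ its head $h_a$ must lie in $V(H)$; the argument for $\mathsf{conf}_b$ via $(\mathsf{conf}_b,\mathsf{conf}_{b+1})$ is symmetric. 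Writing $\mathsf{conf}_{a-1}=(w_1,\dots,w_k)$ and $\mathsf{conf}_{b+1}=(p_1,\dots,p_k)$, this gives $\mathsf{conf}_a=(h_a,w_1,\dots,w_{k-1})$ and $\mathsf{conf}_b=(p_2,\dots,p_k,t_b)$, with $w_1$ adjacent to $h_a$ and $p_k$ adjacent to $t_b$ in $G$, and none of $w_1,\dots,w_{k-1},p_1,\dots,p_k$ in $V(H)$.

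Then I would transport this to $G/e$. Set $s=h_a$ if $h_a\notin\{u,v\}$ and $s=v_e$ otherwise; in either case $s\in V(H/e)$, $s$ is adjacent to $w_1$ in $G/e$ (when $s=v_e$ because $w_1\in N_G(h_a)\subseteq N_G(u)\cup N_G(v)$ and $w_1\notin\{u,v\}$), and $s\notin\{w_1,\dots,w_{k-1}\}$. Hence $\mathsf{conf}_a^{*}:=(s,w_1,\dots,w_{k-1})$ is a configuration of $G/e$ and $(\mathsf{conf}_{a-1},\mathsf{conf}_a^{*})$ is a $1$-transition in $G/e$. Analogously I would build $t\in V(H/e)$ adjacent to $p_k$ and set $\mathsf{conf}_b^{*}:=(p_2,\dots,p_k,t)$, a configuration of $G/e$ with $(\mathsf{conf}_b^{*},\mathsf{conf}_{b+1})$ a $1$-transition there. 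By construction every vertex of $\mathsf{conf}_a^{*}$ except its head $s$, and every vertex of $\mathsf{conf}_b^{*}$ except its tail $t$, avoids $V(H/e)$. Observation~\ref{obs:kpathexis} applied to $H/e$ then supplies a simple path (or cycle, if $s=t$) $P$ in $H/e$ between $s$ and $t$ of size at least $k$, and since $V(P)\subseteq V(H/e)$ the triple $\mathsf{conf}_a^{*},\mathsf{conf}_b^{*},P$ satisfies the hypotheses of Lemma~\ref{lem:canREach} in $G/e$, so $\mathsf{conf}_a^{*}$ can reach $\mathsf{conf}_b^{*}$ in $G/e$.

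To finish, the prefix $(\mathsf{conf}_1,\dots,\mathsf{conf}_{a-1})$ and the suffix $(\mathsf{conf}_{b+1},\dots,\mathsf{conf}_{\ell+1})$ (each possibly trivial) avoid $\{u,v\}$ and hence are valid transitions in $G/e$; composing $\mathsf{init}\to\mathsf{conf}_{a-1}\to\mathsf{conf}_a^{*}\to\mathsf{conf}_b^{*}\to\mathsf{conf}_{b+1}\to\mathsf{fin}$ in $G/e$ (transitions compose, by Definition~\ref{def:MoveToConfInEll}) shows $\mathsf{init}$ can reach $\mathsf{fin}$ in $G/e$, so $\langle G/e,k,\mathsf{init},\mathsf{fin}\rangle$ is a \yes-instance. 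The main obstacle I expect is exactly the case where contracting $e$ merges the wall vertex on which the head of $\mathsf{conf}_a$ or the tail of $\mathsf{conf}_b$ sits, so that $\mathsf{conf}_a$ or $\mathsf{conf}_b$ is not itself a configuration of $G/e$; this is the reason for passing to $\mathsf{conf}_a^{*},\mathsf{conf}_b^{*}$ through $v_e$, and the one place requiring careful checking is that these replacements are genuine configurations of $G/e$ and that the two $1$-transitions involving them really hold in $G/e$ — everything else is routine bookkeeping on top of Observation~\ref{obs:kpathexis} and Lemma~\ref{lem:canREach}.
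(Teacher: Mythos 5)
Your proposal is correct and follows essentially the same route as the paper's proof: take a witnessing transition, handle the trivial case where the wall is never touched, identify the first and last configurations meeting $V(H)$, observe that only the head (resp.\ tail) of these can lie in $V(H)$, substitute $v_e$ for that vertex if it is an endpoint of $e$, and bridge the two via Observation~\ref{obs:kpathexis} and Lemma~\ref{lem:canREach}. Your write-up is in fact slightly more explicit than the paper's in verifying that the modified configurations and the adjacent $1$-transitions remain valid in $G/e$.
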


\begin{proof}
Suppose that $\langle G,k,\mathsf{init},\mathsf{fin}\rangle$ is a \yes-instance. Then, there exists an $\ell$-transition $T=(\mathsf{conf}_1,\ldots,\mathsf{conf}_{\ell+1})$ such that $\mathsf{conf}_1=\mathsf{init}$ and $\mathsf{conf}_{\ell+1}=\mathsf{fin}$.

 First, assume that for every $1\leq i\leq \ell +1$, $\mathsf{conf}_i$ does not contain any vertices from $V(H)$. Then, $(\mathsf{conf}_1,\ldots,\mathsf{conf}_{\ell+1})$ is also an $\ell$-transition in $\langle G/e,k,\mathsf{init},\mathsf{fin}\rangle$, so it is a \yes-instance.
Now, assume that there exists $1\leq i\leq \ell +1$ such that $\mathsf{conf}_i$ contains at least one vertex from $V(H)$. Let $i$ and $j$ be such that $\mathsf{conf}_i$ is the first configuration and $\mathsf{conf}_j$ is the last configuration in $T$ that contains a vertex from $V(H)$. Notice that $1<i\leq j<\ell+1$ because we assumed that $H$ contains vertices from neither $\mathsf{init}$ nor $\mathsf{fin}$. Denote $\mathsf{conf}_i=(a_1,\ldots,a_k)$, and observe that $a_1\in V(H)$ and for every $2\leq m\leq k$, $a_m\notin V(H)$ (otherwise $\mathsf{conf}_{i}$ would not be the first configuration to contain a vertex from $H$, a contradiction). Similarly, denote $\mathsf{conf}_j=(b_1,\ldots,b_k)$, and observe that $b_k\in V(H)$ and for every $1\leq m\leq k-1$, $b_m\notin V(H)$.

Recall that $v_e\in V(G/e)$ denotes the new vertex that was created by the contraction of $e$. Now, let $\mathsf{conf}'_i$ be a configuration and $a'_1$ a vertex defined as follows. If $a_1=u$ or $a_1=v$ then $\mathsf{conf}'_i=(v_e,a_2,\ldots,a_k)$ and $a'_1=v_e$, else $\mathsf{conf}'_i=\mathsf{conf}_i$ and $a'_1=a_1$. Similarly, if $b_k=u$ or $b_k=v$ then let $\mathsf{conf}'_j=(b_1,\ldots,b_{k-1},v_e)$ and $b'_k=v_e$, otherwise $\mathsf{conf}'_j=\mathsf{conf}_j$ and $b'_k=b_k$. Now, notice that $(\mathsf{conf}_1,\ldots,\mathsf{conf}_{i-1},\mathsf{conf}'_{i})$ is an $(i-1)$-transition and $(\mathsf{conf}'_{j},\mathsf{conf}_{j+1},\ldots,\mathsf{conf}_{\ell+1})$ is an $(\ell+1-j)$-transition in $\langle G/e,k,\mathsf{init},\mathsf{fin}\rangle$. Therefore, we know that both $\mathsf{conf}_1$ can reach $\mathsf{conf}'_{i}$ and $\mathsf{conf}'_j$ can reach $\mathsf{conf}_{\ell+1}$ in $\langle G/e,k,\mathsf{init},\mathsf{fin}\rangle$.  
  
By Observation \ref{obs:kpathexis}, we know that there exists a simple path $P=(u_1,\ldots,u_\ell$) (or cycle if $a'_1=b'_k$) of size at least $k$ such that $u_1=a'_1$, $u_\ell=b'_k$, and for every $1\leq i\leq \ell$, $u_i\in V(H/e)$. 
Therefore, by Lemma \ref{lem:canREach} we get that $\mathsf{conf}'_{i}$ can reach $\mathsf{conf}'_{j}$ in $\langle G/e,k,\mathsf{init},\mathsf{fin}\rangle$.
Overall, as $\mathsf{conf}_1=\mathsf{init}$ can reach $\mathsf{conf}'_{i}$, $\mathsf{conf}'_{i}$ can reach $\mathsf{conf}'_{j}$ and $\mathsf{conf}'_j$ can reach $\mathsf{conf}_{\ell+1}=\mathsf{fin}$, we conclude that $\mathsf{init}$ can reach $\mathsf{fin}$ in $\langle G/e,k,\mathsf{init},\mathsf{fin}\rangle$. Therefore $\langle G/e,k,\mathsf{init},\mathsf{fin}\rangle$ \yes-instance.                  
\end{proof}

Now, we consider the opposite direction of Lemma \ref{lem:proof1}. Roughly speaking, the idea of the proof is as follows. If $\langle G/e,k,\mathsf{init},\mathsf{fin}\rangle$ is a \yes-instance, then we have an $\ell$-transition $(\mathsf{conf}_1,\ldots,\mathsf{conf}_{\ell+1})$ such that $\mathsf{conf}_1=\mathsf{init}$ and $\mathsf{conf}_{\ell+1}=\mathsf{fin}$. We aim to ``fix'' $(\mathsf{conf}_1,\ldots,\mathsf{conf}_{\ell+1})$ to ``fit'' $\langle G,k,\mathsf{init},\mathsf{fin}\rangle$ by substituting all appearances of $v_e$. Intuitively, if the snake traverse a path that passes through $v_e$ in $G/e$, it can visit at least one among $u$ and $v$ instead in $G$.         

\begin{lemma} \label{lem:proof2}
Let $\langle G,k,\mathsf{init},\mathsf{fin}\rangle$ be an instance of {\sc Snake Game}, where $G$ contains a $3\sqrt{k}$-wall $H$ having vertices from neither $\mathsf{init}$ nor $\mathsf{fin}$. Let $e=\{u,v\} \in E(H)$. If $\langle G/e,k,\mathsf{init},\mathsf{fin}\rangle$ is a \yes-instance, then $\langle G,k,\mathsf{init},\mathsf{fin}\rangle$ is a \yes-instance.     
\end{lemma}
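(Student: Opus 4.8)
The plan is to mirror the structure of the proof of Lemma~\ref{lem:proof1}, but now ``pulling back'' a solution of $\langle G/e,k,\mathsf{init},\mathsf{fin}\rangle$ to $G$. Assume $\langle G/e,k,\mathsf{init},\mathsf{fin}\rangle$ is a \yes-instance and fix an $\ell$-transition $T=(\mathsf{conf}_1=\mathsf{init},\ldots,\mathsf{conf}_{\ell+1}=\mathsf{fin})$ in $G/e$. The first thing to observe is that $V(G/e)=(V(G)\setminus\{u,v\})\cup\{v_e\}$ and that every edge of $G/e$ not incident to $v_e$ is also an edge of $G$; hence any configuration of $G/e$ avoiding $v_e$ is also a configuration of $G$, and any $1$-transition of $G/e$ between two such configurations is a $1$-transition of $G$. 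Since $u,v\in V(H)$ and $H$ meets neither $\mathsf{init}$ nor $\mathsf{fin}$, both $\mathsf{init}$ and $\mathsf{fin}$ avoid $V(H/e)=(V(H)\setminus\{u,v\})\cup\{v_e\}$. If no configuration of $T$ meets $V(H/e)$, then $T$ is already an $\ell$-transition in $G$ and we are done. Otherwise, let $\mathsf{conf}_i$ and $\mathsf{conf}_j$ be the first and the last configurations of $T$ meeting $V(H/e)$; as in Lemma~\ref{lem:proof1} we have $1<i\le j<\ell+1$, and for $k\ge 2$ in fact $i<j$ (the case $k=1$ is trivial). Moreover, since the transition $(\mathsf{conf}_{i-1},\mathsf{conf}_i)$ introduces only a new head, writing $\mathsf{conf}_i=(a_1,\ldots,a_k)$ we get $a_1\in V(H/e)$ while $a_2,\ldots,a_k\notin V(H/e)$; symmetrically, writing $\mathsf{conf}_j=(b_1,\ldots,b_k)$ we get $b_k\in V(H/e)$ while $b_1,\ldots,b_{k-1}\notin V(H/e)$.

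Next I would perform the surgery that substitutes $v_e$. If $a_1=v_e$, pick $w\in\{u,v\}$ with $w\in N_G(a_2)$ --- this is possible since $a_2\in N_{G/e}(v_e)=N_G(u)\cup N_G(v)$ --- and set $\mathsf{conf}'_i=(w,a_2,\ldots,a_k)$; otherwise $a_1\in V(H)\setminus\{u,v\}$ and I set $\mathsf{conf}'_i=\mathsf{conf}_i$. Define $\mathsf{conf}'_j$ from $\mathsf{conf}_j$ analogously at the tail (choosing $w'\in\{u,v\}$ adjacent to $b_{k-1}$ if $b_k=v_e$). Because $a_2,\ldots,a_k$ and $b_1,\ldots,b_{k-1}$ lie in $V(G/e)\setminus V(H/e)\subseteq V(G)\setminus V(H)$ (in particular they are distinct from $u$ and $v$), one checks routinely that $\mathsf{conf}'_i$ and $\mathsf{conf}'_j$ are configurations of $G$ whose head, resp.\ tail, lies in $V(H)$ and all of whose remaining vertices avoid $V(H)$; that $(\mathsf{conf}_1,\ldots,\mathsf{conf}_{i-1},\mathsf{conf}'_i)$ is an $(i-1)$-transition in $G$ (the only thing to verify is that the new head $w$ is adjacent in $G$ to the head of $\mathsf{conf}_{i-1}$, i.e.\ to $a_2$, which holds by choice, and that $w$ is not among the first $k-1$ vertices of $\mathsf{conf}_{i-1}$, which holds since $\mathsf{conf}_{i-1}$ avoids $\{u,v\}$); and that $(\mathsf{conf}'_j,\mathsf{conf}_{j+1},\ldots,\mathsf{conf}_{\ell+1})$ is an $(\ell+1-j)$-transition in $G$. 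Thus $\mathsf{init}$ can reach $\mathsf{conf}'_i$ and $\mathsf{conf}'_j$ can reach $\mathsf{fin}$, both in $G$.

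It then remains to connect $\mathsf{conf}'_i$ to $\mathsf{conf}'_j$ inside $G$, which is where the wall is used. Since the head of $\mathsf{conf}'_i$ and the tail of $\mathsf{conf}'_j$ both lie in $V(H)$, the defining property of a $3\sqrt{k}$-wall recalled just before Observation~\ref{obs:kpathexis} (applied to $H$) gives a simple path, or a cycle if these two vertices coincide, $P$ in $H$ of size at least $k$ between them. As the non-endpoint vertices of $\mathsf{conf}'_i$ and of $\mathsf{conf}'_j$ avoid $V(H)\supseteq V(P)$, the triple $(\mathsf{conf}'_i,\mathsf{conf}'_j,P)$ meets the hypotheses of Lemma~\ref{lem:canREach}, so $\mathsf{conf}'_i$ can reach $\mathsf{conf}'_j$ in $G$. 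Chaining the three reachabilities --- $\mathsf{init}$ reaches $\mathsf{conf}'_i$, $\mathsf{conf}'_i$ reaches $\mathsf{conf}'_j$, and $\mathsf{conf}'_j$ reaches $\mathsf{fin}$, all in $G$ --- yields that $\mathsf{init}$ reaches $\mathsf{fin}$ in $G$, so $\langle G,k,\mathsf{init},\mathsf{fin}\rangle$ is a \yes-instance. The main point requiring care is the surgery of the second paragraph: one must make sure that replacing $v_e$ by a single endpoint $w\in\{u,v\}$ keeps the two distinguished configurations simple and keeps the adjacent transitions in $T$ valid $1$-transitions in $G$, which relies precisely on the facts that, apart from the substituted vertex, $\mathsf{conf}_i,\mathsf{conf}_j,\mathsf{conf}_{i-1},\mathsf{conf}_{j+1}$ contain no vertex of $\{u,v\}$ and that $w$ is chosen adjacent to the appropriate neighbouring vertex.
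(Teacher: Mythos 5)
Your proof is correct, but it takes a genuinely different route from the paper's. You mirror Lemma~\ref{lem:proof1}: you cut the transition in $G/e$ at the first and last configurations that meet $V(H/e)$, repair only those two configurations (replacing $v_e$ by a suitable endpoint $w\in\{u,v\}$ chosen adjacent to the neighbouring snake vertex), and then reconnect them through a path of size at least $k$ inside the wall $H$ via Lemma~\ref{lem:canREach}. The paper instead performs a purely local splice on the \emph{entire} transition sequence: each time the head enters $v_e$ (say from a $G$-neighbour of $u$), it either substitutes $u$ for $v_e$ in the $k$ affected configurations, or, when the exit vertex is only a $G$-neighbour of $v$, inserts one extra step so the snake visits $u$ and then $v$, yielding a $t$-transition with $t\geq\ell$. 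The trade-off is instructive: the paper's argument never uses the long-path property of the wall in this direction --- it only needs $u,v\notin V(\mathsf{init})\cup V(\mathsf{fin})$, so it shows that contracting \emph{any} such edge preserves \yes-instances --- whereas your argument leans on the wall (through Lemma~\ref{lem:canREach}) but avoids the somewhat delicate case analysis of how the snake exits $v_e$, reusing machinery already established for the forward direction. Both are valid under the stated hypotheses; your bookkeeping (that $a_2,\ldots,a_k$ and $b_1,\ldots,b_{k-1}$ lie in $V(G)\setminus V(H)$, hence avoid $u,v$ and the connecting path, and that the degenerate cases $i=j$ and $k=1$ are handled) is the part that needed care, and you address it adequately.
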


\begin{proof}
Suppose that $\langle G/e,k,\mathsf{init},\mathsf{fin}\rangle$ is a \yes-instance. Then, there exists an $\ell$-transition $(\mathsf{conf}_1$ $,\ldots, \mathsf{conf}_{\ell+1})$ such that $\mathsf{conf}_1=\mathsf{init}$ and $\mathsf{conf}_{\ell+1}=\mathsf{fin}$. If for every $1\leq i\leq \ell +1$, $v_e\notin \mathsf{conf}_i$, then $(\mathsf{conf}_1,\ldots,\mathsf{conf}_{\ell+1})$ is also an $\ell$-transition in $\langle G,k,\mathsf{init},\mathsf{fin}\rangle$, so it is a \yes-instance.

Now, assume that there exists $1\leq i\leq \ell +1$ such that $v_e$ appears in $\mathsf{conf}_i$. Because $H$ does not have any vertices from $\mathsf{init}$ or $\mathsf{fin}$, we get that $v_e$ does not appear in $\mathsf{init}$ or $\mathsf{fin}$. For each $1\leq i\leq \ell+1$, denote $\mathsf{conf}_i=(v^i_1,\ldots,v^i_k)$. Then, for each $1\leq i\leq \ell +1$ and $1\leq j\leq k$ such that $v^i_j=v_e$, it follows that $v^{i-j+1}_1=v_e$. So, to cover all configurations in the given $\ell$-transition $(\mathsf{conf}_1,\ldots,\mathsf{conf}_{\ell+1})$ that contain $v_e$, it is enough to consider only configurations $\mathsf{conf}_i=(v^i_1,\ldots,v^i_k)$ such that $v^i_1=v_e$ and the $k-1$ configurations that come after them.

Now, let $1\leq i\leq \ell +1$ such that $\mathsf{conf}_i=(v^i_1,\ldots,v^i_k)$ and $v^i_1=v_e$ . By definition, $v^{i-1}_1\in N_{G/e}(v_e)$. Recall that $e=\{ u,v\}$ and $G/e$ is the graph obtained from $G$ by contracting $\{u,v\}$. Therefore, $N_{G/e}(v_e)=N_G(u)\cup N_G(v)$. Assume without loss of generality that $v^{i-1}_1\in N_G(u)$. Now, if $v^{i+1}_1\in N_G(u)$, then for every $0\leq s\leq k-1$, we can replace $v^{i+s}_{1+s}$ by $u$ and thereby define valid configurations in $\langle G,k,\mathsf{init},\mathsf{fin}\rangle$. Notice that we derive a valid transition in $\langle G,k,\mathsf{init},\mathsf{fin}\rangle$. Now, assume $v^{i+1}_1\in N_G(v)$. Then we can replace $\mathsf{conf}_i,\ldots,\mathsf{conf}_{i+k-1}$ by $\mathsf{conf}'_i,\ldots,\mathsf{conf}'_{i+k}$, where $\mathsf{conf}'_i=(u,v^i_2,\ldots,v^i_k)$, $\mathsf{conf}'_{i+1}=(v,u,v^i_2,\ldots,v)$, $\mathsf{conf}'_{i+2}=(v^{i+1}_1,v,u,v^i_2,\ldots,v^i_{k-2})$,\ldots, $\mathsf{conf}'_{i+k}=(v^{i+k-1}_1,\ldots,v^{i+k-1}_{k-1},v)$. Notice that, again, we derive a valid transition in $\langle G,k,\mathsf{init},\mathsf{fin}\rangle$.

Overall, notice that by modifying the given $\ell$-transition in this manner, we get (for some $t\geq \ell$) a $t$-transition $(\mathsf{conf}'_1,\ldots,\mathsf{conf}'_{t+1})$ in $\langle G,k,\mathsf{init},\mathsf{fin}\rangle$, such that $\mathsf{conf}'_1=\mathsf{init}$ and $\mathsf{conf}'_{t+1}=\mathsf{fin}$. Therefore, $\langle G,k,\mathsf{init},\mathsf{fin}\rangle$ is a \yes-instance.                         
\end{proof}

By combining the last two lemmas, we immediately derive the following result.

\begin{lemma} \label{lem:InsEq}
Let $\langle G,k,\mathsf{init},\mathsf{fin}\rangle$ be an instance of {\sc Snake Game}, where $G$ contains a $3\sqrt{k}$-wall $H$ having vertices from neither $\mathsf{init}$ nor $\mathsf{fin}$. Let $e=\{u,v\} \in E(H)$. Then, $\langle G,k,\mathsf{init},\mathsf{fin}\rangle$ is a \yes-instance if and only if $\langle G/e,k,\mathsf{init},\mathsf{fin}\rangle$ is a \yes-instance.   
\end{lemma}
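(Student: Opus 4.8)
The plan is to obtain Lemma~\ref{lem:InsEq} as an immediate consequence of the two preceding lemmas, since together they already establish both directions of the desired equivalence under exactly the same hypotheses. First I would note that the hypotheses stated here---that $G$ contains a $3\sqrt{k}$-wall $H$ with no vertices in $\mathsf{init}$ or $\mathsf{fin}$, and that $e=\{u,v\}\in E(H)$---are precisely the hypotheses of both Lemma~\ref{lem:proof1} and Lemma~\ref{lem:proof2}. Thus no new setup is required.

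For the forward implication, I would invoke Lemma~\ref{lem:proof1}: if $\langle G,k,\mathsf{init},\mathsf{fin}\rangle$ is a \yes-instance, then $\langle G/e,k,\mathsf{init},\mathsf{fin}\rangle$ is a \yes-instance. For the reverse implication, I would invoke Lemma~\ref{lem:proof2}: if $\langle G/e,k,\mathsf{init},\mathsf{fin}\rangle$ is a \yes-instance, then $\langle G,k,\mathsf{init},\mathsf{fin}\rangle$ is a \yes-instance. Chaining the two gives the biconditional, and the proof is complete.

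There is essentially no obstacle at this last step; all the substance lies in the two lemmas just proved. The genuinely non-trivial one is Lemma~\ref{lem:proof1}: its re-routing argument identifies the first and last configurations of a transition sequence that touch $V(H)$, uses Observation~\ref{obs:kpathexis} to find a path (or cycle) of size at least $k$ inside $H/e$ joining the head of the first to the tail of the last, and then uses Observation~\ref{obs:canReach22} together with Lemma~\ref{lem:canREach} to make the snake travel along this path and thereby bridge those two configurations. Once that is in place, Lemma~\ref{lem:InsEq} is a one-line corollary.
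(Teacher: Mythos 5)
Your proposal is correct and matches the paper exactly: the paper derives Lemma~\ref{lem:InsEq} immediately by combining Lemma~\ref{lem:proof1} (forward direction) and Lemma~\ref{lem:proof2} (reverse direction), whose hypotheses coincide with those stated here. Nothing further is needed.
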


In the last three lemmas, we assumed that the $3\sqrt{k}$-wall $H$ contains vertices from neither $\mathsf{init}$ nor $\mathsf{fin}$. In order to ensure that we have such a $3\sqrt{k}$-wall, we need to search for an $r$-wall, for a larger $r$, such that whenever $G$ contains an $r$-wall, it also contains such $H$. From the next observation, a $7k$-wall is large enough.

\begin{figure}
\centering
    \includegraphics[page=2, width=0.75\textwidth]{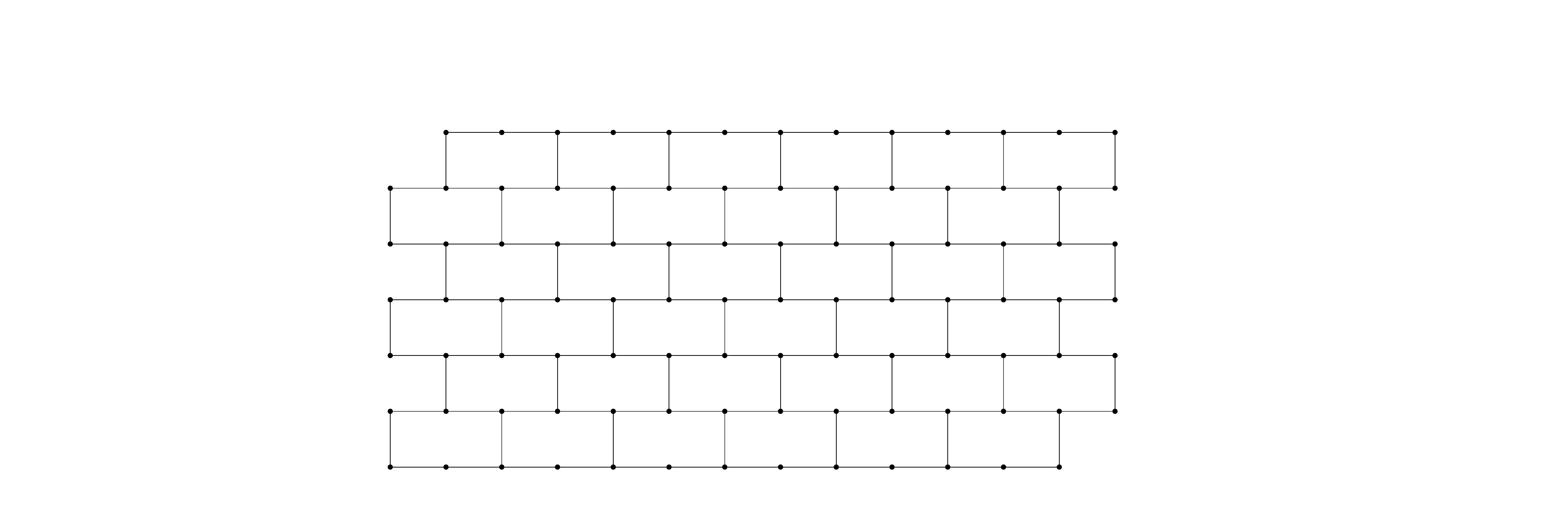}
    \caption{Four distinct elementary 3-walls in an elementary 7-wall.}
  \label{fig:7-wall}
\end{figure}   

\begin{observation} \label{obs:wallCon9}
Let $\langle G,k,\mathsf{init},\mathsf{fin}\rangle$ be an instance of {\sc Snake Game}. If $G$ contains a $7k$-wall, then it contains a $3\sqrt{k}$-wall $H$ such that $H$ does not have any vertices from $\mathsf{init}$ or $\mathsf{fin}$. Moreover, given a $7k$-wall, such a $3\sqrt{k}$-wall $H$ can be found in time polynomial in $k$.   
\end{observation}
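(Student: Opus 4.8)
The plan is to use the standard fact that a large wall contains many pairwise vertex-disjoint copies of a smaller wall, and then apply a pigeonhole argument against the at most $2k$ vertices that lie on $\mathsf{init}$ or $\mathsf{fin}$. Concretely, since $\mathsf{init}$ and $\mathsf{fin}$ are $(G,k)$-configurations we have $|V(\mathsf{init})\cup V(\mathsf{fin})|\le 2k$, so it suffices to exhibit strictly more than $2k$ pairwise vertex-disjoint $3\sqrt{k}$-walls inside the given $7k$-wall (reading $3\sqrt{k}$ as $\lceil 3\sqrt{k}\rceil$ when $k$ is not a perfect square): one of them is then necessarily disjoint from $V(\mathsf{init})\cup V(\mathsf{fin})$, and it is the desired $H$.

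First I would establish the packing claim for the \emph{elementary} $7k$-wall (see Figure~\ref{fig:7-wall} for the $k=1$ picture). Recall from Definition~\ref{def:ElemWall} that the elementary $r$-wall is obtained from the $r\times 2r$-grid by deleting a set of vertical edges that is periodic of period $2$ in both coordinates, plus deleting two pendant vertices. Partition the $7k$ rows into $\lfloor 7k/(3\sqrt{k})\rfloor=\lfloor 7\sqrt{k}/3\rfloor$ consecutive blocks of $3\sqrt{k}$ rows and the $14k$ columns into $\lfloor 14k/(6\sqrt{k})\rfloor=\lfloor 7\sqrt{k}/3\rfloor$ consecutive blocks of $6\sqrt{k}$ columns, choosing the block boundaries so that each block starts at a position of the correct parity with respect to the periodic deletion pattern (shifting the whole partition by at most one unit in each coordinate, which costs at most one block per direction). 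The subgraph of the elementary $7k$-wall induced by such an aligned $3\sqrt{k}\times 6\sqrt{k}$ block inherits exactly the deletion pattern of an elementary $3\sqrt{k}$-wall, so it contains an elementary $3\sqrt{k}$-wall as a subgraph. This yields $\lfloor 7\sqrt{k}/3\rfloor^{2}$ pairwise vertex-disjoint elementary $3\sqrt{k}$-walls; a direct computation (using $\lfloor 7\sqrt{k}/3\rfloor\ge 7\sqrt{k}/3-1$, so the count is at least $(7\sqrt{k}/3-1)^2$, together with checking $k=1$ separately) shows this number exceeds $2k$ for every $k\ge 1$. For a general $7k$-wall $H'$, which is a subdivision of the elementary $7k$-wall, let the block $B$ of $H'$ consist of the vertices of $H'$ lying on subdivided edges both of whose endpoints lie in the corresponding block of the elementary wall; the subdivision vertices that sit on edges crossing between two distinct blocks then belong to no block, so the blocks of $H'$ are again pairwise vertex-disjoint, and each of them is a subdivision of the elementary $3\sqrt{k}$-wall it contained, hence a $3\sqrt{k}$-wall.

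With more than $2k$ pairwise vertex-disjoint $3\sqrt{k}$-walls at hand and $|V(\mathsf{init})\cup V(\mathsf{fin})|\le 2k$, the pigeonhole principle gives a $3\sqrt{k}$-wall $H$ among them with $V(H)\cap(V(\mathsf{init})\cup V(\mathsf{fin}))=\emptyset$, as required. For the algorithmic part, the partition above produces the vertex sets of all $\OO(k)$ candidate sub-walls explicitly from the coordinate (and subdivision) structure of the given $7k$-wall; we then test each candidate, in time linear in its size, for intersection with the $\OO(k)$-element set $V(\mathsf{init})\cup V(\mathsf{fin})$, and return the first one that passes. The whole procedure runs in time polynomial in the size of the given wall, and in particular polynomial in $k$.

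The only real obstacle is the combinatorial bookkeeping in the second step: one must get the constants right so that the ratio of $7k$ to $3\sqrt{k}$ indeed leaves room for more than $2k$ disjoint sub-walls, and one must be careful that the aligned rectangular blocks of a (possibly heavily subdivided) wall genuinely host the smaller walls --- in particular that the period-$2$ deletion pattern restricts correctly to a block and that the subdivision vertices on inter-block edges are accounted for so that vertex-disjointness is not lost.
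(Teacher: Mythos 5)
Your proposal is correct and follows essentially the same route as the paper: pack more than $2k$ pairwise vertex-disjoint $3\sqrt{k}$-walls into the $7k$-wall, apply the pigeonhole principle against the at most $2k$ vertices of $V(\mathsf{init})\cup V(\mathsf{fin})$, and scan the candidates to realize the polynomial-time bound. The only difference is that you work out the block-partition and counting details (including the subdivision bookkeeping) that the paper dispatches with a reference to its figure, which is a strictly more careful rendering of the same argument.
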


\begin{proof}
Let $H'$ be a $7k$-wall in $G$. Notice that $H'$ contains more than $2k$ $3\sqrt{k}$-walls that have no vertices in common (see Figure \ref{fig:7-wall}), and $2k+1$ such $3\sqrt{k}$-walls can be found in polynomial time. Since $|V(\mathsf{init})\cup V(\mathsf{fin})|\leq 2k$, by the pigeonhole principle, we get that $H'$, and so does $G$, contains a $3\sqrt{k}$-wall $H$ such that $H$ does not have any vertices from $\mathsf{init}$ or $\mathsf{fin}$. In order to find such a $3\sqrt{k}$-wall, we simply examine iterate over our $2k+1$ different $3\sqrt{k}$-walls until the desired one is found. Therefore, a $3\sqrt{k}$-wall $H$ such that $H$ does not have any vertices from $\mathsf{init}$ or $\mathsf{fin}$ exists and can be found in time polynomial in $k$.          
\end{proof} 

\subsection{The Algorithm for Reduction of Treewidth}

Now, we are ready to present our algorithm for reduction of treewidth, which we call $\mathsf{TrRedAlg}$.
The input of $\mathsf{TrRedAlg}$ is an instance $\langle G,k,\mathsf{init},\mathsf{fin}\rangle$ of {\sc Snake Game} and its output should be an equivalent instance $\langle G',k,\mathsf{init},\mathsf{fin}\rangle$ where the treewidth of $G'$ is $k^{\OO(1)}$.  
Towards that, $\mathsf{TrRedAlg}$ performs at most $|V(G)|$ iterations that can each be executed in time polynomial in $|V(G)|$.
At each iteration, the algorithm either returns an instance $\langle G',k,\mathsf{init},\mathsf{fin}\rangle$ equivalent to $\langle G,k,\mathsf{init},\mathsf{fin}\rangle$ such that $|V(G')|=|V(G)|-1$, or reports that $G$ has treewidth $k^{\OO(1)}$. 
First, we present the procedure executed for each iteration, called $\mathsf{TrRedAlgIter}$ (see Algorithm \ref{alg:TreeRedAlgIter}).
For the execution of $\mathsf{TrRedAlgIter}$, we use as ``black box'' a known algorithm (Chekuri and Chuzhoy, 2016 \cite{DBLP:journals/jacm/ChekuriC16}) that (in polynomial time) either finds an $r$-wall in a graph, or reports that its treewidth is $k^{\OO(1)}$ (see Proposition \ref{prop:wallOrTree}).

\begin{proposition} \label{prop:wallOrTree}
There exists an algorithm that, given a graph $G$ and a nonnegative integer $t$, either finds a $t$-wall $H$ in $G$, or reports that $G$ has treewidth $t^{\OO(1)}$. Moreover, that algorithm runs in time $\OO(|V(G)| + |E(G)|)$.    
\end{proposition}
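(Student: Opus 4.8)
The plan is to derive Proposition~\ref{prop:wallOrTree} by combining the polynomial Excluded Grid Theorem of Chekuri and Chuzhoy~\cite{DBLP:journals/jacm/ChekuriC16} with a linear-time routine that either produces a tree decomposition of small width or certifies that the treewidth is large. Fix the constants $c$ and $d$ for which that theorem guarantees that every graph of treewidth at least $d\cdot s^{c}$ contains an $(s\times s)$-grid as a minor, and set $s:=2t$ and $r:=d\cdot(2t)^{c}=t^{\OO(1)}$. First I would run the known (approximate) treewidth algorithm on $G$ with threshold $r$: in time $\OO(|V(G)|+|E(G)|)$ it either returns a tree decomposition of $G$ of width $\OO(r)=t^{\OO(1)}$ --- in which case we report that $\tw(G)=t^{\OO(1)}$ and stop --- or it certifies that $\tw(G)\ge r$.

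In the latter case $\tw(G)\ge d\cdot(2t)^{c}$, so by the Excluded Grid Theorem $G$ contains a $(2t\times 2t)$-grid as a minor, together with the model (the family of branch sets) witnessing it. I would then observe, directly from Definition~\ref{def:ElemWall}, that an elementary $t$-wall is obtained from the $t\times 2t$-grid by deleting edges and two degree-$1$ vertices; hence the $t\times 2t$-grid, and a fortiori the $2t\times 2t$-grid, contains the elementary $t$-wall as a subgraph, and the elementary $t$-wall has maximum degree $3$. Consequently, whenever a graph has a $(2t\times 2t)$-grid minor it in fact contains a subdivision of the elementary $t$-wall as a subgraph: one realizes the (degree-$\le 3$) wall vertices as distinct branch sets and routes the wall edges along internally disjoint paths supplied by the grid model. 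By Definition~\ref{def:Wall}, such a subdivision is precisely a $t$-wall, so $G$ contains a $t$-wall $H$, which we output. This carving of the wall out of the grid model is purely combinatorial and runs in time linear in the size of the model.

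The step I expect to be the main obstacle is controlling the \emph{running time}: the naive pipeline (approximate treewidth, extract a large grid minor, then carve a wall out of it) is polynomial but a priori superlinear, so to reach the stated $\OO(|V(G)|+|E(G)|)$ bound one must invoke the appropriate linear-time ingredients --- a linear-time treewidth approximation that simultaneously yields a large grid minor when the width is large --- and verify that neither the grid-to-wall conversion nor the bookkeeping of subdivided edges (as permitted by Definition~\ref{def:Wall}) incurs a superlinear overhead; as in the statement, the hidden constant is allowed to depend on $t$. Once these ingredients are in place, correctness is immediate, since the algorithm has in one branch exhibited a $t$-wall subgraph and in the other exhibited a tree decomposition of width $t^{\OO(1)}$, which is exactly the dichotomy claimed.
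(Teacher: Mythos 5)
The paper gives no proof of Proposition~\ref{prop:wallOrTree} at all: it is invoked as a black box attributed to Chekuri and Chuzhoy~\cite{DBLP:journals/jacm/ChekuriC16}, so there is no in-paper argument to compare against step by step. Your reconstruction of the intended derivation is the natural one, and its combinatorial core is sound: the elementary $t$-wall is a subgraph of the $t\times 2t$-grid (Definition~\ref{def:ElemWall}), it has maximum degree $3$, and a minor of maximum degree $3$ is always a topological minor, so a sufficiently large grid minor yields a subdivision of the elementary $t$-wall as a subgraph, i.e.\ a $t$-wall in the sense of Definition~\ref{def:Wall}; combined with the polynomial excluded-grid theorem this gives the dichotomy ``find a $t$-wall or correctly report $\tw(G)=t^{\OO(1)}$''.

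The genuine gap is exactly the point you flag and then leave unresolved: the $\OO(|V(G)|+|E(G)|)$ running time. The ingredient you postulate --- a linear-time treewidth approximation that, in the high-treewidth branch, also hands you a model of a large grid minor --- is not available off the shelf. The known linear-time (single-exponential in the width) approximation algorithms only certify ``treewidth exceeds the threshold'' and produce no grid or wall witness, while the constructive versions of the excluded-grid theorem (the randomized algorithm of \cite{DBLP:journals/jacm/ChekuriC16}, and later \cite{DBLP:conf/soda/ChuzhoyT19}) run in polynomial, not linear, time; a polynomial factor in $|V(G)|$ cannot be hidden in a constant depending on $t$. So, as written, your argument establishes a polynomial-time version of the dichotomy, not the stated linear-time bound. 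Since the paper itself asserts the linear bound without proof, you should not expect to close this by elementary means; note also that a polynomial-time version already suffices for the paper's application, with Algorithm~\ref{alg:TreeRedAlg} then running in a correspondingly larger polynomial time than claimed in Lemma~\ref{lem:TrRedAlgTime}.
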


\begin{algorithm}
    \SetKwInOut{Input}{Input}
    \SetKwInOut{Output}{Output}
	\medskip
    {\textbf{function} $\mathsf{TrRedAlgIter}$}$(\langle G,k,\mathsf{init},\mathsf{fin}\rangle)$\;
    	
    	Use the algorithm from Proposition \ref{prop:wallOrTree} to try to find a $7k$-wall $H$ in $G$.\;
			\If {failed to find such $H$}{Report that $G$ has treewidth $k^{\OO(1)}$ and \Return\;}
			Use Observation \ref{obs:wallCon9} to find a $3\sqrt{k}$-wall $H'$ in $H$ having vertices from neither $\mathsf{init}$ nor $\mathsf{fin}$\;
			Choose an arbitrary edge $e\in E(H')$\;
			\Return $\langle G/e,k,\mathsf{init},\mathsf{fin}\rangle$\;

    \caption{$\mathsf{TrRedAlgIter}$}
    \label{alg:TreeRedAlgIter}
\end{algorithm}
    
Now, we aim to show that $\mathsf{TrRedAlgIter}$ is correct. That is, given an instance $\langle G,k,\mathsf{init},\mathsf{fin}\rangle$ of {\sc Snake Game}, if $\mathsf{TrRedAlgIter}$ reports that $G$  has treewidth $k^{\OO(1)}$, then we need to show that this is the case; otherwise, $\mathsf{TrRedAlgIter}$ returns an instance of {\sc Snake Game}, and we need to show that this instance is equivalent to $\langle G,k,\mathsf{init},\mathsf{fin}\rangle$. 

\begin{lemma} \label{lem:TrRedAlgIterCorre}
Let $\langle G,k,\mathsf{init},\mathsf{fin}\rangle$ be an instance of {\sc Snake Game}. If $\mathsf{TrRedAlgIter}(\langle G,k,\mathsf{init},\mathsf{fin}\rangle)$ reports that $G$ has treewidth $k^{\OO(1)}$, then $G$ has treewidth $k^{\OO(1)}$. Otherwise, $\mathsf{TrRedAlgIter}$ returns an equivalent instance of $\langle G,k,\mathsf{init},\mathsf{fin}\rangle$.       
\end{lemma}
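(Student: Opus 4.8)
The plan is to prove the lemma by a straightforward case analysis on the behaviour of the call to the algorithm of Proposition~\ref{prop:wallOrTree} that $\mathsf{TrRedAlgIter}$ makes, assembling the results already established in this section. There is no new combinatorial content to generate here; everything genuine has been done in Lemmas~\ref{lem:proof1} and~\ref{lem:proof2} (packaged as Lemma~\ref{lem:InsEq}), in Observation~\ref{obs:wallCon9}, and in Proposition~\ref{prop:wallOrTree}.

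First I would handle the ``report'' branch. By inspection of Algorithm~\ref{alg:TreeRedAlgIter}, $\mathsf{TrRedAlgIter}(\langle G,k,\mathsf{init},\mathsf{fin}\rangle)$ reports that $G$ has treewidth $k^{\OO(1)}$ exactly when the algorithm of Proposition~\ref{prop:wallOrTree}, invoked with $t = 7k$, fails to find a $7k$-wall in $G$. Proposition~\ref{prop:wallOrTree} then guarantees that $G$ has treewidth $(7k)^{\OO(1)}$, which is $k^{\OO(1)}$, so the report is correct.

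Next I would handle the branch in which a $7k$-wall $H$ in $G$ is found. In that case $\mathsf{TrRedAlgIter}$ applies Observation~\ref{obs:wallCon9} to obtain (in time polynomial in $k$) a $3\sqrt{k}$-wall $H'$ that is a subgraph of $H$, hence of $G$, and that has no vertex from $\mathsf{init}$ or $\mathsf{fin}$; it then picks an arbitrary $e = \{u,v\} \in E(H')$ and returns $\langle G/e,k,\mathsf{init},\mathsf{fin}\rangle$. I would first note that this output is a legal instance of {\sc Snake Game}: since $u,v \in V(H')$ and $V(H')$ is disjoint from $V(\mathsf{init}) \cup V(\mathsf{fin})$, contracting $e$ leaves every vertex occurring in $\mathsf{init}$ or $\mathsf{fin}$ intact, so $\mathsf{init}$ and $\mathsf{fin}$ are still $(G/e,k)$-configurations. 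Equivalence is then immediate from Lemma~\ref{lem:InsEq} applied to $G$, $H'$ and $e$: $\langle G,k,\mathsf{init},\mathsf{fin}\rangle$ is a \yes-instance if and only if $\langle G/e,k,\mathsf{init},\mathsf{fin}\rangle$ is a \yes-instance. Since the two branches are exhaustive, the lemma follows.

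The ``hard part'' here is not an obstacle in the usual sense but a bookkeeping check: one must be sure that the contracted edge $e$ lies strictly inside a sub-wall avoiding $\mathsf{init}$ and $\mathsf{fin}$, so that the hypotheses of Lemma~\ref{lem:InsEq} are actually met and the returned instance is well defined — this is precisely the role played by the call to Observation~\ref{obs:wallCon9} inside $\mathsf{TrRedAlgIter}$, which is why the $7k$-wall (rather than a $3\sqrt{k}$-wall directly) is sought. I would also remark in passing, for later use in proving termination of the overall algorithm $\mathsf{TrRedAlg}$, that $|V(G/e)| = |V(G)| - 1$, although that fact is not needed for the correctness statement of this lemma.
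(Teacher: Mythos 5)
Your proof is correct and follows essentially the same route as the paper's: a case split on whether the algorithm of Proposition~\ref{prop:wallOrTree} finds a $7k$-wall, invoking that proposition's correctness for the report branch and Observation~\ref{obs:wallCon9} together with Lemma~\ref{lem:InsEq} for the contraction branch. The extra check that $\mathsf{init}$ and $\mathsf{fin}$ remain valid configurations after contracting $e$ is a sensible addition the paper leaves implicit.
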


\begin{proof}
First, assume that $\mathsf{TrRedAlgIter}(\langle G,k,\mathsf{init},\mathsf{fin}\rangle)$ reports that $G$ has treewidth $k^{\OO(1)}$. Then, by the definition of $\mathsf{TrRedAlgIter}$, the algorithm from Proposition \ref{prop:wallOrTree} reported that $G$ has treewidth $k^{\OO(1)}$. Therefore, by the correctness of Proposition \ref{prop:wallOrTree}, it follows that $G$ indeed has treewidth $k^{\OO(1)}$.

Now, observe that if $\mathsf{TrRedAlgIter}$ does not report that $G$ has treewidth $k^{\OO(1)}$, then it returns the instance $\langle G/e,k,\mathsf{init},\mathsf{fin}\rangle$. Then, by the definition of $\mathsf{TrRedAlgIter}$ and the correctness of Proposition \ref{prop:wallOrTree} and Observation \ref{obs:wallCon9}, $H'$ is indeed a $3\sqrt{k}$-wall in $G$ having vertices from neither $\mathsf{init}$ nor $\mathsf{fin}$. Therefore, by Lemma \ref{lem:InsEq}, $\langle G,k,\mathsf{init},\mathsf{fin}\rangle$ is a \yes-instance if and only if $\langle G/e,k,\mathsf{init},\mathsf{fin}\rangle$ is a \yes-instance. Thus, $\langle G/e,k,\mathsf{init},\mathsf{fin}\rangle)$ is equivalent to $\langle G,k,\mathsf{init},\mathsf{fin}\rangle$.           
\end{proof}  

Now, we argue that $\mathsf{TrRedAlgIter}$ run in time polynomial in $|V(G)|$.

\begin{observation} \label{lem:TrRedAlgIterTime}
The algorithm $\mathsf{TrRedAlgIter}$ runs in time $\OO(|V(G)|^2)$. 
\end{observation}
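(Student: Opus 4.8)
The plan is to observe that $\mathsf{TrRedAlgIter}$ performs only a fixed, constant number of steps, and to bound the cost of each of them separately; the dominant cost will be the single invocation of the wall‑detection routine of Proposition~\ref{prop:wallOrTree}.

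First I would dispose of the degenerate case $k > |V(G)|$: then there is no $(G,k)$‑configuration at all, so the instance is trivially a \no-instance and may be resolved without ever calling $\mathsf{TrRedAlgIter}$; hence throughout we take $k \le |V(G)|$. It is also convenient to note that we may assume $|V(G)|$ exceeds any fixed polynomial in $k$ we like, since otherwise $G$ already has treewidth at most $|V(G)|-1 = k^{\OO(1)}$ and $\mathsf{TrRedAlgIter}$ may immediately report this — this removes any concern that a $\mathrm{poly}(k)$ term below could dominate $|V(G)|^2$. With these conventions in place I would bound the steps one by one: (i) the call to the algorithm of Proposition~\ref{prop:wallOrTree} with parameter $7k$ runs in time $\OO(|V(G)| + |E(G)|)$, and since trivially $|E(G)| \le \binom{|V(G)|}{2} = \OO(|V(G)|^2)$ (or, more sharply, $|E(G)| = k^{\OO(1)}|V(G)|$ by Corollary~\ref{cor:edgeSetSize}), this is $\OO(|V(G)|^2)$; (ii) if this fails, reporting large treewidth and returning costs $\OO(1)$; (iii) extracting a $3\sqrt{k}$‑wall $H'$ inside $H$ that avoids $\mathsf{init}$ and $\mathsf{fin}$ via Observation~\ref{obs:wallCon9} costs $\mathrm{poly}(k)$, which under the standing assumptions is $\OO(|V(G)|^2)$; (iv) choosing an arbitrary $e \in E(H')$ costs $\OO(1)$; (v) constructing $G/e$ costs $\OO(|V(G)| + |E(G)|) = \OO(|V(G)|^2)$.

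Summing the five contributions gives a total running time of $\OO(|V(G)|^2)$, which is the claim. I expect the only point requiring any care is step (iii): one should pin down the polynomial in $k$ actually produced by Observation~\ref{obs:wallCon9} — it enumerates $2k+1$ pairwise vertex‑disjoint $3\sqrt{k}$‑walls, each on $\OO(k)$ vertices, and tests each against the $\le 2k$ vertices of $V(\mathsf{init})\cup V(\mathsf{fin})$ — and then argue that this term does not exceed $\OO(|V(G)|^2)$, either by appealing to the small‑graph shortcut described above (which forces $\mathrm{poly}(k)=\OO(|V(G)|^2)$) or by absorbing it directly using $k\le |V(G)|$. Every other cost is immediate, so no further calculation is needed.
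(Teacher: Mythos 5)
Your proof is correct and follows essentially the same route as the paper's: a step-by-step accounting in which the call to the algorithm of Proposition~\ref{prop:wallOrTree} dominates at $\OO(|V(G)|+|E(G)|)=\OO(|V(G)|^2)$, with the remaining steps absorbed. You are in fact slightly more careful than the paper on the one delicate point—the paper simply says the wall extraction of Observation~\ref{obs:wallCon9} takes ``time polynomial in $k\le|V(G)|$,'' which only yields the claimed bound once one pins the polynomial down to degree at most two (or invokes a small-graph shortcut), exactly as you do.
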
 

\begin{proof}
The algorithm from Proposition \ref{prop:wallOrTree} runs in time $\OO(|V(G)| + |E(G)|)$. Given a $7k$-wall, by Observation \ref{obs:wallCon9}, a $3\sqrt{k}$-wall $H'$ having vertices from neither $\mathsf{init}$ nor $\mathsf{fin}$ can be found in time polynomial in $k\leq|V(G)|$. The other steps of $\mathsf{TrRedAlgIter}$ can be implemented in $\OO (1)$ time. As $|E(G)| = \OO(|V(G)|^2)$, the observation follows. 
\end{proof}

Now, we present our reduction algorithm $\mathsf{TrRedAlg}$ (see Algorithm \ref{alg:TreeRedAlg}).
We denote the outputs of the algorithms $\mathsf{TrRedAlgIter}$ and $\mathsf{TrRedAlg}$ for an instance $I$ by $\mathsf{TrRedAlgIter}(I)$ and $\mathsf{TrRedAlg}(I)$ respectively.  

\begin{algorithm}[t]
    \SetKwInOut{Input}{Input}
    \SetKwInOut{Output}{Output}
	\medskip
    {\textbf{function} $\mathsf{TrRedAlgIter}$}$(\langle G,k,\mathsf{init},\mathsf{fin}\rangle)$\;
    	
    	 $G^I\gets G$\;
			\While{$\mathsf{TrRedAlgIter}(I=\langle G^I,k,\mathsf{init},\mathsf{fin}\rangle)$ does not report $G^I$ has treewidth $k^{\OO(1)}$}{$I\gets \mathsf{TrRedAlgIter}(I)$\;}
			\Return $I$ \;

    \caption{$\mathsf{TrRedAlg}$}
    \label{alg:TreeRedAlg}
\end{algorithm}

Now, we show that $\mathsf{TrRedAlg}$ runs in time polynomial in $|V(G)|$. 

\begin{lemma} \label{lem:TrRedAlgTime}
The algorithm $\mathsf{TrRedAlg}$ runs in time $\OO(|V(G)|^3)$. 
\end{lemma}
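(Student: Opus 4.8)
The plan is to bound separately the number of iterations of the \textbf{while} loop in $\mathsf{TrRedAlg}$ and the cost of a single iteration, and then multiply the two bounds.

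First I would argue that the loop runs at most $|V(G)|$ times. Each execution of the loop body that does not terminate the loop replaces the current instance $I=\langle G^I,k,\mathsf{init},\mathsf{fin}\rangle$ by $\mathsf{TrRedAlgIter}(I)=\langle G^I/e,k,\mathsf{init},\mathsf{fin}\rangle$ for some edge $e\in E(G^I)$. By the definition of the edge-contraction operation in Section~\ref{sec:prelims}, $|V(G^I/e)|=|V(G^I)|-1$, so the number of vertices of the graph in the current instance strictly decreases at every iteration. Since it starts at $|V(G)|$ and cannot drop below a constant, the loop performs fewer than $|V(G)|$ iterations; moreover, by Proposition~\ref{prop:wallOrTree} and Observation~\ref{obs:wallCon9}, $\mathsf{TrRedAlgIter}$ never gets ``stuck'' --- on each call it either reports that the current graph has treewidth $k^{\OO(1)}$ (ending the loop) or succeeds in finding a $7k$-wall, hence a suitable $3\sqrt{k}$-wall, and returns a contracted instance.

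Second I would bound the cost of one iteration. A single iteration consists of a constant number of calls to $\mathsf{TrRedAlgIter}$ (one to evaluate the loop condition and one in the loop body, which a reasonable implementation shares) plus $\OO(1)$ bookkeeping. The graph appearing in the current instance always has at most $|V(G)|$ vertices, so by Observation~\ref{lem:TrRedAlgIterTime} each such call runs in time $\OO(|V(G)|^2)$. Hence one iteration costs $\OO(|V(G)|^2)$. Multiplying, the total running time is $|V(G)|\cdot\OO(|V(G)|^2)=\OO(|V(G)|^3)$, as claimed.

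I do not expect any real obstacle here; the only point that needs a word of care is the termination and iteration-count argument, which hinges on the facts that every non-terminating iteration decreases the vertex count by exactly one and that $\mathsf{TrRedAlgIter}$ always makes progress --- both immediate from the definition of contraction and from Proposition~\ref{prop:wallOrTree} together with Observation~\ref{obs:wallCon9}. (Correctness of the output, namely that it is equivalent to the input and has treewidth $k^{\OO(1)}$, follows by iterating Lemma~\ref{lem:TrRedAlgIterCorre}, but it is not needed for the running-time bound.)
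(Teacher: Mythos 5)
Your proposal is correct and follows essentially the same argument as the paper: each non-terminating iteration contracts an edge and hence decreases the vertex count by exactly one, giving at most $|V(G)|$ iterations, each costing $\OO(|V(G)|^2)$ by the running-time bound on $\mathsf{TrRedAlgIter}$, for a total of $\OO(|V(G)|^3)$.
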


\begin{proof}
Consider an input instance $\langle G,k,\mathsf{init},\mathsf{fin}\rangle$. At each iteration $t\geq 2$, $\mathsf{TrRedAlg}$ calls $\mathsf{TrRedAlgIter}$ with an instance $\langle G^t,k,\mathsf{init},\mathsf{fin}\rangle$ such that $|V(G^t)|= |V(G^{t-1})|-1$. By Lemma \ref{lem:TrRedAlgIterTime}, $\mathsf{TrRedAlgIter}$ works in time polynomial in $|V(G^t)|$. Therefore, as $G^1=G$, there are at most $|V(G)|$ iterations. Thus, we conclude that $\mathsf{TrRedAlg}$ runs in time $\OO(|V(G)|^3)$.    
\end{proof}

Now, we show that for a given instance of {\sc Snake Game} $\langle G,k,\mathsf{init},\mathsf{fin}\rangle$, $\mathsf{TrRedAlg}$ $(\langle G,k,\mathsf{init},\mathsf{fin}\rangle) = \langle G',k,\mathsf{init},\mathsf{fin}\rangle$ is an equivalent to $\langle G,k,\mathsf{init},\mathsf{fin}\rangle$, and also $G'$ has treewidth $k^{\OO(1)}$.   

\begin{lemma} \label{lem:TrRedAlgCorre}
Let $\langle G,k,\mathsf{init},\mathsf{fin}\rangle$ be an instance of {\sc Snake Game}. Then $\mathsf{TrRedAlg}(\langle G,k,\mathsf{init},\mathsf{fin}\rangle)=\langle G',k,\mathsf{init},\mathsf{fin}\rangle$ is an equivalent instance of $\langle G,k,\mathsf{init},\mathsf{fin}\rangle$, and $G'$ has treewidth $k^{\OO(1)}$.
\end{lemma}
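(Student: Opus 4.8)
The plan is to proceed by a straightforward induction on the iterations of $\mathsf{TrRedAlg}$, combining the per-iteration correctness statement (Lemma~\ref{lem:TrRedAlgIterCorre}) with a termination argument of the kind already used in the proof of Lemma~\ref{lem:TrRedAlgTime}. Concretely, I would set up notation: let $I_1=\langle G,k,\mathsf{init},\mathsf{fin}\rangle$ be the input, and let $I_1,I_2,\ldots,I_m$ be the sequence of instances with $I_{j+1}=\mathsf{TrRedAlgIter}(I_j)$ for every $1\le j\le m-1$, where $I_m$ is the instance on which the while-loop condition fails, i.e.\ $\mathsf{TrRedAlgIter}(I_m)$ reports that its graph has treewidth $k^{\OO(1)}$. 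The first thing to check is that this sequence is finite: whenever $\mathsf{TrRedAlgIter}$ does not report small treewidth it returns $\langle G/e,k,\mathsf{init},\mathsf{fin}\rangle$ for an edge $e$ of a wall, so the number of vertices strictly decreases by one at each step; hence the loop halts after at most $|V(G)|$ iterations (this is exactly the bound invoked in Lemma~\ref{lem:TrRedAlgTime}). I would also note in passing that $k$, $\mathsf{init}$ and $\mathsf{fin}$ are the same in every $I_j$, and each $I_j$ is a genuine instance of {\sc Snake Game}: the contracted edge always lies inside a $3\sqrt{k}$-wall disjoint from $V(\mathsf{init})\cup V(\mathsf{fin})$ (Observation~\ref{obs:wallCon9}), so $\mathsf{init}$ and $\mathsf{fin}$ remain valid configurations after contraction.

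Next I would establish the equivalence. For every $1\le j\le m-1$, since $\mathsf{TrRedAlgIter}(I_j)$ returns the instance $I_{j+1}$ rather than a treewidth report, the second part of Lemma~\ref{lem:TrRedAlgIterCorre} says that $I_{j+1}$ is an equivalent instance of $I_j$; that is, $I_{j+1}$ is a \yes-instance iff $I_j$ is. Since this equivalence relation is transitive, a trivial induction on $j$ gives that $I_m$ is equivalent to $I_1$. Writing $I_m=\langle G',k,\mathsf{init},\mathsf{fin}\rangle$, we have $\mathsf{TrRedAlg}(\langle G,k,\mathsf{init},\mathsf{fin}\rangle)=I_m$ directly from the definition of $\mathsf{TrRedAlg}$, so the returned instance is equivalent to the input.

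Finally I would verify the treewidth bound: by the termination condition, $\mathsf{TrRedAlgIter}(I_m)$ reports that $G'$ has treewidth $k^{\OO(1)}$, and by the first part of Lemma~\ref{lem:TrRedAlgIterCorre} such a report is correct; hence $G'$ indeed has treewidth $k^{\OO(1)}$, which together with the previous paragraph proves the lemma. I do not expect any genuine obstacle here — this statement is purely a bookkeeping wrap-up of Lemmas~\ref{lem:TrRedAlgIterCorre} and~\ref{lem:TrRedAlgTime} and Observation~\ref{obs:wallCon9}. The only two points deserving an explicit line are the finiteness of the iteration (via the strictly decreasing vertex count) and the observation that the parameter $k$ and the configurations $\mathsf{init},\mathsf{fin}$ are preserved by all contractions performed, so that $I_m$ really is an instance of {\sc Snake Game} with the same $k$, $\mathsf{init}$, and $\mathsf{fin}$.
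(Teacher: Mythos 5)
Your proposal is correct and follows essentially the same route as the paper: chain the per-iteration equivalence from Lemma~\ref{lem:TrRedAlgIterCorre} by transitivity, use the strictly decreasing vertex count for termination, and read off the treewidth bound from the terminating report. The extra details you supply (the explicit induction, and the remark that the contracted edges avoid $V(\mathsf{init})\cup V(\mathsf{fin})$ so the configurations remain valid) are sound and, if anything, slightly more careful than the paper's own write-up.
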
  

\begin{proof}
By Lemma \ref{lem:TrRedAlgIterCorre}, at each iteration of $\mathsf{TrRedAlg}$, $\mathsf{TrRedAlgIter}$ returns an instance of {\sc Snake Game}, that is equivalent to the previous one. Therefore, as the first instance is $\langle G,k,\mathsf{init},\mathsf{fin}\rangle$, the output of $\mathsf{TrRedAlg}$ denoted by $\langle G',k,\mathsf{init},\mathsf{fin}\rangle$, is equivalent to $\langle G,k,\mathsf{init},\mathsf{fin}\rangle$. By the definition of $\mathsf{TrRedAlg}$ and Lemma \ref{lem:TrRedAlgIterCorre}, it directly follows that $G'$ has treewidth $k^{\OO(1)}$.               
\end{proof}

Finally, we combine the last two lemmas to conclude the correctness of our algorithm of the reduction.

\begin{theorem}\label{the:treewidthRed}
Given any instance $\langle G,k,\mathsf{init},\mathsf{fin}\rangle$ of {\sc Snake Game}, $\mathsf{TrRedAlg}$ returns an equivalent instance $\langle G',k,\mathsf{init},\mathsf{fin}\rangle$ such that $G'$ has treewidth $k^{\OO(1)}$. Moreover, $\mathsf{TrRedAlg}$ runs in time $\OO(|V(G)|^3)$.   
\end{theorem}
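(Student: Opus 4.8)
The plan is to obtain the theorem as an immediate consequence of the two preceding lemmas, Lemma~\ref{lem:TrRedAlgCorre} and Lemma~\ref{lem:TrRedAlgTime}, which between them already package everything the statement asks for. First I would invoke Lemma~\ref{lem:TrRedAlgCorre}: applied to the input instance $\langle G,k,\mathsf{init},\mathsf{fin}\rangle$, it states precisely that $\mathsf{TrRedAlg}(\langle G,k,\mathsf{init},\mathsf{fin}\rangle)=\langle G',k,\mathsf{init},\mathsf{fin}\rangle$ is an instance equivalent to $\langle G,k,\mathsf{init},\mathsf{fin}\rangle$ and that $G'$ has treewidth $k^{\OO(1)}$, which gives the two qualitative parts of the claim. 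Then I would invoke Lemma~\ref{lem:TrRedAlgTime}, which states that $\mathsf{TrRedAlg}$ runs in time $\OO(|V(G)|^3)$, giving the running-time part. No further argument is needed, so the proof is essentially a one-line combination of the two lemmas.

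It is worth recording why nothing else is required. Lemma~\ref{lem:TrRedAlgCorre} rests on the per-iteration correctness statement Lemma~\ref{lem:TrRedAlgIterCorre}: each call to $\mathsf{TrRedAlgIter}$ either correctly reports that the current graph has treewidth $k^{\OO(1)}$ (via Proposition~\ref{prop:wallOrTree}) or returns $\langle G/e,k,\mathsf{init},\mathsf{fin}\rangle$ for an edge $e$ lying inside a $3\sqrt{k}$-wall that avoids $\mathsf{init}$ and $\mathsf{fin}$ (obtained via Observation~\ref{obs:wallCon9}), and by Lemma~\ref{lem:InsEq} contracting such an edge preserves the answer. Chaining these equivalences along the at most $|V(G)|$ iterations yields the equivalence of $\langle G',k,\mathsf{init},\mathsf{fin}\rangle$ with the input, while termination of the loop forces $G'$ to have treewidth $k^{\OO(1)}$. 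Similarly, Lemma~\ref{lem:TrRedAlgTime} rests on Observation~\ref{lem:TrRedAlgIterTime} (each iteration runs in time $\OO(|V(G)|^2)$) together with the fact that each iteration strictly decreases the number of vertices, bounding the number of iterations by $|V(G)|$.

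Finally, I note that the conceptual heart of this section — the part that actually requires work — is not in this theorem but in the re-routing argument underlying Lemma~\ref{lem:InsEq}, namely Lemmas~\ref{lem:proof1} and~\ref{lem:proof2}, which show that contracting an edge of a sufficiently large wall neither creates nor destroys a winning play of the snake. Those are already established in the excerpt, so the present theorem carries no remaining obstacle and is purely a matter of assembling the pieces; I would therefore keep the proof short and simply cite Lemmas~\ref{lem:TrRedAlgCorre} and~\ref{lem:TrRedAlgTime}.
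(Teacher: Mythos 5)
Your proof matches the paper exactly: the theorem is stated as the direct combination of Lemma~\ref{lem:TrRedAlgCorre} (equivalence and the treewidth bound on $G'$) with Lemma~\ref{lem:TrRedAlgTime} (the $\OO(|V(G)|^3)$ running time), and nothing further is needed.
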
 

Observe that, a graph $G$ on $n$ vertices and with treewidth $\tw$ has $\OO(\tw^2 n)$ edges as the tree decomposition of $G$ can have at most $n$ bags and every bag contributes at most $\tw^2$ edges. Thus, the following corollary follows directly from the Theorem~\ref{the:treewidthRed}.

\begin{corollary}\label{cor:edgeSetSize}
Given any instance $\langle G,k,\mathsf{init},\mathsf{fin}\rangle$ of {\sc Snake Game}, $\mathsf{TrRedAlg}$ returns an equivalent instance $\langle G',k,\mathsf{init},\mathsf{fin}\rangle$ such that $|E(G')| = k^{\OO(1)}|V(G)'|$.
\end{corollary}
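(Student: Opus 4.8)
The plan is to read this off directly from Theorem~\ref{the:treewidthRed} together with the standard fact, recorded in the observation immediately preceding the corollary, that a graph on $n$ vertices of treewidth $\tw$ has $\OO(\tw^2 n)$ edges. First I would invoke Theorem~\ref{the:treewidthRed}: given $\langle G,k,\mathsf{init},\mathsf{fin}\rangle$, the algorithm $\mathsf{TrRedAlg}$ returns an equivalent instance $\langle G',k,\mathsf{init},\mathsf{fin}\rangle$ with $\tw(G') = k^{\OO(1)}$ (in time $\OO(|V(G)|^3)$). Thus the ``equivalent instance'' part of the statement is already supplied by the theorem, and the only thing left to establish is the edge-count bound on $G'$.

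Next I would apply the edge-count estimate to $G'$ itself. Write $n' = |V(G')|$ and $w = \tw(G')$, and fix a tree decomposition of $G'$ of width $w$; one may assume without loss of generality that it has at most $n'$ bags, since any bag that is a subset of a neighboring bag can be suppressed (merged into that neighbor) without increasing the width, and this process terminates with $\OO(n')$ bags. Every edge of $G'$ has both endpoints in a common bag, a bag has at most $w+1$ vertices and hence at most $\binom{w+1}{2} = \OO(w^2)$ pairs, and summing over the $\OO(n')$ bags yields $|E(G')| = \OO(w^2 n')$. Substituting $w = \tw(G') = k^{\OO(1)}$ gives $|E(G')| = \OO\big((k^{\OO(1)})^2 \cdot n'\big) = k^{\OO(1)} \cdot |V(G')|$, which is exactly the claimed bound (and, if one wishes, one can additionally observe $|V(G')| \le |V(G)|$ because $\mathsf{TrRedAlg}$ performs only edge contractions, each of which strictly decreases the number of vertices).

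I do not expect any real obstacle in this argument: it is a one-line consequence of Theorem~\ref{the:treewidthRed} plus a textbook bound. The only mild point worth stating carefully is the normalization that the tree decomposition can be taken to have $\OO(n')$ bags, which is standard (see, e.g., \cite{DBLP:books/sp/CyganFKLMPPS15}); with that in hand the corollary is immediate, and it is precisely this bound that justifies the assumption ``$m = k^{\OO(1)} n$'' used in the running-time analyses of Lemmas~\ref{lem:algoTime} and~\ref{lem:gAlgoTime}.
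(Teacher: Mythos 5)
Your proposal is correct and follows exactly the paper's route: the paper likewise derives this corollary directly from Theorem~\ref{the:treewidthRed} together with the observation that a graph on $n$ vertices of treewidth $\tw$ has $\OO(\tw^2 n)$ edges (at most $n$ bags, each contributing at most $\tw^2$ edges). Your added care about normalizing the tree decomposition to $\OO(n')$ bags is a reasonable explicit justification of a step the paper treats as standard.
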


\section{Conclusion and Directions for Future Research}\label{sec:conclusion}

In this paper, we presented a coherent picture of the parameterized complexity of the {\sc Snake Game} problem. As our main contribution, we proved that {\sc Snake Game} is \FPT. This result was complemented by the proofs that {\sc Snake Game} is unlikely to admit a polynomial kernel even on grid graphs, but admits a treewidth-reduction procedure. For our main contribution, we presented a novel application of the method of color-coding by utilizing it to sparsify the configuration graph of the problem. To some extent, our paper can be considered as pioneering work in the study of the parameterized complexity of motion planning problems where intermediate configurations are of importance (which has so far been largely neglected), and may lay the foundations for further research of this topic. In this regard, the choice of {\sc Snake Game} was a natural starting point given that it involves only one mobile object as well as captures only the most basic property of a physical object---it cannot intersect itself and other obstacles.

We conclude the paper with a few directions for further research. As a first step to broaden the scope of this kind of research, we suggest to consider the parameterized complexity of motion planning problems that involve several mobile agents. Possibly, not all mobile agents can be controlled---that is, the patterns of motion of some of them may be predefined. The controllable mobile agents themselves can have either a joint objective or independent tasks. Arguably, the most basic requirement of the mobile agents is to avoid collision with one another while they perform these tasks. We remark that a natural parameter in this regard is the number of mobile agents. Second, apart (or in addition to) the study of multiple agents, we also find it interesting to consider other types of motion as well as other shapes of robots. Moreover, one can study objectives that are more complicated than merely reaching one position from another, particularly objectives where the mobile agent can effect the environment rather than only move within it. Lastly, we also suggest to consider the study of the parameterized complexity of motion planning problems in geometric settings in 3D---while graphs provide a useful abstraction, some problems may require a more accurate representation of the robots and physical environment.
\bibliographystyle{siam}
\bibliography{Refs}

\end{document}